\newcommand{\Aset}{\mathcal{A}}
\newcommand{\sgn}{\mathrm{s}}
\newcommand{\eexp}{\mathrm{exp}}
\newcommand{\betahat}{\widehat{\boldsymbol\beta}}
\newcommand{\trace}{\mathrm{trace}}
\newcommand{\vecof}{\mathrm{vec}}
\DeclareMathOperator{\diag}{diag}
\newcommand{\T}{\top}
\def\bbeta{\boldsymbol{\beta}}
\def\bepsilon{\boldsymbol{\epsilon}}
\def\bvarepsilon{\boldsymbol{\varepsilon}}
\def\bPi{\boldsymbol{\Pi}}
\def\bPhi{\boldsymbol{\Phi}}
\def\b0{\boldsymbol{0}}
\def\ba{\boldsymbol{a}}
\def\bb{\boldsymbol{b}}
\def\be{\boldsymbol{e}}
\def\bp{\boldsymbol{p}}
\def\bu{\boldsymbol{u}}
\def\bv{\boldsymbol{v}}
\def\bw{\boldsymbol{w}}
\def\bx{\boldsymbol{x}}
\def\by{\boldsymbol{y}}
\def\bA{\mathbf{A}}
\def\bB{\mathbf{B}}
\def\bC{\mathbf{C}}
\def\bH{\mathbf{H}}
\def\bI{\mathbf{I}}
\def\bM{\mathbf{M}}
\def\bP{\mathbf{P}}
\def\bR{\mathbf{R}}
\def\bS{\mathbf{S}}
\def\bW{\mathbf{W}}
\def\bX{\mathbf{X}}
\def\ba{\mathbf{a}}
\def\bb{\mathbf{b}}
\def\be{\mathbf{e}}
\def\bp{\mathbf{p}}
\def\bu{\mathbf{u}}
\def\bv{\mathbf{v}}
\def\bw{\mathbf{w}}
\def\bx{\mathbf{x}}
\def\by{\mathbf{y}}
\newtheorem{theorem}{Theorem}
\newtheorem{corollary}{Corollary}
\newtheorem{lemma}{Lemma}
\begin{document}
\title{\LARGE \textbf{Degrees of Freedom in Penalized Regression:\\ Model Selection with Adaptive Penalties}} %%%%%%%%%%%%

\author{
  Mauro Bernardi\\
  \small{Department of Statistical Sciences, University of Padova, Padova, Italy}\\
  \small{\texttt{mauro.bernardi@unipd.it}}
  \and
  Antonio Canale\\
  \small{Department of Statistical Sciences, University of Padova, Padova, Italy}\\
  \small{\texttt{antonio.canale@unipd.it}}
  \and
  Marco Stefanucci\\
  \small{Department of Economics and Finance, University of Rome Tor Vergata, Rome, Italy}\\
  \small{\texttt{marco.stefanucci@uniroma2.it}}
}

\date{}

\maketitle

\begin{abstract}
Model selection in penalized regression critically depends on an accurate assessment of model complexity, commonly quantified through the effective degrees of freedom. While the Lasso admits a simple and unbiased characterization, given by the size of the active set, this property does not extend to adaptive penalization methods, despite the widespread use of this approximation in practice. To solve this issue, in this paper we derive a novel unbiased estimator of the effective degrees of freedom for the Adaptive Lasso within Stein’s unbiased risk estimation framework. Our analysis reveals additional terms induced by data-dependent penalization, reflecting the role of adaptive weights and regularization in determining model complexity. We further revisit the Group Lasso, providing an alternative derivation of its degrees of freedom, and extend these results to the Adaptive Group Lasso.
Importantly, we characterize the behavior of the degrees of freedom along the regularization path beyond the orthonormal design setting commonly assumed in the literature, providing a new theoretical description of this behavior under general design matrices.
%
%The proposed expressions hold for both orthogonal and general design settings 
%These results enable the use of information criteria with correctly specified complexity. 
By correcting the common misuse of active set size as a proxy for degrees of freedom, our results enable more reliable risk estimation and inference, offering a rigorous foundation for understanding model complexity in adaptive penalized regression. 
\end{abstract} %%%%%%%%%

\bigskip
\noindent \textbf{Keywords:} Degrees of Freedom; Adaptive Lasso; Stein's Lemma.

\section{Introduction}
\label{sec:mod}
Model selection is a central component of modern statistical modeling, particularly in regression settings where regularization is used to balance predictive accuracy and interpretability. In penalized regression, estimators are obtained by minimizing an objective function combining a loss term with a penalty on the regression coefficients, and the choice of the regularization parameter directly determines the complexity of the fitted model. A key ingredient in this process is the quantification of model complexity, typically expressed through the notion of degrees of freedom, which governs the trade-off between fit and flexibility. Among penalized estimators, the Lasso \citep{tibshirani1996} plays a prominent role due to its ability to perform variable selection via $\ell_1$ regularization. A key feature of the Lasso is that its degrees of freedom admit a simple unbiased estimate given by the size of the active set \citep{zou2007}. This result has made model selection particularly convenient in Lasso problems. However, this simplicity has also encouraged a widespread but unjustified practice: using the size of the active set as a proxy for the degrees of freedom in more general penalized regression methods. This shortcut becomes problematic in the presence of adaptive penalties. Methods such as the Adaptive Lasso \citep{zou.2006} and the Adaptive Group Lasso \citep{wangleng2008csda} introduce data-driven weights to reduce shrinkage bias and achieve improved estimation properties, including oracle behavior under suitable conditions. While these methods offer clear advantages over their non-adaptive counterparts, their data-dependent penalization fundamentally alters the relationship between model fit and complexity. As a result, the degrees of freedom can no longer be characterized by the active set alone, and naive extensions of the Lasso formula may lead to distorted model assessment and unreliable model selection in practice.\par
Formally, the concept of degrees of freedom underpins the definition of information criteria such as the Akaike Information Criterion (AIC) and Bayesian Information Criterion (BIC) combining a model fit measure with a penalty proportional to the degrees of freedom \citep{safken_etal.2025bmtrk}. %Despite their importance, adaptive methods lack a closed-form expression for their effective degrees of freedom. While such expressions are available for non-adaptive procedures like the Lasso \citep{zou2007}, adaptive variants resist analogous characterization due to their data-dependent weights, which introduce nonlinear dependencies between coefficients and penalties. %This gap has further encouraged the misguided practice of substituting the size of the active set for the true degrees of freedom, leading to overly optimistic model selection, overfitting, and biased parameter estimates.
Generally speaking, the degrees of freedom can be interpreted as the number of independent pieces of information used to estimate the unknown parameters. 
In a setting where $\bX$ is a matrix of covariates with $p$ columns, $\by$ is a response vector of size $n$, and $\bbeta$ is a $p$-dimensional vector of unknown coefficients with $\by = \bX \bbeta + \bepsilon$ and $\bepsilon \sim \text{N}(\boldsymbol{0}, \sigma^2 \bI_n)$, the degrees of freedom of the least-squares estimator equal $p$, implying that each estimated coefficient consumes one degree of freedom. A general notion of degrees of freedom is given by \cite{efron_dof}:
\begin{equation}
\label{eq:df_general_expression}
df(\widehat{\by}) = \frac{1}{\sigma^2}\sum_{i=1}^n \mathbb{C}\text{ov}(y_i, \widehat{y}_i),
\end{equation}
where $\widehat{\by} = (\widehat{y}_1, \ldots, \widehat{y}_n)$ denotes the vector of fitted values and $\sigma^2$ is the (unknown) variance of the response. Both $\by$ and $\widehat{\by}$ are random quantities, and this expression shows that stronger adaptation of the fitted values to the data corresponds to larger degrees of freedom. An equivalent formulation in terms of divergence leads to Stein’s lemma \citep{stein1981sure}:
\begin{equation}
\label{eq:df_stein}
df(\widehat{\by}) = \mathbb{E}\bigg[\text{trace} \bigg(\frac{\partial \widehat{\by}}{\partial \by}\bigg)\bigg].
\end{equation}
Under this framework, \cite{zou2007} showed that an unbiased estimate of the degrees of freedom of the Lasso is given by the number of nonzero coefficients, i.e., the size of the active set. Although the Lasso uses the response variable to perform variable selection, the resulting increase in flexibility is exactly offset by the shrinkage imposed on the coefficients. This balance does not hold for subset selection procedures, where model choice involves a discrete search and typically yields larger degrees of freedom \citep[see][]{tibshirani2015ss}. Similar results have been obtained for the generalized Lasso \citep{tibshirani_taylor_2011, Chen_etal.2020jasa} and in high-dimensional settings with $p>n$ \citep{dossal2013dof_lasso_general, tibshirani_taylor.2012}. In contrast, this exact trade-off does not extend to the Group Lasso, as shown by \cite{vaiter_etal.2017}, who derive degrees-of-freedom expressions generally smaller than the size of the active set \citep[see also][]{vaiter2012dof_grouplasso}.\par
In this paper, we address this gap by deriving the first unbiased estimator of the effective degrees of freedom for the Adaptive Lasso within Stein’s unbiased risk estimation framework \citep{stein1981sure}. Our estimator explicitly accounts for the dependence of the adaptive penalty on data-driven weights, regularization parameters, and coefficient signs. The derivation extends the analysis of \cite{zou2007} to adaptive settings, where additional terms arise due to the data-dependent penalization. These additional terms, absent in standard Lasso, are the key feature to characterize the degrees of freedom in adaptive procedures. % In addition, we revisit the Group Lasso and provide an alternative derivation of its degrees of freedom based on the implicit function theorem, offering new insight that complements existing results \citep{yuan:2006:glasso, vaiter_etal.2017}.
As a further relevant contribution, we revisit the Group Lasso characterizing the behavior of the degrees of freedom along the regularization path beyond the orthonormal design setting commonly assumed in the literature. In particular, we show that structural properties established under orthogonality, such as monotonicity, do not generally extend to non-orthogonal designs and provide a novel characterization of this behavior under general design matrices, offering new insight that complements existing results \citep{yuan:2006:glasso, vaiter_etal.2017}.

Leveraging these results, we obtain analogous expressions for the Adaptive Group Lasso. By providing valid degrees-of-freedom estimators for adaptive penalized regression, our work enables principled information-criterion-based model selection and improves risk estimation and inference in settings where current practice often relies on unjustified simplifications.

\subsection{Notation and organization of the paper}
Throughout the paper, we assume a normal linear regression model $\by = \bX \bbeta + \bepsilon$ with $\bepsilon \sim \text{N}(\boldsymbol{0}, \sigma^2 \bI_n)$, $\bX \in \mathbb{R}^{n \times p}$, $\by \in \mathbb{R}^n$, and $\bbeta \in \mathbb{R}^p$, with $n>p$ and $\bX$ full column rank. We refer to the orthonormal design whenever $\bX^\top \bX = \bI_p$ i.e., variables are uncorrelated and standardized.
The least squares estimator of $\bbeta$ is given by $\widehat{\bbeta}^{\mathsf{LS}}=(\bX^\top\bX)^{-1}\bX^\top\by$.
Let $\gamma \in \mathbb{R}_+$ denote the regularization parameter common to the penalized methods under study. For a fixed $\gamma$,  we denote by $\widehat{\bbeta}_\gamma$, $\widehat{\by}_\gamma$, and $\widehat{df}_\gamma$ the vector of estimated coefficients, the predictions, and the estimated degrees of freedom, respectively. 
When unambiguous, we may suppress the dependence on $\gamma$ in the notation (e.g., $\widehat{\bbeta}$ for $\widehat{\bbeta}_\gamma)$ to simplify exposition. The \emph{transition points}, i.e. the values of $\gamma$ for which the active set changes are denoted by $\gamma_l$ with $l=1, \dots, L$ where $L$ is the number of such points.
Information criteria are defined as 
\begin{equation}
\label{eq:BIC}
     AIC(\widehat{\by}_\gamma) = \frac{\lVert \by - \widehat{\by}_\gamma \rVert_2^2}{n\sigma^2} + \frac{2}{n} \widehat{df}_\gamma, \quad \quad 
    BIC(\widehat{\by}_\gamma) = \frac{\lVert \by - \widehat{\by}_\gamma \rVert_2^2}{n\sigma^2} + \frac{\text{log}(n)}{n} \widehat{df}_\gamma.   
\end{equation}
The operator $|\cdot|$ denotes either the absolute value of a scalar, i.e., $|a|$ for $a \in \mathbb{R}$, or the cardinality of a set, i.e., $|\mathcal{A}|$ for a set $\mathcal{A}$.
For a square matrix $\bM$, $\diag(\bM)$ extracts the vector of its diagonal elements. For a vector $\bv = (v_1, \dots, v_p)^\top$, $\diag(\bv)$ constructs a diagonal matrix with entries $v_1, \dots, v_p$. The operator $\trace(\cdot)$ denotes the trace of a square matrix. The function $\sgn(\cdot)$ denotes the sign of a scalar or a vector.\par
The remainder of the paper is structured as follows. Section~\ref{sec:theory} presents our core theoretical contributions: closed-form expressions for the effective degrees of freedom in Adaptive Lasso, Group Lasso, and Adaptive Group Lasso, accompanied by interpretative remarks and complementary characterizations. Section~\ref{sec:exper} provides empirical validation of these results and illustration of their use for variable selection through synthetic and real-data experiments, and section~\ref{sec:conclusions} summarizes the methodological and theoretical implications of our work, along with potential extensions. The Proofs of all the results are collected in the Appendix, while proofs of  auxiliary lemmas, corollaries and additional technical results are collected in the Supplementary Materials.

\section{Main results}
\label{sec:theory}
\subsection{The degrees of freedom of Adaptive Lasso}
\label{sec:adalasso}
Let $\bX$, $\by$, $\bbeta$ and $\gamma$ as in the previous section, and $\bw \in \mathbb{R}_+^p$ a vector of non-negative weights. Adaptive Lasso \citep{zou.2006} estimator is defined as 
\begin{equation}
\label{eq:convex_regularized_problem_adalasso}
\widehat{\bbeta} = \arg\min_{\bbeta} \frac{1}{2} \lVert {\by}  - {\bX}\boldsymbol{\beta} \rVert_2^2 + \gamma \sum_{j=1}^{p} w_{j} |\beta_j|. 
\end{equation}
The weights are chosen in a data-dependent way, often being a continuous positive function of the least squares estimates, $w_j=f(\widehat{\beta}^{\mathsf{LS}}_j)$. 
Typical choices are $f(\widehat{\beta}^{\mathsf{LS}}_j) = 1/|\widehat{\beta}^{\mathsf{LS}}_j|^\alpha$ or $f(\widehat{\beta}^{\mathsf{LS}}_j) = \text{exp}(-\alpha|\widehat{\beta}^{\mathsf{LS}}_j|)$, with $\alpha>0$. Let $\mathcal{G}=\{1,\dots,p\}$. The active set, i.e. the index set of nonzero estimates, sorted in increasing order, is denoted by
\begin{equation*}
\label{eq:dof_adalasso_Aset_def}
\Aset=\left\{j\in\mathcal{G}:|\widehat{\beta}_j| \neq 0\right\}.
\end{equation*}
The set $\Aset$ is clearly dependent on the data and the values $\gamma$ but we omit this details for ease of notation. Calculation of the degrees of freedom involves the computation of the trace of $\partial \widehat{\by}_\gamma / \partial \by$. While for Lasso this result simplifies to the size of the active set \citep{zou2007}, for its adaptive version the presence of data-dependent weights makes its calculation less straightforward. We are now ready to present the first  contribution of our paper, which is an unbiased estimator of the degrees of freedom for Adaptive Lasso model in Equation \eqref{eq:convex_regularized_problem_adalasso} for general expression of the adaptive weights.
\begin{theorem}
\label{th:df_adalasso_general_weights}
Let $\betahat$ the solution to the Adaptive Lasso problem in Equation \eqref{eq:convex_regularized_problem_adalasso} with weights  $w_j = w_j(|\widehat{\beta}_j^\mathsf{LS}|)$ and $\gamma \in (\gamma_l, \gamma_{l+1})$. Denote with $\Aset$ the corresponding active set, and with $\bX_{\Aset}$ the design matrix restricted to the active set. 
Define a mapping $\pi: \Aset \rightarrow\{1,2, \ldots,|\Aset|\}$ such that for each $j \in \Aset$, $\pi(j)=i$ if $\Aset_i=j$.
An unbiased estimate
of the degrees of freedom is
%\begin{equation}
%\label{eq:H_minus_eta_adalasso}
%\widehat{df}_\gamma = |\Aset|-\gamma \ \trace\bigg(\frac{\partial\eta_\gamma(\by)}{\partial \by}\bigg).
%\end{equation}
\begin{equation}
\widehat{df}_\gamma =  |\Aset|-\gamma \sum_{j \in \Aset}  \frac{\partial w_j(z)}{\partial z} \bigg|_{z = \widehat{\beta}_j^\mathsf{LS}},
\end{equation}
for the orthonormal design and
\begin{equation}
    \widehat{df}_\gamma =|\Aset|
    -\gamma \sum_{j \in \Aset} \sgn(\widehat{\beta}_j) \sgn(\widehat{\beta}_j^\mathsf{LS}) \frac{\partial w_j(z)}{\partial z} \bigg|_{z = \widehat{\beta}_j^\mathsf{LS}} 
[(\bX_{\Aset}^{\T} \bX_{\Aset} )^{-1}]_{\pi(j),\pi(j)},
\end{equation}
for non-orthonormal designs.
\end{theorem}
\noindent We now focus on the popular choice $w_j=1/|\widehat{\beta}^{\mathsf{LS}}_j|$ for $j=1,\dots,G$ and specify the previous result for both the orthonormal and non-orthonormal cases. Analogous results for the less common case $w_j=\exp(-\alpha |\widehat{\beta}_j^\mathsf{LS}|)$ are presented in %Corollary \ref{coro:df_adalasso_ortho_exp} and \ref{coro:df_adalasso_nonortho_exp} of 
the Supplementary Materials.
\begin{corollary}
\label{th:df_adalasso_ortho}
    Let $\bX^{\T}\bX=\bI_p$ and $w_j = 1/|\widehat{\beta}_j^{\mathsf{LS}}|$. Under the settings of Theorem \ref{th:df_adalasso_general_weights} an unbiased estimate of the degrees of freedom is
    \begin{equation}
    \label{eq:dof_adalasso_ortho}
        \widehat{df}_\gamma = \vert\Aset\vert+\gamma\sum_{j\in\Aset} 
        \frac{1}{(\widehat{\beta}_j^{\mathsf{LS}})^2}.
    \end{equation}
\end{corollary}
\begin{corollary}
\label{th:df_adalasso_nonortho}
    Let $\bX^{\T}\bX\neq \bI_p$ and $w_j = 1/|\widehat{\beta}_j^{\mathsf{LS}}|$. Under the settings of Theorem \ref{th:df_adalasso_general_weights} an unbiased estimate of the degrees of freedom is
\begin{equation} 
\label{eq:dof_adalasso_nonortho}
\widehat{df}_\gamma=\vert\Aset\vert+\gamma\sum_{j\in\Aset} \sgn(\widehat{\beta}_{j}) \frac{ \sgn(\widehat{\beta}_j^{\mathsf{LS}})}{(\widehat{\beta}_j^{\mathsf{LS}})^2} [(\bX_{\Aset}^{\T} \bX_{\Aset} )^{-1}]_{\pi(j),\pi(j)}.
\end{equation}
\end{corollary}
\noindent Equations 
\eqref{eq:dof_adalasso_ortho} and \eqref{eq:dof_adalasso_nonortho} reveal that the degrees of freedom extend beyond simply counting the active set size. Specifically, they include an additional term that depends on the weights' choice and on the regularization parameter $\gamma$. For non-orthonormal designs, the sign of the Adaptive Lasso estimate $\betahat$, the sign and magnitude of the least squares estimate $\betahat^{\mathsf{LS}}$ and the diagonal elements of the active set's precision matrix must also be incorporated. 
Notably, the degrees of freedom expression exhibits a piecewise linear structure. This result stands in sharp contrast to the Lasso, where the estimated degrees of freedom form a piecewise constant function of $\gamma$. The Lasso’s piecewise constant degrees of freedom, however, can be recovered as a special case of the Adaptive Lasso by choosing fixed weights $w_j$. In this scenario, the gradient $\partial w_j(\by)/\partial \by$ vanishes for all $j \in \mathcal{G}$, reducing $\widehat{{df}}_\gamma$ to a piecewise constant function, consistent with classical Lasso results. Under certain conditions, $\widehat{{df}}_\gamma \geq |\mathcal{A}|$, demonstrating that the common practice of using only the active set size systematically underestimates the true degrees of freedom, and this relationship displays strictly positive but successively diminishing slopes. This is formalized in the following corollary.
\begin{corollary}
\label{coro:linearincreasing}
Let $w_j(|\widehat{\beta}_j^\mathsf{LS}|)$ be decreasing functions of $|\widehat{\beta}_j^\mathsf{LS}|$ for all $j\in \cal G$. Denote with $\Aset$ the corresponding active set. If $\sgn{(\widehat{\beta}_j)} \cdot \sgn{(\widehat{\beta}_j^\mathsf{LS})} \geq 0,$  for all $j\in \cal G$ then, for $\gamma \in (\gamma_l, \gamma_{l+1})$, %$\gamma \in \R \setminus \bigcup_{l=1}^L \gamma_l$, 
$\widehat{df}_\gamma = |\Aset| + b_l \gamma$ with $b_l >0$ for any $l = 1, \dots, L-1$, leading to $\widehat{df}_\gamma > |\Aset|$. If the size of active set is a decreasing function of $\gamma$, then $b_l > b_{l+1}$ for any $l=1,\ldots, L-1$.
\end{corollary}
\noindent The conditions of Corollary~\ref{coro:linearincreasing} are typically satisfied in practice, as standard weights are usually monotonically decreasing functions of the absolute value of least squares estimates. The remaining two conditions, pertaining to the signs of the estimates and the size of the active set, are also frequently met in general settings and are guaranteed to hold for orthogonal designs.
A counterintuitive property emerges from Corollary~\ref{coro:linearincreasing}: within adjacent intervals defined by the transition points the estimated degrees of freedom $\widehat{df}_\gamma$ may be higher for $\gamma \in (\gamma_l, \gamma_{l+1})$ than for $\gamma \in (\gamma_{l-1}, \gamma_l)$ even if the active set size is actually decreased. This implies that a larger $\gamma$ (and thus stronger regularization) does not always correspond to a simpler model, at least in terms of \emph{effective complexity}, in line with the results of \cite{kaufman_rosset.2014} and \cite{janson_etal.2015}. This apparent inconsistency arises because, as already discussed, $\widehat{df}_\gamma$ depends not only on the number of active parameters but also on their adaptive weighting structure. When $\gamma$ crosses a transition point $\gamma_l$, the adaptive weights---which themselves depend on $\by$---induce nonlinear interactions between the retained coefficients and the regularization path. These interactions can amplify the sensitivity of the fitted model to the data, effectively ``consuming more information'' despite a sparser active set.  
\subsection{The degrees of freedom of Group Lasso}
\label{sec:glasso_given_weights}
Let $\bX$, $\by$, $\bbeta$ and $\gamma$ as in the previous section. Let $G \leq p \in \mathbb{N}$, be the number of groups and $\bw \in \mathbb{R}^G_+$ a vector of non-negative weights associated to each group. The Group Lasso \citep{yuan:2006:glasso} estimator is defined as
\begin{equation}
\label{eq:convex_regularized_problem_glasso}
\widehat{\boldsymbol{\beta}}=\arg\min_{\bbeta} \frac{1}{2}\Vert\by-\bX \bbeta \Vert_2^2+\gamma \sum_{g=1}^Gw_g\Vert \bbeta_g\Vert_2.
\end{equation}
For this problem the group weights $w_g$ are assumed to be independent from the response, a common choice being setting $w_g \propto n_g$  where $n_g$ is the size of the corresponding group. If the weights are determined in an \textit{adaptive} way, then the model is known as Adaptive Group Lasso \citep{wangleng2008csda} for which we provide results in the following Section \ref{sec:glasso_adaptive_weights}.
For the Group Lasso, let $\mathcal{G}_p=\{1,\dots,p\}$ be the index set of all variables and $\mathcal{G}_G=\{1,\dots,G\}$ the index set of all groups. Then, define the index set of active variables and the index set of active groups as
\begin{equation*}
\label{eq:dof_glasso_Aset_def}
\Aset_p=\left\{j\in\mathcal{G}_p:|\widehat{\beta}_j| \neq 0\right\}, \quad \Aset_G=\left\{g\in\mathcal{G}_G:\lVert\widehat{\bbeta}_g\rVert_2 \neq 0\right\}.
\end{equation*}
Again, we omit the hat from $\Aset_p$ and $\Aset_G$, as well as their dependence from $\gamma$. We are now ready to state the main result about the computation of degrees of freedom of Group Lasso under the orthonormal design.
\begin{theorem}
\label{th:df_glasso_ortho}
Let $\bX^{\T}\bX=\bI_p$, $\betahat$ be the solution to the Group Lasso problem in Equation \eqref{eq:convex_regularized_problem_glasso} with groups cardinality equal to $n_1, \ldots, n_G,$ %, weights $w_g=1 \ \forall g$
and $\gamma \in (\gamma_l, \gamma_{l+1})$. Denote with $\Aset_G$ the corresponding active group set. An unbiased estimate of the degrees of freedom is 
\begin{equation}
\label{eq:dof_glasso_ortho}
\widehat{df}_\gamma = \trace \big[ \big(\bI_n+\gamma \bB \big)^{-1} \bA  \big],
\end{equation} 
where $\bA = \bX_{\Aset_G}\bX_{\Aset_G}^{\T},$ $\bB = \bX_{\Aset_G}\bPi_{\Aset_G}\bX_{\Aset_G}^{\T}$, and 
\begin{equation}
\label{eq:ABPi_g_definition} 
\bPi_{\Aset_G} = \mathrm{blockdiag}\bigg(w_g\bigg[\frac{\bI_{n_g}}{\Vert\betahat_g\Vert_2}-\frac{\betahat_g\betahat_g^{\T}}{\Vert\betahat_g\Vert_2^3}\bigg]\bigg).
\end{equation}
\end{theorem}
\noindent We now consider the more general case where the design matrix is not orthonormal.
\begin{theorem}
\label{th:df_glasso_nonortho} 
Let  $\bX^{\T}\bX \neq\bI_p$, $\betahat$ the solution to the Group Lasso problem in Equation \eqref{eq:convex_regularized_problem_glasso} with groups cardinality equal to $n_1, \ldots, n_G$ %, weights $w_g=1 \ \forall g$ 
and $\gamma \in (\gamma_l, \gamma_{l+1})$. Denote with $\Aset_G$ the corresponding active group set. An unbiased estimate of the degrees of freedom is  
\begin{equation}
\label{eq:dof_glasso_nonortho}
\widehat{df}_\gamma = \trace \big[ \big(\bI_n+\gamma \bB\big)^{-1} \bA  \big],
\end{equation}
where $\bA = \bX_{\Aset_G}(\bX_{\Aset_G}^{\T}\bX_{\Aset_G})^{-1}\bX_{\Aset_G}^{\T}$,
$\bB = \bX_{\Aset_G}(\bX_{\Aset_G}^{\T}\bX_{\Aset_G})^{-1} \bPi_{\Aset_G} (\bX_{\Aset_G}^{\T}\bX_{\Aset_G})^{-1} \bX_{\Aset_G}^{\T}$,
and
\begin{equation*}
\bPi_{\Aset_G} = \mathrm{blockdiag}\bigg(w_g\bigg[\frac{\bI_{n_g}}{\Vert\betahat_g\Vert_2}-\frac{\betahat_g\betahat_g^{\T}}{\Vert\betahat_g\Vert_2^3}\bigg]\bigg).
\end{equation*}
\end{theorem}
\noindent The results in Equations \eqref{eq:dof_glasso_ortho} and \eqref{eq:dof_glasso_nonortho} are apparently different from those of the Lasso and Adaptive Lasso. %, partly due to the use of \eqref{eq:I_plus_eta_inverse_H} in place of \eqref{eq:H_minus_eta}
Furthermore, Equation \eqref{eq:dof_glasso_ortho} looks different from that reported in the original Group Lasso paper by \cite{yuan:2006:glasso} and from a related expression reported by \cite{vaiter_etal.2017}. The relations between these expressions are further discussed in %Corollary \ref{coro:similarexpressions}
in the Supplementary Materials. 
From Equations \eqref{eq:dof_glasso_ortho} and \eqref{eq:dof_glasso_nonortho}, it becomes evident that an unbiased estimate of the degrees of freedom for Group Lasso is again not merely the trace of the projection matrix---which would equate to the number of active variables---but rather the trace of a slightly different matrix. This matrix additionally depends on the design matrix and the vector of estimated coefficients. Consistently, using the size of the active set as an estimator of the degrees of freedom is generally not valid for Group Lasso. The following corollary provides bounds for the degrees of freedom obtained in Theorems \ref{th:df_glasso_ortho} and \ref{th:df_glasso_nonortho}. 
\begin{corollary}
\label{coro:bounds_df_glasso}
   The estimated degrees of freedom of Group Lasso given by Theorem \ref{th:df_glasso_ortho} and \ref{th:df_glasso_nonortho} are a continuous function of $\gamma$, for $\gamma  \in (\gamma_l, \gamma_{l+1})$ and
   $$ |\Aset_G| \ \leq \ \widehat{df}_\gamma \ \leq \ |\Aset_p|,$$
   where we have $|\Aset_G| = \widehat{df}_\gamma = |\Aset_p|=0$ if there are no active groups, and $\widehat{df}_\gamma = |\Aset_p|=p$ in the limiting case $\gamma=0$ of no penalization. 
\end{corollary}
\noindent Differently from Lasso, in Group Lasso the Stein's estimator of degrees of freedom lead to a non-constant piecewise continuous function. This means that by replacing the $\ell_1$ norm with the $\ell_2$ norm we lose the perfect balance, typical of the Lasso, between using the response for variable selection and applying shrinkage. Specifically, the estimated degrees of freedom of Group Lasso are always lower than $|\Aset_p|$, being equal to such value only in limiting cases. Thus the use of the $\ell_2$ norm results in a more parsimonious usage of the information available in data. 
As expected, we recover the Lasso's expression if we set the number of elements in each group to one, as formally discussed in %Corollary \ref{coro:group1} in 
the Supplementary Materials.
A natural question is whether the estimated degrees of freedom of the Group Lasso decrease as $\gamma$ increases. The answer is formalized in the next two theorems.
\begin{theorem}
\label{th:monotonicity_df_glasso}
If the matrix $\bPi_{\Aset_G} + \gamma \mathrm{d} \bPi_{\Aset_G} / \mathrm{d}\gamma $ is positive definite, then the estimated degrees of freedom of Group Lasso given by Theorem \ref{th:df_glasso_ortho} and \ref{th:df_glasso_nonortho} are non-increasing function of $\gamma$, for $\gamma \in (\gamma_l, \gamma_{l+1})$.
\end{theorem}
\begin{theorem}
\label{th:matrix_Pi_indefinite}
    If $\bX^{\T} \bX = \bI_p$ or $\betahat_{\Aset_G}$ is an eigenvector of the matrix $(\bX_{\Aset_G}^{\T} \bX_{\Aset_G} + \gamma \bPi_{\Aset_G})^{-1} \bW_{\Aset_G}$, the matrix $\bPi_{\Aset_G} + \gamma \mathrm{d} \bPi_{\Aset_G} / \mathrm{d}\gamma$ is positive semidefinite. Otherwise, it is indefinite.
\end{theorem}
\begin{corollary}
\label{coro:sufficient_nonortho}
    For a general design matrix, monotonicity of the estimated degrees of freedom cannot be established by positive semidefiniteness of the matrix $\bPi_{\Aset_G} + \gamma \mathrm{d} \bPi_{\Aset_G} / \mathrm{d}\gamma$, in general. Nevertheless%, by leveraging Lemma \ref{lemma:traceAB_sufficient_condition} in the Supplementary Materials%Appendix \ref{sec:appendix_useful_results}
    , a sufficient condition is $\trace(\bA\bB) > 0$, where $\bA = \bPi_{\Aset_G} + \gamma \mathrm{d} \bPi_{\Aset_G} / \mathrm{d}\gamma$,  $\bB = \bM_{\Aset_G}^{\T}\left( \bI_n+\gamma  \bB_\gamma\right)^{-1}  \bA\left( \bI_n+\gamma  \bB_\gamma\right)^{-1}\bM_{\Aset_G}$, $\bM_{\Aset_G}=\bX_{\Aset_G}(\bX_{\Aset_G}^{\T}\bX_{\Aset_G})^{-1}$.
\end{corollary}
\noindent The contribution of Theorems \ref{th:monotonicity_df_glasso} and \ref{th:matrix_Pi_indefinite} is the characterization of the gradient of the estimated degrees of freedom of the Group Lasso in terms of the spectral decomposition of the matrix $\bPi_{\Aset_G}+\gamma\mathrm{d}\bPi_{\Aset_G}/\mathrm{d}\gamma$. This spectral perspective provides a fine-grained understanding of the underlying geometric and algebraic mechanisms that govern the behavior of the estimator as the regularization parameter $\gamma$ varies. Consistently, the results presented in such theorems, and the associated corollaries in the Supplementary Materials, constitute a substantial advancement in understanding the estimated degrees of freedom of the Group Lasso.
In particular, these results go significantly beyond the existing literature by providing an explicit and general expression for the derivative of the estimated degrees of freedom with respect to $\gamma$, valid under general non-orthogonal design conditions. Previous studies typically restrict attention to the orthonormal design case, under which significant simplifications arise. Specifically, only in the orthonormal setting the matrix $\bPi_{\Aset_G}+\gamma\mathrm{d}\bPi_{\Aset_G}/\mathrm{d}\gamma$ is positive semidefinite, as stated in Theorem \ref{th:matrix_Pi_indefinite}. In this case, negative eigenvalues vanish, while the other eigenvalues reduce to simple closed-form expressions. Moreover, in this settings the matrices $\bPi_g$, as well as vectors $\betahat_g$ and their norms, scale linearly with $\gamma$ for each active group $g \in \Aset_G$, as established %in Corollary \ref{coro:dbeta_dgamma_orto}
in the Supplementary Materials, indicating a smooth shrinkage along each of the directions $\betahat_g$. Notably, this behavior no longer holds in the general non-orthogonal design case.
In fact, in the general non-orthogonal case, the analysis rigorously reveals that the matrix $\bPi_{\Aset_G}+\gamma\mathrm{d}\bPi_{\Aset_G}/\mathrm{d}\gamma$ becomes indefinite, and monotonicity of the estimated degrees of freedom should be established by other sufficient conditions, as outlined for instance in Corollary \ref{coro:sufficient_nonortho}. 
\subsection{The degrees of freedom of Adaptive Group Lasso}
\label{sec:glasso_adaptive_weights}

Let $\bX$, $\by$, $\bbeta$, $\gamma$, and $G$ as in the previous sections. Let  $\bw(\by) \in \mathbb{R}^G_+$ a vector of non-negative data-dependent weights associated to each group. The Adaptive Group Lasso estimator is defined as
\begin{equation}
\label{eq:convex_regularized_problem_adaglasso}
\widehat{\boldsymbol{\beta}}=\arg\min_{\bbeta} \frac{1}{2}\Vert\by-\bX \bbeta \Vert_2^2+\gamma \sum_{g=1}^Gw_g(\by)\Vert \bbeta_g\Vert_2.
\end{equation}
As in previous section, let $\Aset_p$ and $\Aset_G$ bet the index sets of active groups and active variables, respectively.
We are now ready to present the unbiased estimator for the degrees of freedom in Adaptive Group Lasso model in Equation \eqref{eq:convex_regularized_problem_adaglasso} under  general expression for the adaptive weights.
\begin{theorem}
\label{th:df_adaglasso_general_weights}
Let $\betahat$ be the solution to the Adaptive Group Lasso problem in Equation \eqref{eq:convex_regularized_problem_adaglasso} with groups cardinality equal to $n_1, \ldots, n_G,$ weights defined as $w_g = w_g(\lVert \widehat{\bbeta}^\mathsf{LS} \rVert_2)$ and $\gamma \in (\gamma_l, \gamma_{l+1})$. Denote with $\Aset_G$ the corresponding active group set, and with $\bX_{\Aset_G}$ the design matrix restricted to the active group set. An unbiased estimate
of the degrees of freedom is
\begin{equation}
\label{eq:dof_adaglasso_general_weights}
\widehat{df}_\gamma = \trace\big[\big(\bI_n + \gamma \bB \big)^{-1} \big(\bA - \gamma \bC \big)\big],
\end{equation}
where $\bA = \bX_{\Aset_G} \bX_{\Aset_G}^{\T}$, $\bB = \bX_{\Aset_G} \bPi_{\Aset_G} \bX_{\Aset_G}^{\T}$, $\bC = \bX_{\Aset_G} \bPhi_{\Aset_G} \bX_{\Aset_G}^{\T}$ for the orthonormal design, $\bA = \bX_{\Aset_G}(\bX_{\Aset_G}^{\T} \bX_{\Aset_G})^{-1}\bX_{\Aset_G}^{\T}$, $\bB = \bX_{\Aset_G}(\bX_{\Aset_G}^{\T} \bX_{\Aset_G})^{-1} \bPi_{\Aset_G}(\bX_{\Aset_G}^{\T} \bX_{\Aset_G})^{-1}\bX_{\Aset_G}^{\T}$, \\
$\bC = \bX_{\Aset_G} (\bX_{\Aset_G}^{\T} \bX_{\Aset_G})^{-1}\bPhi_{\Aset_G} (\bX_{\Aset_G}^{\T} \bX_{\Aset_G})^{-1}\bX_{\Aset_G}^{\T}$ for non-orthonormal designs, and
\begin{equation*}
   \bPi_{\Aset_G} = \mathrm{blockdiag}\bigg( w_g \bigg[ \frac{\bI_{n_g}}{\lVert \widehat{\bbeta}_g \rVert_2} - \frac{\widehat{\bbeta}_g \widehat{\bbeta}_g^{\T}}{\lVert \widehat{\bbeta}_g \rVert_2^3} \bigg] \bigg), \quad \bPhi_{\Aset_G} = \mathrm{blockdiag}\bigg(  \frac{\widehat{\bbeta}_g}{\lVert \widehat{\bbeta}_g \rVert_2 } \frac{\partial w_g(\lVert \widehat{\bbeta}^\mathsf{LS}\rVert_2)}{\partial \lVert\widehat{\bbeta}^\mathsf{LS} \rVert_2}\frac{(\widehat{\bbeta}^\mathsf{LS})^{\T}}{\lVert \widehat{\bbeta}^\mathsf{LS} \rVert_2}\bigg). 
\end{equation*}
\end{theorem}
\noindent Notably, these expressions incorporate two competing effects: an inflation effect caused by the use of adaptive weights, consistently with Section \ref{sec:adalasso},  and a contraction effect induced by the $\ell_2$ norm, consistently with Section \ref{sec:glasso_given_weights}. The relative strength of the two parts cannot be generally determined, and wether the trace of this quantity is greater or lower than $|\Aset_p|$ is not known.
We now specify the expression for the degrees of freedom of Group Lasso with adaptive weights for both orthonormal and non-orthonormal designs, under the typical assumption for the  weights $w_g = 1/\lVert \widehat{\bbeta}^{\mathsf{LS}}_g \rVert_2$.
\begin{corollary}
\label{th:df_adaglasso_ortho}
    Let $\bX^{\T}\bX=\bI_p$ and $w_g = 1/\lVert \widehat{\bbeta}_g^{\mathsf{LS}}\rVert_2$. Under the settings of Theorem \ref{th:df_adaglasso_general_weights} an unbiased estimate of the degrees of freedom is
\begin{equation}
\label{eq:dof_adaglasso_ortho}
\widehat{df}_\gamma = \trace \big[ \big(\bI_n+\gamma \bB \big)^{-1} \big(\bA + \gamma\bC \big) \big],
\end{equation}
where $\bA = \bX_{\Aset_G}\bX_{\Aset_G}^{\T}$, $\bB =\bX_{\Aset_G}\bPi_{\Aset_G}\bX_{\Aset_G}^{\T}$, $\bC =  \bX_{\Aset_G} \bPhi_{\Aset_G} \bX_{\Aset_G}^{\T},$ and
\begin{equation*}
    \bPi_{\Aset_G} = \mathrm{blockdiag}\bigg(w_g\bigg[\frac{\bI_{n_g}}{\Vert\betahat_g\Vert_2}-\frac{\betahat_g\betahat_g^{\T}}{\Vert\betahat_g\Vert_2^3}\bigg]\bigg),\quad \bPhi_{\Aset_G} = \mathrm{blockdiag} \bigg( \frac{\widehat{\bbeta}_g}{\lVert \widehat{\bbeta}_g \rVert_2}\frac{(\widehat{\bbeta}_{g}^{\mathsf{LS}})^{\T}}{\lVert \widehat{\bbeta}_{g}^{\mathsf{LS}}\rVert_2^3} \bigg).
\end{equation*}
\end{corollary}
\begin{corollary}
\label{th:df_adaglasso_nonortho}
Let $\bX^{\T}\bX\neq \bI_p$ and $w_g = 1/\lVert \widehat{\bbeta}_g^{\mathsf{LS}}\rVert_2$. Under the settings of Theorem \ref{th:df_adaglasso_general_weights} an unbiased estimate of the degrees of freedom is
\begin{equation}
\widehat{df}_\gamma = \trace \big[ \big(\bI_n+\gamma \bB \big)^{-1} \big(\bA + \gamma\bC \big) \big],
\end{equation}
where $\bA = \bX_{\Aset_G} (\bX_{\Aset_G}^{\T}\bX_{\Aset_G})^{-1}\bX_{\Aset_G}^{\T}$, $\bB =\bX_{\Aset_G}(\bX_{\Aset_G}^{\T}\bX_{\Aset_G})^{-1}\bPi_{\Aset_G}(\bX_{\Aset_G}^{\T}\bX_{\Aset_G})^{-1}\bX_{\Aset_G}^{\T}$, \\
$\bC =  \bX_{\Aset_G} (\bX_{\Aset_G}^{\T}\bX_{\Aset_G})^{-1}\bPhi_{\Aset_G} (\bX_{\Aset_G}^{\T}\bX_{\Aset_G})^{-1}\bX_{\Aset_G}^{\T},$ and
\begin{equation*}
   \bPi_{\Aset_G} = \mathrm{blockdiag}\bigg(w_g\bigg[\frac{\bI_{n_g}}{\Vert\betahat_g\Vert_2}-\frac{\betahat_g\betahat_g^{\T}}{\Vert\betahat_g\Vert_2^3}\bigg]\bigg), \quad
   \bPhi_{\Aset_G} = \mathrm{blockdiag} \bigg( \frac{\widehat{\bbeta}_g}{\lVert \widehat{\bbeta}_g \rVert_2}\frac{(\widehat{\bbeta}_{g}^{\mathsf{LS}})^{\T}}{\lVert \widehat{\bbeta}_{g}^{\mathsf{LS}}\rVert_2^3} \bigg). 
\end{equation*}
\end{corollary}
\noindent Similarly to Corollary \ref{coro:linearincreasing} we can compare the degrees of freedom in Equation \eqref{eq:dof_adaglasso_general_weights} with those of the Group Lasso. Specifically, we show that under certain conditions the Adaptive Group Lasso degrees of freedom estimates are no lower than those of the Group Lasso. This is formalized in the following Corollary.
\begin{corollary}
\label{coro:bounds_df_adaglasso}
Let $w_g(\lVert \widehat{\bbeta}_g \rVert_2)$ be decreasing functions of $\lVert \widehat{\bbeta}_g \rVert_2$ for all $g\in \mathcal{G}_G$. If $\widehat{\bbeta}^{\T}_g \widehat{\bbeta}^{\mathsf{LS}}_g  \geq 0,$  for all $g\in \mathcal{G}_G$ then, for $\gamma \in (\gamma_l, \gamma_{l+1})$,
$\widehat{df}_\gamma$ for Adaptive Group Lasso, as defined in Equation \eqref{eq:dof_adaglasso_general_weights}, are bounded below by  $\widehat{df}_\gamma$ for Group Lasso, as defined in Equation \eqref{eq:dof_glasso_ortho} and \eqref{eq:dof_glasso_nonortho}, for any $l = 1, \dots, L-1$.
\end{corollary}
\noindent In the orthonormal design setting, an alternative expression for the estimated degrees of freedom appearing in Equation \eqref{eq:dof_adaglasso_ortho} in the spirit of \cite{yuan:2006:glasso} can be derived, and it is presented %in Corollary \ref{coro:alternative_adaglasso} 
in the Supplementary Materials. In the same document, %Corollary \ref{coro:special_cases_adaglasso} 
we show that Lasso and Group Lasso are special cases of Adaptive Group Lasso, and their estimated degrees of freedom can be recovered from Theorem \ref{th:df_adaglasso_general_weights}.
\section{Empirical Assessment}
\label{sec:exper}

In this section, we conduct two numerical experiments to validate and contextualize our theoretical findings. First, we generate synthetic data to empirically demonstrate the consistency of our theoretical results. Second, we analyze a real-world dataset to evaluate the utility of the estimated degrees of freedom in model selection. Notably, the use of our results  achieves performance comparable to computationally intensive cross-validation while diverging significantly from the naive practice of equating the degrees of freedom to the size of the active set.
\subsection{Synthetic data}
\label{subsec:syntheticdata}

We let  $n=100$ and $p=30$ and generate a $n \times p$ data matrix $\bX = (\bX_1, \ldots, \bX_p)$ where $\bX_j \sim \text{N}(0,\bI_n)$ for $j=1, \ldots, p$. We fix $\bbeta = (5, -5, 5, 3, -3, 3, 1, -1, 1, 0, \ldots, 0)$ thus having the last 21 coefficients equal to zero, and consequently let $\by^\star=\bX \bbeta$. For $B=10{,}000$ replications we perturb $\by^\star$ with independent noise generated from a Gaussian distribution with zero mean and variance equal to a value leading to a signal-to-noise ratio equal to 4, obtaining $\by_b = \bX \bbeta + \bvarepsilon_b$. Estimation of  $\bbeta$ is obtained with the estimators presented in Equations \eqref{eq:convex_regularized_problem_adalasso}, \eqref{eq:convex_regularized_problem_glasso}, and \eqref{eq:convex_regularized_problem_adaglasso}. For the Group Lasso and its adaptive version, we grouped three consecutive coefficients leading to $n_g=3$ for $g=1, \dots, G$ and $G=10$. For the Adaptive Lasso, weights are equal to the reciprocal of the absolute values of least squares estimates of the parameters. For the Adaptive Group lasso the weight of each group equals  the reciprocal of the $\ell_2$ norm of corresponding least squares estimates. 
\begin{figure*}[t]
    \centering
    \includegraphics[width=0.9\linewidth]{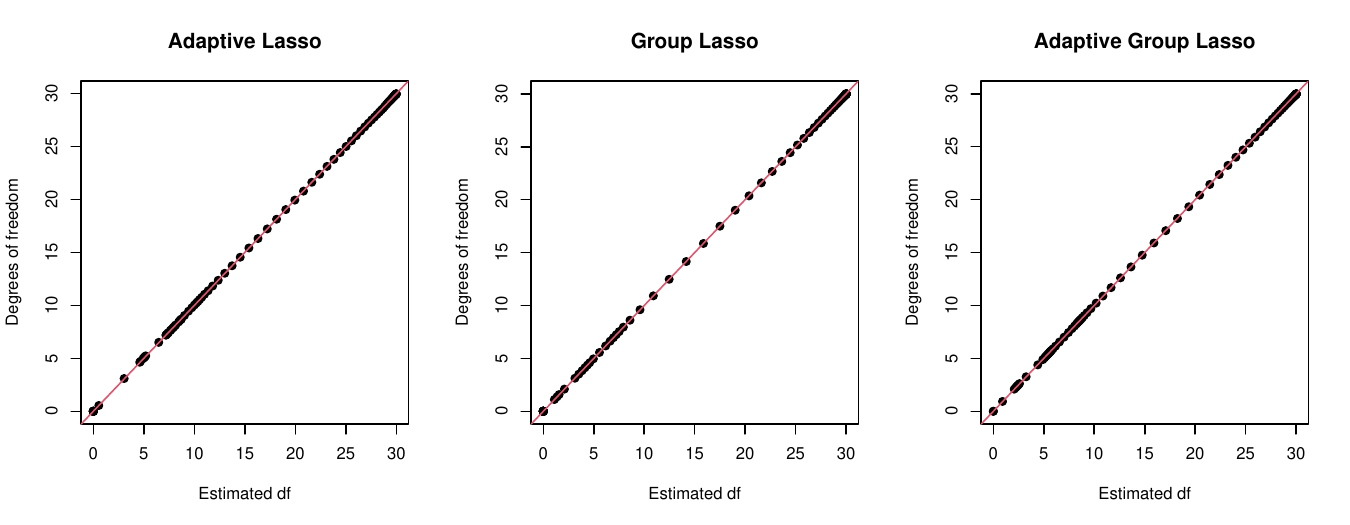}
    \caption{Estimated degrees of freedom using the appropriate theorem (x axis) versus degrees of freedom computed with the general covariance formula \eqref{eq:df_general_expression} for Adaptive Lasso, Group Lasso and Adaptive Group Lasso.}
    \label{fig:convergence}
\end{figure*}
For each method, an appropriate range for the regularization parameter is considered, and the entire path of solutions of the corresponding optimization problem is computed. For each replicate $b=1, \dots, B$, we compute the estimated degrees of freedom  using results of Theorems \ref{th:df_adalasso_ortho}, \ref{th:df_glasso_ortho}, and \ref{th:df_adaglasso_ortho}, respectively. The unbiasedness of these estimators is verified by comparing their empirical average %$\sum_b \widehat{df}_\gamma^{(b)} /B$ 
with the theoretical degrees of freedom computed via the covariance formula in Equation \eqref{eq:df_general_expression}. Figure \ref{fig:convergence} demonstrates strong agreement across all methods. Note that while Equation \eqref{eq:df_general_expression} serves as a benchmark, it cannot be applied in practice as it requires knowledge of the true parameter values.
\begin{table*}[h!]
\begin{center}
\caption{Empirical distribution of variables entering the model over 10000 repetitions (the true value is 9) for Adaptive Lasso, Group Lasso, and Adaptive Group Lasso.}
\label{tab:ncomp}
\begin{tabular}{l c c c c c c c}
\hline
 & $\leq7$ & 8 & \textbf{9} & 10 & 11 & 12 & $\geq13$ \\
 \hline
 Adaptive Lasso & 341 & 1154 & 4385 & 2217 & 1042 & 461 & 400 \\  
 Group Lasso & 0 & 0 & 3786 & 0 & 0 & 3358 & 2856 \\ 
 Adaptive Group Lasso & 9 & 0 & 9449 & 0 & 0 & 504 & 38 \\
 \hline
\end{tabular}
\end{center}
\end{table*}
Additionally, for each replicate, we determine the optimal regularization parameter $\gamma$ with respect to the BIC by minimizing the expression in Equation \eqref{eq:BIC}. Here, the degrees of freedom are computed using the method corresponding to each regularization technique, and  $\sigma^2$ replaced by its estimate obtained from ordinary linear regression.
The distribution of the number of retained coefficients across the $B$ replicates, corresponding to each optimal $\gamma$, is presented in Table \ref{tab:ncomp}.
The results indicate that Adaptive Lasso and Adaptive Group Lasso select 9 variables in the majority of cases, whereas Group Lasso frequently retains more than 9. This discrepancy arises because Group Lasso, lacking an adaptive mechanism, often fails to exclude noise-related coefficients.

\subsection{Diabetes data}
\label{subsec:realdata}
\begin{figure*}[!ht]
    \centering
    \includegraphics[width=0.64\linewidth]{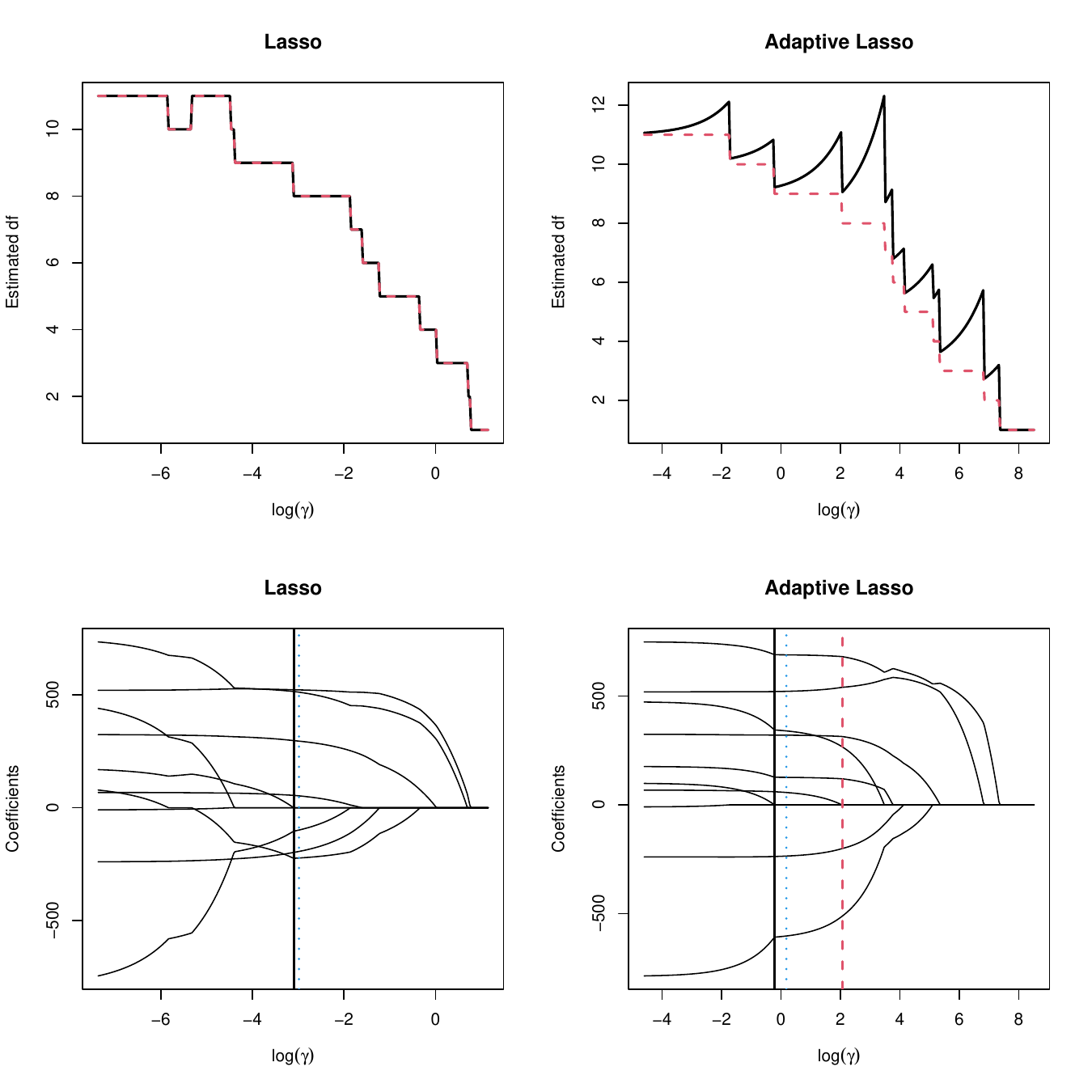}
    \caption{Results of the Lasso (left) and Adaptive Lasso (right) estimation for the Diabetes (small) data. Estimated degrees of freedom (upper panels, continuous lines) along with the size of the active set (dashed lines). Complete solution path (lower panels) with vertical lines denoting the best $\gamma$ according to correct BIC (continuous line), cross validation (dotted line), BIC with active set size as estimator of the degrees of freedom (dashed).}
    \label{fig:diabetes_small}
\end{figure*}
To illustrate  the utility of our findings we analyze the Diabetes dataset \citep{efron2004least,zou2007}. This dataset is available in two versions:  a \textit{small} version with $p=10$ covariates, and a \textit{large} version with $p=64$ covariates. The number of observations is $n=442$. The covariates are individual characteristics and biomarkers while the response variable is a quantitative measure of disease progression. 
We use this dataset to illustrate our results with two separate analyses. First, we use the original $p=10$ covariates of the \textit{small} version and fit a linear model with Lasso and Adaptive Lasso penalties discussing model selection via BIC. In a second analysis we discretize the continuous covariates into ordinal categorical variables for ``high'', ``medium-high'', ``medium-low'', and ``low'' levels of each biomarkers and encode these categorical variable into three dummy variables for each biomarker. After defining groups of dummy variable associated to the same categorical variable, we fit linear regression models with Group Lasso and Adaptive Group Lasso penalties. 
We start by discussing the results for the Lasso and Adaptive Lasso. The upper panels of Figure~\ref{fig:diabetes_small} report the estimated degrees of freedom as a function of $\log(\gamma)$. For comparison, we include the size of the active set (shown as a dashed line) to highlight discrepancies arising from the common but erroneous practice of equating this quantity with the estimated degrees of freedom across all penalization methods. It can be seen that for the Adaptive lasso, the correct estimated degrees of freedom (solid curve) are strictly lower-bounded by the active set size. This reveals that using the active set size as a proxy for degrees of freedom systematically underestimates the true complexity of the model, leading to 
\begin{figure*}[!ht]
    \centering
    \includegraphics[width=0.64\linewidth]{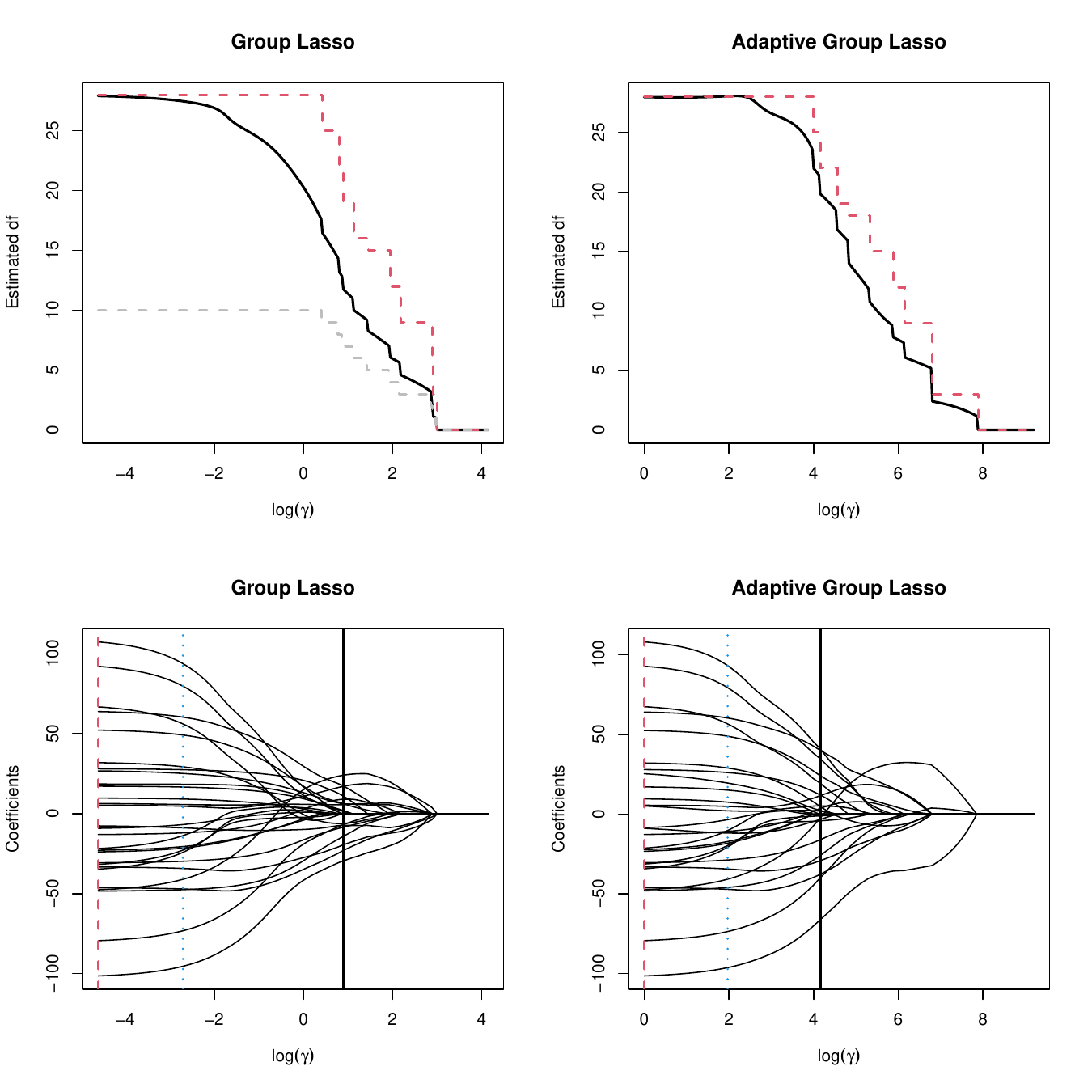}
    \caption{Results of the Group Lasso (left) and Adaptive Group  Lasso (right) estimation for the Diabetes data with the covariates discretized and encoded as dummy variables. Estimated degrees of freedom (upper panels, continuous lines) along with the size of the active set (dashed lines). For the Group Lasso also the size of the active groups is reported (dashed gray lines). Complete solution path (lower panels) with vertical lines denoting the best $\gamma$ according to correct BIC (continuous line), cross validation (dotted line), BIC with active set size as estimator of the degrees of freedom (dashed).}
    \label{fig:diabetes_small_group}
\end{figure*}
an incomplete characterization of the information utilized in parameter estimation. Notably, the upper right panel also demonstrates the behavior described by Corollary \ref{coro:linearincreasing}. Additionally, it can be noticed that, in some interval $(\gamma_l, \gamma_{l+1})$ the value of the estimated degrees of freedom in a left neighborhood of $\gamma_{l+1}$ is actually higher than that for a left neighborhood of $\gamma_{l}$ as also discussed as a comment of Corollary \ref{coro:linearincreasing}.
The lower panels of Figure \ref{fig:diabetes_small}, instead,  report complete solution paths for different values of $\gamma$.  Solid vertical lines denote the $\gamma$ values that minimize BIC according to Equation \eqref{eq:BIC} when using the correct estimated degrees of freedom for each criterion. For comparison, dash-dotted vertical lines mark the optimal $\gamma$ selected via leave-one-out cross-validation while, for the Adaptive Lasso, a dashed vertical line denotes the $\gamma$ value that minimizes BIC using the size of the active set in place of the correct estimated degrees of freedom. A key difference emerges in this case, as the selected  $\gamma$ is larger. This reaffirms that misrepresenting degrees of freedom via the active set size here introduces bias, likely favoring over-regularized models.
We continue our illustration discussing the results for the Group Lasso and Adaptive Group Lasso. The upper panels of Figure~\ref{fig:diabetes_small_group}, similarly to  Figure~\ref{fig:diabetes_small},  report the estimated degrees of freedom as a function of $\log(\gamma)$. Also here the  dashed lines in the up panels represent the active set. A gray dashed lines has been also added for the left up panel representing the size of the active groups.  For the Group Lasso, the correct estimated degrees of freedom (solid curve) are strictly upper-bounded by the active set size. This, contrarily to the Adaptive Lasso, reveals that using the active set size as a proxy for degrees of freedom systematically overestimate the true complexity of the model. {Moreover, as prescribed by Corollary \ref{coro:bounds_df_glasso}, estimated degrees of freedom of Group Lasso are greater than the number of active groups, presented in Figure~\ref{fig:diabetes_small_group} as a dotted gray line.} As discussed after the statement of Theorem \ref{th:df_adaglasso_nonortho}, the Adaptive Group Lasso  incorporates two competing effects: an inflation effect caused by the use of adaptive weights and a contraction effect induced by the $\ell_2$ norm. In this specific case, the result is still upper bounded by the active set even though as can be noticed by the upper right panel of Figure~\ref{fig:diabetes_small_group}. It should be noted, however, that in  other cases, the two effects almost compensate leading a to an estimation for the degrees of freedom that is not so far from the size of the active set. The lower panels of Figure~\ref{fig:diabetes_small_group} instead, tell a similar story to those of Figure~\ref{fig:diabetes_small} with the $\gamma$ chosen minimizing the correct BIC being closer to the leave-one-out cross validation choice than the one exploiting the BIC using the size of the active set in place of the correct degrees of freedom. 

\section{Conclusions}
\label{sec:conclusions}

We introduced a general framework for computing unbiased estimates of the degrees of freedom in penalized regression models, extending the foundational work of \cite{zou2007}. Our theoretical results underline that the common practice of using the size of active set as an estimate of the degrees of freedom is biased in many respect. This practice can severely distort inference since the size of active set and the true degrees of freedom are usually different, as highlighted by \cite{janson_etal.2015}. 
For the Adaptive Lasso, for example, we demonstrated that the correct estimate includes, in addition to the active set size, an adjustment term whose sign depends on the weights' selection. Under default weighting schemes, this term is positive, leading to an inflation of the degrees of freedom relative to the active set size.
In contrast, the Group Lasso exhibits a deflation effect due to its reliance on the $\ell_2$ norm, reducing the degrees of freedom compared to the active set size. The Adaptive Group Lasso presents a more complex interplay: both inflation (from adaptive weights) and contraction (from $\ell_2$ shrinkage) coexist, making general characterization difficult. In addition to the analytical expressions for the degrees of freedom,  we provide deep understanding of their local behavior as functions of $\gamma$. Specifically, we show that for the Adaptive Lasso (under certain conditions), these exhibit strictly positive but monotonic decreasing slopes within adjacent change points. For the Group Lasso, we show that the associated degrees of freedom are bounded between the active groups and active coefficients, and their local monotonicity is intimately related to the notion of orthogonality. 
These findings advance the understanding of model complexity in penalized regression, with implications for model selection, risk estimation, and theoretical analysis of high-dimensional methods. \par
The more challenging regime $n<p$ calls for a separate theoretical investigation. In this setting, the least-squares estimator is not available, and therefore a broadly accepted definition of the adaptive weights is lacking. To the best of our knowledge, indeed, the literature does not provide a default construction of the weights of the Adaptive Lasso or Adaptive Group Lasso in this regime, precisely because their definition relies on preliminary least-squares estimates. Notably, in the Supplementary Materials we provide an empirical exploration of this high-dimensional setting. Specifically, we employ ridge regression estimates with minimal penalization as a surrogate preliminary estimator to construct adaptive weights. Although this approach is not theoretically supported, our numerical results indicate that the proposed degrees of freedom estimators remain dramatically more accurate than the common practice of approximating the degrees of freedom by the active set size.

\bigskip 

\section*{Acknowledgements}
The authors acknowledge support from the European Union- Next Generation EU, Mission 4 Component 2 via the MUR-PRIN grants- CUP C53D23002580006 ID 2022SMNNKY and CUP E53D23010290001 ID 2022KBTEBN. Mauro Bernardi also acknowledges partial funding by the BERN BIRD2222 01- BIRD 2022 grant from the University of Padua.

\begin{appendix}
\section*{Appendix: Proofs}
\label{sec:proofs}
%
%:::::::::::::::::::::::::::::::::::::
% proof OF THEOREM 1
%:::::::::::::::::::::::::::::::::::::
\begin{proof}[Proof of Theorem \ref{th:df_adalasso_general_weights}]
Let $\bW=\diag(w_1,\dots,w_p)\in\mathbb{S}^p_{++}$. The first order conditions for the problem in Equation \eqref{eq:convex_regularized_problem_adalasso} are represented by
\begin{equation}
\label{eq:adalasso_foc}
-\bX_{\Aset}^{{\T}}(\by-\bX_{\Aset} \widehat{\bbeta}_{\Aset})+\gamma\bW_{\Aset}  \mathrm{sgn}(\widehat{\bbeta}_{\Aset})=\boldsymbol{0},
\end{equation}
where $\bX_{\Aset}$, $\widehat{\bbeta}_{\Aset}$, $\bW_{\Aset}$ and $\sgn{(\widehat{\bbeta}_{\Aset})}$ are restricted to the active set $\Aset$. It is important to note that Equation \eqref{eq:adalasso_foc} is only valid for 
$\gamma \in (\gamma_l, \gamma_{l+1})$, where $\gamma_l$ and $\gamma_{l+1}$ are two consecutive \textit{transition points} and for any $l= 1, \dots,  L-1$.
By manipulating Equation \eqref{eq:adalasso_foc} we can derive an implicit equation for $\widehat{\bbeta}_{\Aset}$,
\begin{equation*}
\widehat{\bbeta}_{\Aset} = (\bX_{\Aset}^{\T} \bX_{\Aset})^{-1}(\bX_{\Aset}^{\T} \by - \gamma \bW_{\Aset} \text{sgn}(\widehat{\bbeta}_{\Aset})),
\end{equation*}
and denoting
\begin{equation*}
\label{eq:dof_adalasso_definitions}
\bH_\gamma(\by)=\bX_{\Aset}(\bX_{\Aset}^{\T}\bX_{\Aset})^{-1} \bX_{\Aset}^{\T}, \quad \quad 
\eta_\gamma(\by)=\bX_{\Aset}(\bX_{\Aset}^{\T}\bX_{\Aset})^{-1}\bW_{\Aset}\sgn(\widehat{\bbeta}_{\Aset}),
\end{equation*}
we can also derive the equation linking $\widehat{\by}_\gamma$ and $\by$, i.e.
\begin{equation*}
\widehat{\by}_\gamma=\bX\betahat=\bH_\gamma(\by)\by-\gamma\eta_\gamma(\by).
\end{equation*}
In order to apply Equation \eqref{eq:df_stein}, we need $\partial \widehat{\by}_\gamma / \partial \by$. Consider the increments $\by+\Delta \by$,
then
\begin{equation*}
\widehat{\by}_\gamma(\by+\Delta\by)=\bX \widehat{\bbeta}(\by+\Delta \by)=\bH_\gamma(\by+\Delta \widehat{\by}_\gamma)(\by+\Delta \by)-\gamma \eta_\gamma(\by+\Delta\by). 
\end{equation*}
If the increment of $\by$ is small enough, e.g. $\vert\Delta \by \vert <\varepsilon$, \cite{zou2007} showed that for Lasso the projection matrix $\bH_\gamma$ and the function $\eta_\gamma$ remain constant. Indeed, if $\gamma$ is not a transition point, small perturbations of $\by$ does not affect neither $\bH_\gamma$ nor $\eta_\gamma$. In our case, instead, we have that $\bH_\gamma(\by+\Delta \by)=\bH_\gamma(\by)$ but $\eta_\gamma(\by+\Delta \by) \neq\eta_\gamma(\by)$, because $\bW_\Aset(\by+\Delta \by) \neq\bW_\Aset(\by)$.
Therefore
\begin{equation*}
\begin{aligned}
\widehat{\by}_\gamma(\by+\Delta\by) & =\bH_\gamma(\by)(\by+\Delta\by)-\gamma \eta_\gamma(\by+\Delta\by)\\
\frac{\widehat{\by}_\gamma(\by+\Delta\by)-\widehat{\by}_\gamma(\by)}{\Delta \by} & =\frac{\bH_\gamma(\by) \Delta \by-\gamma\big(\eta_\gamma(\by+\Delta\by)-\eta(\by)\big)}{\Delta \by},
\end{aligned}
\end{equation*}
and
\begin{equation}
\label{eq:H_minus_eta}
\frac{\partial \widehat{\by}_\gamma}{\partial \by}=\bH_\gamma(\by)-\gamma \frac{\partial\eta_\gamma(\by)}{\partial \by}.
\end{equation}
By applying the trace operator, we have
\begin{equation*}
\begin{aligned}
\widehat{df}_\gamma &= \trace\bigg(\frac{\partial \widehat{\by}_\gamma}{\partial \by}\bigg) = \trace \bigg( \bH_\gamma(\by)-\gamma \frac{\partial\eta_\gamma(\by)}{\partial \by} \bigg)\\
&= \trace(\bX_{\Aset}(\bX_{\Aset}^{\T}\bX_{\Aset})^{-1} \bX_{\Aset}^{\T}) - \gamma \trace \bigg( \frac{\partial\bP_{\Aset}\bW_{\Aset}\sgn(\widehat{\bbeta}_{\Aset})}{\partial \by} \bigg),
\end{aligned}
\end{equation*}
where $\bP=\bX(\bX^{\T}\bX)^{-1}\in\mathbb{R}^{n\times p}$ and $\bP_{\Aset}=\bX_{\Aset}(\bX_{\Aset}^{\T}\bX_{\Aset})^{-1}\in\mathbb{R}^{n\times \vert\Aset\vert}$. Then, direct application of Lemma \ref{lemma:adalasso_weights_derivative} in the Supplementary Materials, yields to:
\begin{equation*}
\begin{aligned}
\widehat{df}_\gamma &= |\Aset|-\gamma \trace \bigg( \sum_{j \in \Aset} \sgn(\widehat{\beta}_j) \sgn(\widehat{\beta}_j^\mathsf{LS}) \frac{\partial w_j(z)}{\partial z} \bigg|_{z = \widehat{\beta}_j^\mathsf{LS}} \bp_{j} \bp_{\Aset,\pi(j)}^{\T} \bigg) \\
&= |\Aset|-\gamma \sum_{j \in \Aset} \sgn(\widehat{\beta}_j) \sgn(\widehat{\beta}_j^\mathsf{LS}) \frac{\partial w_j(z)}{\partial z} \bigg|_{z = \widehat{\beta}_j^\mathsf{LS}} 
\trace ( \bp_{j} \bp_{\Aset,\pi(j)}^{\T})\\
&= |\Aset|-\gamma \sum_{j \in \Aset} \sgn(\widehat{\beta}_j) \sgn(\widehat{\beta}_j^\mathsf{LS}) \frac{\partial w_j(z)}{\partial z} \bigg|_{z = \widehat{\beta}_j^\mathsf{LS}} 
\trace( \bp_{\Aset,\pi(j)}^{\T} \bp_{j}),
\end{aligned}
\end{equation*}
where $\bp_{j}$ and $\bp_{\Aset,\pi(j)}$ denote the $j$-th column of the matrix $\bP$ and the $\pi(j)$-th column of the matrix $\bP_\Aset$, respectively. Note that we can write $\bp_j = \bP \be_{j}$ and $\bp_{\Aset,\pi(j)} = \bP_{\Aset} \be_{\pi(j)}$, where $\be_j$ denotes a column vector with one in position $j$ and zeros elsewhere. Let $\bS_{\Aset}\in\{0,1\}^{\vert\Aset\vert\times p}$ denote the selection matrix obtained from the identity matrix $\bI_p$ by retaining only the rows corresponding to the index set $\Aset$, i.e. $\bS_{\Aset}\equiv\bI_{[\Aset,]}$. Then $\bX_{\Aset}=\bX \bS_{\Aset}^{\T}$ and
$\bX_\Aset^{\T}\bP\be_{j}=\bX_\Aset^{\T}\bX(\bX^{\T}\bX)^{-1}\be_{j}=\bS_\Aset\bX^{\T}\bX(\bX^{\T}\bX)^{-1}\be_{j}=\bS_\Aset\be_{j}=\be_{\pi(j)}$. Therefore 
\begin{equation*}
\bp_{\Aset, \pi(j)}^{\T} \bp_{j} = \be_{\pi(j)}^{\T} (\bX_{\Aset}^{\T}\bX_{\Aset})^{-1}\bX_{\Aset}^{\T} \bX(\bX^{\T}\bX)^{-1} \be_{j} = \be_{ \pi(j)}^{\T} (\bX_{\Aset}^{\T}\bX_{\Aset})^{-1} \be_{\pi(j)} = [(\bX_{\Aset}^{\T}\bX_{\Aset})^{-1}]_{\pi(j),\pi(j)},
\end{equation*}
and
\begin{equation*}
\widehat{df}_\gamma =|\Aset|-\gamma \sum_{j \in \Aset} \sgn(\widehat{\beta}_j) \sgn(\widehat{\beta}_j^\mathsf{LS}) \frac{\partial w_j(z)}{\partial z} \bigg|_{z = \widehat{\beta}_j^\mathsf{LS}} [(\bX_{\Aset}^{\T} \bX_{\Aset} )^{-1}]_{\pi(j),\pi(j)},
\end{equation*}
to arrive to the statement of the theorem. In the orthogonal case $\sgn(\widehat{\beta}_j) \sgn(\widehat{\beta}_j^\mathsf{LS})=1$ and $(\bX_{\Aset}^{\T} \bX_{\Aset} )^{-1}_{\pi(j),\pi(j)}$ $=1$ for each $j$, which simplifies the above result to
\begin{equation*}
\widehat{df}_\gamma =  |\Aset|-\gamma \sum_{j \in \Aset}  \frac{\partial w_j(z)}{\partial z} \bigg|_{z = \widehat{\beta}_j^\mathsf{LS}} .
\end{equation*}
\end{proof}
%
%:::::::::::::::::::::::::::::::::::::
% proof OF COROLLARY 1
%:::::::::::::::::::::::::::::::::::::
\begin{proof}[Proof of Corollary~\ref{th:df_adalasso_ortho}]
Under the setting of Theorem \ref{th:df_adalasso_general_weights} and the orthonormal design assumption, we have that $\bH_\gamma(\by) = \bX_{\Aset} \bX_{\Aset}^{\T}$, $\eta_\gamma(\by) = \bX_{\Aset}\bW_{\Aset}(\by)\sgn(\betahat_{\Aset})$ and $\widehat{\bbeta}^{\mathsf{LS}} = \bX^{\T} \by$. Moreover, if the weights are chosen as the inverse of the absolute values of least squares estimates the matrix $\bW$ is given by
\begin{equation*}
\begin{aligned}
\bW=\mathrm{diag}\Big(\frac{1}{\vert\betahat^{\mathsf{LS}}\vert}\Big)=\mathrm{diag}\Big(\frac{1}{\vert\bX^{\T}\by\vert}\Big),
\end{aligned}
\end{equation*}
with generic term being $w_{jj} = w_j(|\widehat{\beta}_j^\mathsf{LS}|) =  1/|\widehat{\beta}^{\mathsf{LS}}_j| =1/|\bx_j^{\T} \by|$. By applying the result of Theorem \ref{th:df_adalasso_general_weights}
in the orthonormal case, we have 
\begin{equation*}
\frac{\partial w_j(z)}{\partial z} \bigg|_{z=\widehat{\beta}_j^\mathsf{LS}} = \frac{\partial}{\partial z} \bigg( \frac{1}{z}\bigg)\bigg|_{z=\widehat{\beta}_j^\mathsf{LS}} =  - \frac{1}{z^2}\bigg|_{z=\widehat{\beta}_j^\mathsf{LS}} =  - \frac{1}{(\widehat{\beta}_j^\mathsf{LS})^2},
\end{equation*}
which concludes the proof.
\end{proof}
%
%:::::::::::::::::::::::::::::::::::::
% proof OF COROLLARY 2
%:::::::::::::::::::::::::::::::::::::
\begin{proof}[Proof of Corollary \ref{th:df_adalasso_nonortho}]
Under the setting of Theorem \ref{th:df_adalasso_general_weights}, we have $\widehat{\bbeta}^{\mathsf{LS}} = (\bX^{\T}\bX)^{-1}\bX^{\T} \by$ and
\begin{equation*}
\begin{aligned}
\bW=\mathrm{diag}\Big(\frac{1}{\vert\betahat^{\mathsf{LS}}\vert}\Big)=\mathrm{diag}\Big(\frac{1}{\vert(\bX^{\T}\bX)^{-1}\bX^{\T}\by\vert}\Big),
\end{aligned}
\end{equation*}
with generic term being $w_{jj} = w_j(|\widehat{\beta}_j^\mathsf{LS}|) = 1/|\widehat{\beta}^{\mathsf{LS}}_j| =1/|\bp_j^{\T} \by|$, where $\bp_j\in\mathbb{R}^n$ is the $j$-th column of $\bP=\bX(\bX^{\T}\bX)^{-1}\in\mathbb{R}^{n\times p}$. By applying the result of Theorem \ref{th:df_adalasso_general_weights} in the non-orthonormal case, we have
\begin{equation*}
\frac{\partial w_j(z)}{\partial z} \bigg|_{z=\widehat{\beta}_j^\mathsf{LS}} = \frac{\partial}{\partial z} \bigg( \frac{1}{z}\bigg)\bigg|_{z=\widehat{\beta}_j^\mathsf{LS}} =  - \frac{1}{z^2}\bigg|_{z=\widehat{\beta}_j^\mathsf{LS}} =  - \frac{1}{(\widehat{\beta}_j^\mathsf{LS})^2},
\end{equation*}
which concludes the proof.
\end{proof}
%
%:::::::::::::::::::::::::::::::::::::
% proof OF COROLLARY 3
%:::::::::::::::::::::::::::::::::::::
\begin{proof}[Proof of corollary \ref{coro:linearincreasing}]
Define a mapping $\pi: \Aset \rightarrow\{1,2, \ldots,|\Aset|\}$ such that for each $j \in \Aset$, $\pi(j)=i$ if $\Aset_i=j$. We first prove that $b_l>0$ for each $l$ and then that $b_l > b_{l+1}$ for each $l$.
To prove the first inequality, note that the estimated degrees of freedom are given by a linear function $|\Aset| + b_l\gamma$, where the slope $b_l$ is, by Theorem \ref{th:df_adalasso_general_weights}, equal to:
\begin{equation*} 
b_l = -\sum_{j \in \Aset} \sgn({\widehat{\beta}_j})\sgn({\widehat{\beta}_j^\mathsf{LS}})\frac{\partial w(z)}{\partial z}\bigg|_{x=\widehat{\beta}_j^\mathsf{LS}}[(\bX_{\Aset}^{\T} \bX_{\Aset} )^{-1}]_{\pi(j),\pi(j)}.
\end{equation*} 
This quantity is strictly positive because by theorem's assumptions we must have $\Aset \neq \emptyset$, signs concordance and $w^\prime(|\beta_j^\mathsf{LS}|)<0$, and $[(\bX_{\Aset}^{\T} \bX_{\Aset} )^{-1}]_{\pi(j),\pi(j)}>0$.\par
To prove the second inequality, define with $\Aset_l$ the active set for $\gamma \in (\gamma_l, \gamma_{l+1})$ and with $\Aset_{l+1}$ the active set for $\gamma \in (\gamma_{l+1}, \gamma_{l+2})$. Under the assumptions of the Corollary, we have $\Aset_{l+1} \subset \Aset_{l}$. Assume without loss of generality that moving from $\Aset_{l}$ to $\Aset_{l+1}$ the $k$-th coefficient leaves the active set i.e., $\Aset_{l} \setminus \Aset_{l+1} = \beta_k$. Then,
\begin{align*}
b_{l} &= -\sum_{j \in \Aset_l} \sgn({\widehat{\beta}_j})\sgn({\widehat{\beta}_j^\mathsf{LS}})\frac{\partial w(z)}{\partial z}\bigg|_{x=\widehat{\beta}_j^\mathsf{LS}}[(\bX_{\Aset}^{\T} \bX_{\Aset} )^{-1}]_{\pi(j),\pi(j)}\\
&=-\sum_{j \in \Aset_{l+1}} \sgn({\widehat{\beta}_j})\sgn({\widehat{\beta}_j^\mathsf{LS}})\frac{\partial w(z)}{\partial z}\bigg|_{x=\widehat{\beta}_j^\mathsf{LS}}[(\bX_{\Aset}^{\T} \bX_{\Aset} )^{-1}]_{\pi(j),\pi(j)}\nonumber\\
&\qquad\qquad-\sgn({\widehat{\beta}_k})\sgn({\widehat{\beta}_k^\mathsf{LS}})\frac{\partial w(z)}{\partial z}\bigg|_{x=\widehat{\beta}_k^\mathsf{LS}}[(\bX_{\Aset}^{\T} \bX_{\Aset} )^{-1}]_{\pi(k),\pi(k)}= b_{l+1} + c, 
\end{align*}
with $c>0$, concluding the proof.
\end{proof}
%
%:::::::::::::::::::::::::::::::::::::
% proof OF THEOREM 2
%:::::::::::::::::::::::::::::::::::::
\begin{proof}[Proof of Theorem \ref{th:df_glasso_ortho}]
Let $\bW=\diag(w_1,\dots,w_G)\in\mathbb{S}^G_{++}$. The first order conditions for the problem in Equation \eqref{eq:convex_regularized_problem_glasso} are represented by
\begin{equation*}
-\bX_{\Aset_G}^{\T}(\by-\bX_{\Aset_G}\betahat_{\Aset_G})+\gamma\widetilde{\bbeta}_{\Aset_G}=\boldsymbol{0},
\end{equation*}
where $\bX_{\Aset_G}$ and $\widehat{\bbeta}_{\Aset_G}$ are restricted to the active set $\Aset_G$, and $\widetilde{\bbeta}_{\Aset_G}$ is a column vector whose entries are $ \Big[w_g\widehat{\bbeta}_g / \lVert \widehat{\bbeta}_g \rVert_2\Big]_{g\in\Aset_G}$. Similarly to Lasso and Adaptive Lasso, by manipulating terms recalling the orthonormal design assumption, we arrive to the following expression:
\begin{equation*}
\betahat_{\Aset_G} = (\bX_{\Aset_G}^{{\T}} \bX_{\Aset_G})^{-1} \big(\bX_{\Aset_G}^{\T} \by - \gamma\widetilde{\bbeta}_{\Aset_G} \big) = \bX_{\Aset_G}^{\T} \by - \gamma\widetilde{\bbeta}_{\Aset_G},
\end{equation*}
and denoting
\begin{equation*}
\label{eq:dof_glasso_definitions}
\bH_\gamma(\by)=\bX_{\Aset_G}\bX_{\Aset_G}^{\T},\qquad
\eta_\gamma(\by)=\bX_{\Aset_G}\widetilde{\bbeta}_{\Aset_G},
\end{equation*}
we get the expression linking $\widehat{\by}_\gamma$ and $\by$
\begin{equation*}
\widehat{\by}_\gamma=\bX\betahat=\bH_\gamma(\by)\by-\gamma\eta_\gamma(\by).
\end{equation*}
Considering the increments $\by+\Delta \by$, we have again $\eta_\gamma(\by+\Delta \by) \neq\eta_\gamma(\by)$ and arrive again at Equation \eqref{eq:H_minus_eta}. However, for Group Lasso is not easy to work with this expression directly, as $\eta_\gamma$ is function of $\by$ in an indirect way, since the dependence on $\by$ only appears through $\widehat{\bbeta}$. Despite that, exploiting the chain rule, we insert $\partial \widehat{\by}_\gamma / \partial \by$ in the right side of the equation allowing us to write
\begin{equation*}
\frac{\partial \widehat{\by}_\gamma}{\partial \by}=\bH_\gamma(\by)-\gamma \frac{\partial\eta_\gamma(\by)}{\partial \widehat{\by}}\frac{\partial \widehat{\by}_\gamma}{\partial\by},
\end{equation*}
and, by isolating the quantity of interest, obtaining
\begin{equation}
\label{eq:I_plus_eta_inverse_H}
\frac{\partial \widehat{\by}_\gamma}{\partial\by}=\left(\bI_n+\gamma \frac{\partial\eta_\gamma(\by)}{\partial \widehat{\by}}\right)^{-1} \bH_\gamma(\by). 
\end{equation}
In order to compute $\partial \eta_\gamma(\by)/\partial \widehat{\by}$ we make use again of the chain rule:
\begin{equation*}
\begin{aligned}
\frac{\partial\eta_\gamma(\by)}{\partial \widehat{\by}}&=  \frac{\partial\eta_\gamma(\by)}{\partial \betahat_{\Aset_G}}\frac{\partial\betahat_{\Aset_G}}{\partial \widehat{\by}}.
\end{aligned}
\end{equation*}
By applying Lemma \ref{lemma:product_Ax_norm_fd} in the Supplementary Materials, to the former term, we obtain:
\begin{equation*}
\begin{aligned}
\frac{\partial \eta_\gamma(\by)}{\partial\betahat_{\Aset_G}} &= \begin{bmatrix}\frac{\partial \eta_\gamma(\by)}{\partial\betahat_1}&  \cdots&\frac{\partial \eta_\gamma(\by)}{\partial\betahat_G}  \end{bmatrix}= \begin{bmatrix}\frac{\partial}{\partial \betahat_1} \bigg(\frac{ w_1 \bX_1\betahat_1}{\| \betahat_1 \|_2}\bigg) &\cdots& \frac{\partial}{\partial \betahat_G} \bigg(\frac{ w_G \bX_G\betahat_G}{\| \betahat_G \|_2}\bigg)
\end{bmatrix}\\
&= \begin{bmatrix} w_1 \bX_1\frac{\partial}{\partial\betahat_1}\Big(\frac{ \betahat_1}{\| \betahat_1 \|_2}\Big) &\cdots& w_G \bX_G\frac{\partial}{\partial\betahat_G}\Big(\frac{ \betahat_G}{\| \betahat_G \|_2}\Big)
\end{bmatrix}\\
&= \begin{bmatrix}w_1\bX_1\Bigg(\frac{\bI_{p}}{\Vert\betahat_1\Vert_2}-\frac{\betahat_1\betahat_1^{\T}}{\Vert\betahat_1\Vert_2^3}\Bigg) & \cdots & w_G\bX_G\Bigg(\frac{\bI_{p}}{\Vert\betahat_G\Vert_2}-\frac{\betahat_G\betahat_G^{\T}}{\Vert\betahat_G\Vert_2^3}\Bigg)
\end{bmatrix},
\end{aligned}
\end{equation*}
and, by defining
\begin{equation}
\label{eq:Pi_Aset_definition}
\bPi_g = w_g \Bigg(\frac{\bI_{p}}{\Vert\betahat_g\Vert_2}-\frac{\betahat_g\betahat_g^{\T}}{\Vert\betahat_g\Vert_2^3}\Bigg) \quad \text{and} \quad \bPi_{\Aset_G} = \mathrm{blockdiag}(\bPi_1, \ldots, \bPi_G),
\end{equation}
we get
\begin{equation*}
\begin{aligned}
\frac{\partial \eta_\gamma(\by)}{\partial\betahat_{\Aset_G}} 
&=\begin{bmatrix}
\bX_1\bPi_1 & \cdots &\bX_g\bPi_g&\cdots&\bX_G\bPi_G
\end{bmatrix}\\
&= \begin{bmatrix}
\bX_1 & \cdots &\bX_g&\cdots&\bX_G
\end{bmatrix}
\begin{bmatrix}
\bPi_1 & \cdots & 0 & \cdots & 0\\
0 & \cdots & \bPi_g & \cdots & 0 \\
0 & \cdots & 0 & \cdots &\bPi_G
\end{bmatrix}= \bX_{\Aset_G} \bPi_{\Aset_G}.
\end{aligned}
\end{equation*}
For the latter, since $\widehat{\by}=\bX_{\Aset_G}\betahat_{\Aset_G}$, we have
\begin{equation*}
\begin{aligned}
\frac{\partial\betahat_{\Aset_G}}{\partial \widehat{\by}}=  
\bigg(\frac{\partial \widehat{\by}}{\partial\betahat_{\Aset_G}}\bigg)^{-1} = \bigg( \frac{\partial \bX_{\Aset_G}\widehat{\bbeta}_{\Aset_G}}{\partial \widehat{\bbeta}_{\Aset_G}}\bigg)^{-1} = \bX_{\Aset_G}^-=\bX_{\Aset_G}^{\T}.
\end{aligned}
\end{equation*}
The final expression for $\partial\eta_\gamma(\by) / \partial \widehat{\by}$ is therefore
\begin{equation*}
\frac{\partial\eta_\gamma(\by)}{\partial \widehat{\by}}= 
\frac{\partial \eta_\gamma(\by)}{\partial\betahat_{\Aset_G}} \frac{\partial\betahat_{\Aset_G}}{\partial \widehat{\by}}= \bX_{\Aset_G} \bPi_{\Aset_G} \bX_{\Aset_G}^{\T}.
\end{equation*}
To compute degrees of freedom, from Equation \eqref{eq:I_plus_eta_inverse_H}, we write
\begin{equation*}
\begin{aligned}
\widehat{df}_\gamma &= \trace \bigg(\frac{\partial \widehat{\by}_\gamma}{\partial\by} \bigg) = \trace \bigg\{\left(\bI_n+\gamma \frac{\partial\eta_\gamma(\by)}{\partial \widehat{\by}}\right)^{-1} \bH_\gamma(\by) \bigg\}\\
&= \trace \big\{ \big( \bI_n + \gamma \bX_{\Aset_G} \bPi_{\Aset_G} \bX_{\Aset_G}^{\T} \big)^{-1} \bX_{\Aset_G} \bX_{\Aset_G}^{\T}  \big\},
\end{aligned}
\end{equation*}
completing the proof.
\end{proof}
%
%:::::::::::::::::::::::::::::::::::::
% proof OF THEOREM 3
%:::::::::::::::::::::::::::::::::::::
\begin{proof}[Proof of Theorem \ref{th:df_glasso_nonortho}]
The proof follows a structure similar to that in the previous one. When the design matrix is not orthogonal, we have $\bH_\gamma(\by) = \bX_{\Aset_G}(\bX_{\Aset_G}^{\T}\bX_{\Aset_G})^{-1}\bX_{\Aset_G}^{\T}$ and  $\eta_\gamma(\by) = \bX_{\Aset_G}\left(\bX_{\Aset_G}^{\T}\bX_{\Aset_G}\right)^{-1}\widetilde{\bbeta}_{\Aset_G}$. For convenience, define $\bP = \bX_{\Aset_G}(\bX_{\Aset_G}^{\T}\bX_{\Aset_G})^{-1}$ and $\bP_g$ the generic $g$-th column of $\bP$. Consequently, we can rewrite:  
\begin{equation*}
\begin{aligned}
\frac{\partial \eta_\gamma(\by)}{\partial\betahat_{\Aset_G}}&= \begin{bmatrix}
\frac{\partial \eta_\gamma(\by)}{\partial\betahat_1} & \cdots&\frac{\partial \eta_\gamma(\by)}{\partial\betahat_G}
\end{bmatrix}= \begin{bmatrix}\frac{\partial}{\partial \betahat_1} \bigg(\frac{ w_1 \bP_1\betahat_1}{\| \betahat_1 \|_2}\bigg)& \cdots &\frac{\partial}{\partial \betahat_G} \bigg(\frac{ w_G \bP_G\betahat_G}{\| \betahat_G \|_2}\bigg) \end{bmatrix} \\
&= \begin{bmatrix}
w_1 \bP_1\frac{\partial}{\partial\betahat_1}\Big(\frac{ \betahat_1}{\| \betahat_1 \|_2}\Big) & \cdots & w_G \bP_G\frac{\partial}{\partial\betahat_G}\Big(\frac{ \betahat_G}{\| \betahat_G \|_2}\Big)
\end{bmatrix} \\
&= \begin{bmatrix}
w_1\bP_1\Bigg(\frac{\bI_{p}}{\Vert\betahat_1\Vert_2}-\frac{\betahat_1\betahat_1^{\T}}{\Vert\betahat_1\Vert_2^3}\Bigg) & \cdots & w_G\bP_G\Bigg(\frac{\bI_{p}}{\Vert\betahat_G\Vert_2}-\frac{\betahat_G\betahat_G^{\T}}{\Vert\betahat_G\Vert_2^3}\Bigg)
\end{bmatrix},
\end{aligned}
\end{equation*}
from which we get
\begin{equation*}
\begin{aligned}
\frac{\partial \eta_\gamma(\by)}{\partial\betahat_{\Aset_G}} &= \begin{bmatrix}
\bP_1 \bPi_1 & \cdots & \bP_g \bPi_g & \cdots & \bP_G \bPi_G
\end{bmatrix}\\
&= \begin{bmatrix}
\bP_1  & \cdots & \bP_g  & \cdots & \bP_G 
\end{bmatrix} \begin{bmatrix}
\bPi_1 & \cdots & 0 & \cdots & 0\\
0 & \cdots & \bPi_g & \cdots & 0 \\
0 & \cdots & 0 & \cdots &\bPi_G
\end{bmatrix}\\
&= \bP \bPi_{\Aset_G}= \bX_{\Aset_G}(\bX_{\Aset_G}^{\T}\bX_{\Aset_G})^{-1} \bPi_{\Aset_G}.
\end{aligned}
\end{equation*}
For the latter, since $\widehat{\by}=\bX_{\Aset_G}\betahat_{\Aset_G}$, we have
\begin{equation*}
\begin{aligned}
\frac{\partial\betahat_{\Aset_G}}{\partial \widehat{\by}}=  
\bigg(\frac{\partial \widehat{\by}}{\partial\betahat_{\Aset_G}}\bigg)^{-1} = \bigg( \frac{\partial \bX_{\Aset_G}\widehat{\bbeta}_{\Aset_G}}{\partial \widehat{\bbeta}_{\Aset_G}}\bigg)^{-1} = \bX_{\Aset_G}^-=(\bX_{\Aset_G}^{\T}\bX_{\Aset_G})^{-1}\bX_{\Aset_G}^{\T}.
\end{aligned}
\end{equation*}
The final expression for the derivative $\partial\eta_\gamma(\by) / \partial \widehat{\by}$, is therefore
\begin{equation*}
\frac{\partial\eta_\gamma(\by)}{\partial \widehat{\by}}=\frac{\partial \eta_\gamma(\by)}{\partial\betahat_{\Aset_G}} \frac{\partial\betahat_{\Aset_G}}{\partial \widehat{\by}} = \bX_{\Aset_G}(\bX_{\Aset_G}^{\T}\bX_{\Aset_G})^{-1} \bPi_{\Aset_G} (\bX_{\Aset_G}^{\T}\bX_{\Aset_G})^{-1}\bX_{\Aset_G}^{\T}.
\end{equation*}
To compute the degrees of freedom, we use Equation \eqref{eq:I_plus_eta_inverse_H} and write:
\begin{equation*}
\begin{aligned}
\widehat{df}_\gamma &= \trace \bigg(\frac{\partial \widehat{\by}_\gamma}{\partial\by} \bigg) = \trace \bigg\{\left(\bI_n+\gamma \frac{\partial\eta_\gamma(\by)}{\partial \widehat{\by}}\right)^{-1} \bH_\gamma(\by) \bigg\}\\
&= \trace \bigg\{ \bigg( \bI_n + \gamma \bX_{\Aset_G}(\bX_{\Aset_G}^{\T} \bX_{\Aset_G})^{-1} \bPi_{\Aset_G} (\bX_{\Aset_G}^{\T} \bX_{\Aset_G})^{-1}\bX_{\Aset_G}^{\T} \bigg)^{-1} \bX_{\Aset_G} (\bX_{\Aset_G}^{\T} \bX_{\Aset_G})^{-1}\bX_{\Aset_G}^{\T}  \bigg\},
\end{aligned}
\end{equation*}
which completes the proof.
\end{proof}
%
%:::::::::::::::::::::::::::::::::::::
% proof OF COROLLARY 4
%:::::::::::::::::::::::::::::::::::::
\begin{proof}[Proof of corollary \ref{coro:bounds_df_glasso}]
We begin by noting that both the matrices $\bA$ and $\bB$ introduced in Theorems \ref{th:df_glasso_ortho} and  \ref{th:df_glasso_nonortho} are positive semidefinite. The matrix $\bA$ has its first  $|\Aset_p|$ eigenvalues equal to $1$, with all remaining eigenvalues equal to $0$. The matrix $\bB$ is also positive semidefinite, since each component matrix $\bPi_g$ is. In fact, since $w_g/\lVert \bbeta_g\rVert_2>0$, and the matrix $\bI_{n_g} - \bbeta_g \bbeta_g^{\T} / \lVert \bbeta_g\rVert_2^2$ has $n_g-1$ eigenvalues equal to $1$ and one equal to $0$, each $\bPi_g$ inherits this semi-definiteness. When $\bX$ is orthonormal, the eigenvalues of $\bB$ are  
\begin{equation*}
\lambda(\bB) = \bigg(\bigcup_{g \in \Aset} \bigg\{ \underbrace{\frac{w_g}{\lVert \widehat{\bbeta}_g \rVert_2}, \ldots, \frac{w_g}{\lVert \widehat{\bbeta}_g \rVert_2}}_{n_g-1}, 0 \bigg\}, \underbrace{0, \ldots, 0 }_{n-|\Aset_p|}\bigg),
\end{equation*}
for a total of $|\Aset_p| - |\Aset_G| $ eigenvalues greater than zero and the remaining eigenvalues equal to zero. In the non orthonormal case, the expression of the eigenvalues of $\bB$ is not known, but we still have $|\Aset_p| - |\Aset_G| $ eigenvalues greater than zero and the remaining eigenvalues equal to zero. When the design is orthonormal,
\begin{equation}
\label{eq:trace_as_ev}
\text{trace} ((\bI_n+\gamma\bB)^{-1}\bA) = \sum_{i=1}^n \frac{\lambda_i^A}{1+\gamma\lambda_i^B},
\end{equation}
where $\lambda_i^A$ and $\lambda_i^B$ are the eigenvalues of $\bA$ and $\bB$, respectively. It is straightforward to prove that Equation \eqref{eq:trace_as_ev} is a continuous function of $\gamma$. Given the eigenstructure of the two matrices, the first $|\Aset_p| - |\Aset_G|$ eigenvalues of $\bA$ are shrinked by the positive amount $1 + \gamma\lambda_i^B$, the following $|\Aset_G|$ eigenvalues of $\bA$ remain equal to 1 because the corresponding $\lambda_i^B$ is zero, and the remaining eigenvalues are zero. We can thus conclude that 
$$ |\Aset_G| \leq \trace ((\bI_n+\gamma\bB)^{-1}\bA) \leq \trace(\bA) = |\Aset_p|.$$
For the non orthonormal design, we apply Von Neumann's trace inequality:
\begin{equation}
\label{eq:trace_as_ev2}
|\Aset_G| \leq \sum_{i=1}^n \frac{\lambda_i^A} {1+\gamma\lambda_{i}^B} \leq \text{trace} ((\bI_n+\gamma\bB)^{-1}\bA) \leq \sum_{i=1}^n \frac{\lambda_i^A} {1+\gamma\lambda_{n-i+1}^B} \leq \sum_{i=1}^n \lambda_i^A = |\Aset_p|,
\end{equation}
thus the degrees of freedom of Group Lasso are always lower than the number of active \textit{variables} $|\Aset_p|$ and greater than the number of active \textit{groups} $|\Aset_G|$.
The equality in the previous expression is achieved if $\gamma=0$ (in which case $\widehat{df}_\gamma = p$), if $|\Aset_p|= |\Aset_G|=0$ (in which case $\widehat{df}_\gamma = 0$) or if $n_g=1$ (in which case Lasso is being fitted).
\end{proof}
%
%:::::::::::::::::::::::::::::::::::::
% proof OF THEOREM 4
%:::::::::::::::::::::::::::::::::::::
\begin{proof}[Proof of Theorem \ref{th:monotonicity_df_glasso}]
Taking the derivative of $\widehat{df}_\gamma$ with respect to $\gamma>0$, we have
\begin{equation*}
\begin{aligned}
\frac{\mathrm{d}\widehat{df}_\gamma}{\mathrm{d}\gamma}&=
\frac{\mathrm{d}}{\mathrm{d}\gamma} \trace\left[\left(\bI_n+\gamma  \bB\right)^{-1}  \bA\right]\\
&=-\trace\left[\left( \bPi_{\Aset_G}+\gamma \frac{\mathrm{d}  \bPi_{\Aset_G}}{\mathrm{d} \gamma}\right)\bM_{\Aset_G}^{\T}\left( \bI_n+\gamma  \bB\right)^{-1}  \bA\left( \bI_n+\gamma  \bB\right)^{-1}\bM_{\Aset_G}\right],
\end{aligned}
\end{equation*}
where, to get the previous result, we used the fact that 
\begin{equation*}
\frac{\mathrm{d}\bB}{\mathrm{d}\gamma}=\bM_{\Aset_G}\frac{\mathrm{d}  \bPi_{\Aset_G}}{\mathrm{d} \gamma}\bM_{\Aset_G}^{\T}.
\end{equation*}
For the derivative $\mathrm{d}\widehat{df}_\gamma/d\gamma$ to be negative the trace should be positive. Note that as in Equation \eqref{eq:trace_as_ev2}, the matrices $\bI_n+\gamma  \bB$ and $\bA$ have non-negative eigenvalues, therefore it remains to prove that all the eigenvalues of 
\begin{equation*}
\bB+\gamma \frac{\mathrm{d}  \bB}{\mathrm{d} \gamma}=\bM_{\Aset_G}\left( \bPi_{\Aset_G}+\gamma \frac{\mathrm{d}  \bPi_{\Aset_G}}{\mathrm{d} \gamma}\right)\bM_{\Aset_G}^{\T},
\end{equation*}
are non-negative. Since  $\bPi_{\Aset_G}+\gamma \mathrm{d}  \bPi_{\Aset_G}/\mathrm{d} \gamma$ is positive semidefinite by assumption, we have the proof. 
\end{proof}
%
%:::::::::::::::::::::::::::::::::::::
% proof OF THEOREM 5
%:::::::::::::::::::::::::::::::::::::
\begin{proof}[Proof of Theorem \ref{th:matrix_Pi_indefinite}]
Consider the derivative of $\bPi_{\Aset_G}$ wrt $\gamma$. Since $\bPi_{\Aset_G}$ is block-diagonal with blocks defined in Equation \eqref{eq:Pi_Aset_definition}, then the derivative can be computed blockwise:
\begin{equation*}
\frac{\mathrm{d}\bPi_{\Aset_G} }{\mathrm{d}\gamma}= \mathrm{blockdiag}\Big(\frac{\mathrm{d}\bPi_{1}}{\mathrm{d}\gamma}, \ldots, \frac{\mathrm{d}\bPi_{G}}{\mathrm{d}\gamma}\Big).
\end{equation*}
Applying the chain rule for the derivative, and defining $r_g=\|\betahat_{g}\|_2$, we have
\begin{equation}
\label{eq:Pi_derivative}
\frac{\mathrm{d}\bPi_{g}}{\mathrm{d}\gamma}
= -\frac{w_g}{r_g^2}\frac{\mathrm{d}r_g}{\mathrm{d}\gamma}\Big(\bI_{n_g}-\frac{3}{r_g^2} \widehat{\bbeta}_g \widehat{\bbeta}_g^{\T} \Big)-  \frac{w_g}{r_g^3} \Big( \frac{\mathrm{d}\betahat_g}{\mathrm{d}\gamma} \widehat{\bbeta}_g^{\T} + \widehat{\bbeta}_g \frac{\mathrm{d}\betahat_g}{\mathrm{d}\gamma}^{\T} \Big),\quad g \in \Aset_G,
\end{equation}
$\frac{\mathrm{d}\betahat_g}{\mathrm{d}\gamma}=\frac{\mathrm{d}\betahat_g}{\mathrm{d}\betahat_{\Aset_G}}\frac{\mathrm{d}\betahat_{\Aset_G}}{\mathrm{d}\gamma}=\bS_{g}\frac{\mathrm{d}\betahat_{\Aset_G}}{\mathrm{d}\gamma}$, and $\bS_{g}\in\{0,1\}^{n_g\times\vert\Aset_G\vert}$ denotes the selection matrix obtained from the identity matrix $\bI_{\vert\Aset_G\vert}$ by retaining only the rows corresponding to the $g$-th group. To compute the derivative $\frac{\mathrm{d}\widehat{\bbeta}_{\Aset_G}}{\mathrm{d}\gamma}$, we apply the implicit function theorem to the equation:
\begin{equation*}
F(\widehat{\bbeta}_{\Aset_G}, \gamma) =-\bX_{\Aset_G}^{\T}\left(\by-\bX_{\Aset_G}\betahat_{\Aset_G}\right)+\gamma\widetilde{\bbeta}_{\Aset_G} =\boldsymbol{0}.
\end{equation*}
Differentiating $F(\widehat{\bbeta}_{\Aset_G}, \gamma)$ with respect to $\gamma$ and solving for $\frac{\mathrm{d}\widehat{\bbeta}_{\Aset_G}}{\mathrm{d} \gamma}$ gives:
\begin{equation}
\label{eq:IFT_Pi}
\frac{\mathrm{d}\widehat{\bbeta}_{\Aset_G}}{\mathrm{d}\gamma} = -\left( \frac{\partial F}{\partial \widehat{\bbeta}_{\Aset_G}} \right)^{-1}  \frac{\partial F}{\partial \gamma}\in\mathbb{R}^{\vert\Aset_G\vert},
\end{equation}
where
\begin{equation*}
\frac{\partial F}{\partial \widehat{\bbeta}_{\Aset_G}} = \bX_{\Aset_G}^{\T} \bX_{\Aset_G}+ \gamma \bPi_{\Aset_G}\in\mathbb{S}^{\vert\Aset_G\vert}_+,
\quad
\frac{\partial F}{\partial \gamma} =\widetilde{\bbeta}_{\Aset_G}\in\mathbb{R}^{\vert\Aset_G\vert},
\end{equation*}
and, substituting in Equation \eqref{eq:IFT_Pi}, we get the final expression for the derivative:
\begin{equation}
\label{eq:IFT_Pi_FINAL}
\frac{\mathrm{d}\widehat{\bbeta}_{\Aset_G}}{\mathrm{d}\gamma}
= -\left( \bX_{\Aset_G}^{\T} \bX_{\Aset_G} + \gamma \bPi_{\Aset_G} \right)^{-1}\widetilde{\bbeta}_{\Aset_G}=-\bR_{\Aset_G}\widetilde{\bbeta}_{\Aset_G}=-\bR_{\Aset_G}^\star\widehat{\bbeta}_{\Aset_G}\in\mathbb{R}^{\vert\Aset_G\vert}.
\end{equation}
where $\bR_{\Aset_G}^\star=\bR_{\Aset_G}\bW_{\Aset_G},\bW_{\Aset_G}=\diag\Big\{\frac{w_g}{\Vert\widehat{\bbeta}_{g}\Vert_2}\boldsymbol{\iota}_{n_g}\Big\}_{g\in\Aset_G}$. Moreover, letting $r_{\Aset_G} = \Vert \widehat{\bbeta}_{\Aset_G}\Vert_2$ and $r_g= \Vert \widehat{\bbeta}_{g}\Vert_2$, which are functions of $\gamma$, we aim to compute their derivative with respect to $\gamma$. Using the chain rule, we obtain the following:
\begin{equation*}
%\label{eq:drdgamma_step1}
\frac{\mathrm{d}r_{\Aset_G}}{\mathrm{d}\gamma}
= \frac{\mathrm{d}}{\mathrm{d}\gamma} \left( \Vert \widehat{\bbeta}_{\Aset_G} \Vert_2 \right)
= \frac{1}{\Vert \widehat{\bbeta}_{\Aset_G} \Vert_2}\widehat{\bbeta}_{\Aset_G}^{\T} \frac{\mathrm{d}\widehat{\bbeta}_{\Aset_G}}{\mathrm{d}\gamma}.
\end{equation*}
Substituting the expression for \( \frac{\mathrm{d}\widehat{\bbeta}_{\Aset_G}}{\mathrm{d}\gamma} \) from Equation~\eqref{eq:IFT_Pi_FINAL}, we get:
\begin{equation*}
\frac{\mathrm{d}r_{\Aset_G}}{\mathrm{d}\gamma}
= \frac{1}{\Vert \widehat{\bbeta}_{\Aset_G} \Vert_2}
\left( -\widehat{\bbeta}_{\Aset_G}^{\T} \bR_{\Aset_G} \widetilde{\bbeta}_{\Aset_G} \right)=-\frac{ \widehat{\bbeta}_{\Aset_G}^{\T} \bR_{\Aset_G} \widetilde{\bbeta}_{\Aset_G} }{ \Vert \widehat{\bbeta}_{\Aset_G} \Vert_2}=-\frac{ \widehat{\bbeta}_{\Aset_G}^{\T} \bR_{\Aset_G}^\star \widehat{\bbeta}_{\Aset_G} }{\Vert \widehat{\bbeta}_{\Aset_G} \Vert_2},
\end{equation*}
and
\begin{equation}
\label{eq:dr_gdgamma}
\frac{\mathrm{d}r_g}{\mathrm{d}\gamma}
= \frac{\mathrm{d}}{\mathrm{d}\gamma} \left( \Vert \widehat{\bbeta}_{g} \Vert_2 \right)
= \frac{1}{\Vert \widehat{\bbeta}_{g} \Vert_2}\widehat{\bbeta}_{g}^{\T} \frac{\mathrm{d}\widehat{\bbeta}_{g}}{\mathrm{d}\gamma}=-\frac{1}{\Vert \widehat{\bbeta}_{g} \Vert_2}\widehat{\bbeta}_{g}^{\T} \Big[\bR_{\Aset_G}^\star\betahat_{\Aset_G}\Big]_g.%=-\frac{1}{r_g}\bu^{\T}\bv.
\end{equation}
Substituting the expression for $\frac{\mathrm{d}\betahat_g}{\mathrm{d}\gamma}$ into Equation 
\eqref{eq:Pi_derivative}, we get:
\begin{equation}
\label{eq:Pi_derivative_final}
\frac{\mathrm{d}\bPi_{g}}{\mathrm{d}\gamma}
= -\frac{w_g}{r_g^2}\frac{\mathrm{d}r_g}{\mathrm{d}\gamma}\Big(\bI_{n_g}-\frac{3}{r_g^2} \widehat{\bbeta}_g \widehat{\bbeta}_g^{\T} \Big)+  \frac{w_g}{r_g^3} \Big(\Big[\bR_{\Aset_G}^\star\widehat{\bbeta}_{\Aset_G}\Big]_g \widehat{\bbeta}_g^{\T} + \widehat{\bbeta}_g \Big[\bR_{\Aset_G}^\star\widehat{\bbeta}_{\Aset_G}\Big]_g^{\T} \Big).
\end{equation}
We are interested in finding all the eigenvalues of the matrix $\bPi_{\Aset_G}+\gamma \mathrm{d}  \bPi_{\Aset_G}/\mathrm{d} \gamma$. Let $\bu_g = \widehat{\bbeta}_g,\quad \bv_g = \left[\bR_{\Aset_G}^\star \widehat{\bbeta}_{\Aset_G} \right]_g$, the eigenvalues of the $g$-th block of $\bPi_{\Aset_G}+\gamma \mathrm{d}  \bPi_{\Aset_G}/\mathrm{d} \gamma$ can be calculated using Lemma \ref{lemma:Pi_eigenvalues_all} in the Supplementary Materials. Specifically
\begin{align*}
\lambda_{1,g} &= \frac{1}{2}(2 a_g+b_g \Vert\bu_g\Vert_2^2+2 c_g \bv_g^{\T}\bu_g) + \frac{1}{2}\sqrt{\Delta_g}, \\
\lambda_{2,g} &= a_g, \qquad \text{with multiplicity } n_g - 2, \\
\lambda_{3,g} &= \frac{1}{2}(2 a_g+b_g \Vert\bu_g\Vert_2^2+2 c_g \bv_g^{\T}\bu_g) - \frac{1}{2}\sqrt{\Delta_g},
\end{align*}
where
\begin{equation}
\label{eq:abc_definition}
a_g=\gamma\frac{w_g}{r_g^3}\bu_g^{\T}\bv_g+\frac{w_g}{r_g},\qquad
b_g=-\Big(\gamma\frac{3w_g}{r_g^5}\bu_g^{\T}\bv_g+\frac{w_g}{r_g^3}\Big),\qquad
c_g=\gamma\frac{w_g}{r_g^3}>0,
\end{equation}
and
\begin{equation}
\label{eq:Pi_eigen_Delta}
\Delta_g=4c_g^2r_g^2\Vert\bv_g\Vert_2^2+b_g^2r_g^4+4b_gc_gr_g^2\bu_g^{\T}\bv_g=-\frac{3\gamma^2w_g^2}{r_g^6}(\bu_g^{\T}\bv_g)^2+\frac{2\gamma w_g^2}{r_g^4}\bu_g^{\T}\bv_g+\frac{4\gamma^2 w_g^2\Vert\bv_g\Vert_2^2}{r_g^4}+\frac{w_g^2}{r_g^2},
\end{equation}
for $g=1,\dots,G$. Moreover, $\Delta_g\geq0$, by Lemma \ref{lemma:Delta_sign} in the Supplementary Materials. Let us now consider the sum and product of the eigenvalues $\lambda_{1,g}$ and $\lambda_{3,g}$. We have:
\begin{equation}
\label{eq:lambda1_plus_lambda3_is_lambda2}
\begin{aligned}
\lambda_{1,g}+\lambda_{3,g}&=2 a_g+b_g \Vert\bu_g\Vert_2^2+2 c_g \bv_g^{\T}\bu_g\\
&=2\Big(\gamma\frac{w_g}{r_g^3}\bu_g^{\T}\bv_g+\frac{w_g}{r_g}\Big)-\Big(\gamma\frac{3w_g}{r_g^5}\bu_g^{\T}\bv_g+\frac{w_g}{r_g^3}\Big)r_g^2+2\gamma\frac{w_g}{r_g^3}\bu_g^{\T}\bv_g\\
&=\frac{\gamma w_g}{r_g^3}\bu_g^{\T}\bv_g+\frac{w_g}{r_g}=a_g=\lambda_{2,g},
\end{aligned}
\end{equation}
therefore $\lambda_{1,g}+\lambda_{3,g}\geq0$ if $\rho_{u,v}\geq-\frac{r_g}{\gamma\Vert\bv_g\Vert_2}$, where $\rho_{u,v}=\frac{\bu_g^{\T}\bv_g}{\Vert\bu_g\Vert_2\Vert\bv_g\Vert_2}$. As concerns the product of $\lambda_{1,g}$ and $\lambda_{3,g}$, we have:
\begin{equation*}
\begin{aligned}
\lambda_{1,g}\lambda_{3,g}&=\frac{1}{4}(2 a_g+b_g \Vert\bu_g\Vert_2^2+2 c_g \bv_g^{\T}\bu_g)^2 - \frac{1}{4}\Delta_g\\
&=a_g^2+\frac{1}{4}b_g^2r_g^4+c_g^2(\bu_g^{\T}\bv_g)^2+a_gb_gr_g^2+2a_gc_g(\bu_g^{\T}\bv_g)+b_gc_gr_g^2(\bu_g^{\T}\bv_g)\\
&\qquad-c_g^2r_g^2\Vert\bv_g\vert_2^2-\frac{1}{4}b_g^2r_g^4-b_gc_gr_g^2\bu_g^{\T}\bv_g\\
&= a_g^2+a_gb_gr_g^2-c_g^2\Vert\bu_g\Vert_2^2\Vert\bv_g\Vert_2^2(1-\rho_{u,v}^2)+2a_gc_g\bu_g^{\T}\bv_g.
\end{aligned}
\end{equation*}
Now, we substitute the values of $a_g$, $b_g$ and $c_g$ in Equation \eqref{eq:abc_definition} and we get:
\begin{equation*}
\begin{aligned}
\lambda_{1,g}\lambda_{3,g}&=\Big(\gamma\frac{w_g}{r_g^3}\bu_g^{\T}\bv_g+\frac{w_g}{r_g}\Big)^2-\Big(\gamma\frac{w_g}{r_g^3}\bu_g^{\T}\bv_g+\frac{w_g}{r_g}\Big)\Big(\gamma\frac{3w_g}{r_g^5}\bu_g^{\T}\bv_g+\frac{w_g}{r_g^3}\Big)r_g^2\\
&\qquad-\gamma^2\frac{w_g^2}{r_g^6}r_g^2\Vert\bv_g\Vert_2^2(1-\rho_{u,v}^2)+2\Big(\gamma\frac{w_g}{r_g^3}\bu_g^{\T}\bv_g+\frac{w_g}{r_g}\Big)\gamma\frac{w_g}{r_g^3}\bu_g^{\T}\bv_g\\
&=\frac{\gamma^2w_g^2}{r_g^6}\Vert\bu_g\Vert_2^2\Vert\bv_g\Vert_2^2\big(\rho_{u,v}^2-1\big).
\end{aligned}
\end{equation*}
Therefore, $\lambda_{1,g}\lambda_{3,g}<0$ and the matrix $\bPi_{\Aset_G}+\gamma\mathrm{d}\bPi_{\Aset_G}/\mathrm{d}\gamma$ is indefinite, unless $\rho_{u,v}=1$ (e.g. $\bu_g\propto\bv_g$) when it is equal to $\lambda_{1,g}\lambda_{3,g}=0$. In particular, when $\bX^{\T}\bX=\bI_p$ then $\rho_{u,v}=1$ (see Corollary \ref{coro:dbeta_dgamma_orto} in Supplementary Materials) and $\lambda_{3,g}=\frac{1}{2}(2a_g+b_g r_g^2+2cv\bu_g^{\T}\bv_g)-\frac{1}{2} \sqrt{\Delta_g}$ where $\Delta_g$ is defined in Equation \eqref{eq:Pi_eigen_Delta} as:
\begin{equation*}
\begin{aligned}
\Delta_{g,\textrm{ortho}}&=-\frac{3\gamma^2w_g^2}{r_g^6}\rho_{u,v}^2r_g^2\Vert\bv_g\Vert_2^2+\frac{2\gamma w_g^2}{r_g^4}\rho_{u,v} r_g\Vert\bv_g\Vert_2+\frac{4\gamma^2w_g^2\Vert\bv_g\Vert_2^2}{r_g^4}+\frac{w_g^2}{r_g^2}\\
&=\frac{\gamma^2w_g^2\Vert\bv_g\Vert_2^2}{r_g^4}+\frac{2\gamma w_g^2\Vert\bv_g\Vert_2}{r_g^3}+\frac{w_g^2}{r_g^2}=\Bigg(\frac{\gamma w_g\Vert\bv_g\Vert_2}{r_g^2}+\frac{w_g}{r_g}\Bigg)^2=a_g^2.
\end{aligned}
\end{equation*}
Moreover, by Equation \eqref{eq:lambda1_plus_lambda3_is_lambda2}, the first part of $\lambda_{3,g}$, $2a_g+b_g r_g^2+2c_g\bu_g^{\T}\bv_g=a_g$, therefore $\lambda_{3,g}=a-\sqrt{a^2}=0$ and $\lambda_{1,g}=\lambda_{2,g}$. Also, $\bu_g^{\T}\bv_g=\Vert\bu_g\Vert_2\Vert\bv_g\Vert_2$ and $a_g$ defined in Equation \eqref{eq:abc_definition} is strictly positive, and the matrix $\bPi_{g} + \gamma \mathrm{d}\bPi_{g}/\mathrm{d}\gamma$ is positive semidefinite. To check if the orthonormal design is the only setting leading to $\rho_{u,v}=1$, we observe that $\rho_{u,v}=1$ implies $\bR_{\Aset_G}^\star\widehat{\bbeta}_{\Aset_G} = c \widehat{\bbeta}_{\Aset_G}$, which in turn means that $\widehat{\bbeta}_{\Aset_G}$ is an eigenvector of the matrix $\bR^*_{\Aset_G} = \bR_{\Aset_G}\bW_{\Aset_G} = (\bX_{\Aset_G}^{\T} \bX_{\Aset_G} + \gamma \bPi_{\Aset_G})^{-1} \bW_{\Aset_G}$, which completes the proof.
\end{proof}
%
%:::::::::::::::::::::::::::::::::::::
% proof OF THEOREM 6
%:::::::::::::::::::::::::::::::::::::
\begin{proof}[Proof of Theorem \ref{th:df_adaglasso_general_weights}]
The first order conditions for the problem in Equation \eqref{eq:convex_regularized_problem_adaglasso} are represented by
\begin{equation*}
-\bX_{\Aset_G}^{\T}\left(\by-\bX_{\Aset_G}\betahat_{\Aset_G}\right)+\gamma \bW_{\Aset_G}\Breve{\bbeta}_{\Aset_G}=\boldsymbol{0} 
\end{equation*}
where $\bX_{\Aset_G}, \bW_{\Aset_G}$ and $\widehat{\bbeta}_{\Aset_G}$ are restricted to the active set $\Aset_G$, and $\Breve{\bbeta}_{\Aset_G} = (\widehat{\bbeta}_1 / \lVert \widehat{\bbeta}_1 \rVert_2, %\ldots, \widehat{\bbeta}_g / \lVert \widehat{\bbeta}_g \rVert_2, 
\ldots, \widehat{\bbeta}_G / \lVert \widehat{\bbeta}_G \rVert_2)$. Similarly to Lasso and Adaptive Lasso, by manipulating terms, focusing on the orthonormal design, we arrive to
\begin{equation*}
\betahat_{\Aset_G} = (\bX_{\Aset_G}^{{\T}} \bX_{\Aset_G})^{-1} \big(\bX_{\Aset_G}^{\T} \by - \gamma\bW_{\Aset_G}\Breve{\bbeta}_{\Aset_G} \big) = \bX_{\Aset_G}^{\T} \by - \gamma\bW_{\Aset_G}\Breve{\bbeta}_{\Aset_G},
\end{equation*}
and if we denote
\begin{equation*}
\bH_\gamma(\by)=\bX_{\Aset_G} \bX_{\Aset_G}^{\T},\qquad
\eta_\gamma(\by)=\bX_{\Aset_G} \bW_{\Aset_G}\Breve{\bbeta}_{\Aset_G},
\end{equation*}
we can also derive the equation linking $\widehat{\by}_\gamma$ and $\by$
\begin{equation*}
\widehat{\by}_\gamma=\bX\betahat=\bH_\gamma(\by)\by-\gamma\eta_\gamma(\by).
\end{equation*}
The general approach is to start again from Equation \eqref{eq:H_minus_eta}, and derive an expression similar to Equation \eqref{eq:I_plus_eta_inverse_H}. However,
in Adaptive Group Lasso we see that both $\bW_{\Aset_G}$ ---like Adaptive Lasso--- and $\breve{\bbeta}_{\Aset_G}$ ---like Group Lasso--- are not constant with respect to $\by$, thus a slight different approach should be employed to find the gradient of $\eta_\gamma(\by)$. In particular we make use of the product rule for differentiation in the following way. Let $\eta_0 = \bX_{\Aset_G}(\bX_{\Aset_G}^{{\T}} \bX_{\Aset_G})^{-1} $, $\eta_1(\by) = \bW_{\Aset_G}(\by)$ and $\eta_2(\by) = \Breve{\bbeta}_{\Aset_G}(\by)$ and rewrite $\eta(\by) = \eta_0 \eta_1(\by)\eta_2(\by)$. Then,
\begin{equation*}
\frac{\partial \eta_\gamma(\by)}{\partial \by} = \eta_0 \bigg \{ \bigg[ \frac{\partial \eta_1(\by)}{\partial \by} \bigg] \eta_2(\by) + \eta_1(\by) \frac{\partial \eta_2(\by)}{\partial \by} \bigg\},
\end{equation*}
where the former multiplication represents a tensor–vector contraction on the second axis, and the latter a standard matrix multiplication. By proceeding with similar arguments to those in the proof of Theorem \ref{th:df_glasso_ortho} we have
\begin{equation*}
\begin{aligned}
\frac{\partial \widehat{\by}_\gamma}{\partial \by}&=\bH_\gamma(\by)-\gamma \frac{\partial\eta_\gamma(\by)}{\partial\by}= \bH_\gamma(\by)-\gamma \eta_0 \bigg \{ \bigg[ \frac{\partial \eta_1(\by)}{\partial \by} \bigg] \eta_2(\by) + \eta_1(\by) \frac{\partial \eta_2(\by)}{\partial \by} \bigg\},\\
&= \bH_\gamma(\by)-\gamma \eta_0  \bigg[ \frac{\partial \eta_1(\by)}{\partial \by} \bigg] \eta_2(\by) -\gamma \eta_0 \eta_1(\by) \frac{\partial \eta_2(\by)}{\partial \by},\\
&= \bH_\gamma(\by)-\gamma \eta_0  \bigg[ \frac{\partial \eta_1(\by)}{\partial \by} \bigg] \eta_2(\by) -\gamma \eta_0 \eta_1(\by) \frac{\partial \eta_2(\by)}{\partial \widehat{\by}}\frac{\partial \widehat{\by}_\gamma}{\partial\by},
\end{aligned}
\end{equation*}
and, by isolating the quantity of interest, we obtain
\begin{equation}
\label{eq:I_plus_eta2_inverse_H_minus_eta1}
\frac{\partial \widehat{\by}_\gamma}{\partial\by}=\left(\bI_n+\gamma \eta_0 \eta_1(\by) \frac{\partial\eta_2(\by)}{\partial \widehat{\by}}\right)^{-1} \bigg(\bH_\gamma(\by) - \gamma \eta_0 \bigg[ \frac{\partial \eta_1(\by)}{\partial \by} \bigg] \eta_2(\by) \bigg). 
\end{equation}
The computation of $\partial \eta_2(\by)/\partial \widehat{\by}$ is analogous to that in the proof of Theorem \ref{th:df_glasso_ortho}. In order to compute the new part we observe that $\bW_{\Aset_G}$ is diagonal and relevant simplifications apply. Specifically, we have
\begin{equation*}
\bigg[ \frac{\partial \eta_1(\by)}{\partial \by} \bigg] \eta_2(\by) = \textrm{diag}(\Breve{\bbeta}_{\Aset_G})\frac{\partial \bW_{\Aset_G}(\by)}{\partial \by} = \sum_{g \in \Aset_G} \breve{\bbeta}_g \frac{\partial w_g(\by)}{\partial \by}.
\end{equation*}
The last step is to manipulate the gradient of weights through the chain rule, recalling the assumption $w_g(\by)=w_g(\lVert \widehat{\bbeta}_g^\mathsf{LS} \rVert_2)$.
\begin{equation*}
\frac{\partial w_g(\by)}{\partial \by} =  \frac{\partial w_g(\lVert \widehat{\bbeta}_g^\mathsf{LS} \rVert_2)}{\partial \by} = \frac{\partial w_g(\lVert \widehat{\bbeta}_g^\mathsf{LS} \rVert_2)}{\partial \lVert \widehat{\bbeta}_g^\mathsf{LS} \rVert_2} \frac{\partial  \lVert \widehat{\bbeta}_g^\mathsf{LS} \rVert_2}{(\partial \widehat{\bbeta}_g^\mathsf{LS})^{\T}}\frac{\partial  (\widehat{\bbeta}_g^\mathsf{LS})^{\T}}{\partial \by} = \frac{\partial w_g(\lVert \widehat{\bbeta}_g^\mathsf{LS} \rVert_2)}{\partial \lVert \widehat{\bbeta}_g^\mathsf{LS} \rVert_2} \frac{(\widehat{\bbeta}_g^\mathsf{LS})^{\T}}{\lVert\widehat{\bbeta}_g^\mathsf{LS}\rVert_2}\bX_g^{\T},
\end{equation*}
thus 
\begin{equation*}
\eta_0  \bigg[ \frac{\partial \eta_1(\by)}{\partial \by} \bigg] \eta_2(\by) = \bX_{\Aset_G} \bPhi_{\Aset_G} \bX_{\Aset_G}^{\T},
\end{equation*}
where 
\begin{equation*}
    \bPhi_{\Aset_G} = \text{blockdiag}\bigg( \frac{\widehat{\bbeta}_g}{\lVert \widehat{\bbeta}_g \rVert_2}\frac{\partial w_g(\lVert \widehat{\bbeta}_g^\mathsf{LS} \rVert_2)}{\partial \lVert \widehat{\bbeta}_g^\mathsf{LS} \rVert_2} \frac{(\widehat{\bbeta}_g^\mathsf{LS})^{\T}}{\lVert\widehat{\bbeta}_g^\mathsf{LS}\rVert_2}  \bigg)
\end{equation*}
To compute degrees of freedom, from Equation \eqref{eq:I_plus_eta2_inverse_H_minus_eta1} we write
\begin{equation*}
\begin{aligned}
\widehat{df}_\gamma &= \trace \bigg(\frac{\partial \widehat{\by}_\gamma}{\partial\by} \bigg) = \trace \bigg[\left(\bI_n+\gamma \eta_0 \eta_1(\by) \frac{\partial\eta_2(\by)}{\partial \widehat{\by}}\right)^{-1} \bigg(\bH_\gamma(\by) - \gamma \eta_0 \bigg[ \frac{\partial \eta_1(\by)}{\partial \by} \bigg] \eta_2(\by) \bigg) \bigg]\\
&= \trace \bigg[ \bigg( \bI_n + \gamma \bX_{\Aset_G} \bPi_{\Aset_G} \bX_{\Aset_G}^{\T} \bigg)^{-1} \bigg( \bX_{\Aset_G} \bX_{\Aset_G}^{\T} - \gamma \bX_{\Aset_G} \bPhi_{\Aset_g} \bX_{\Aset_G}^{\T}\bigg) \bigg],
\end{aligned}
\end{equation*}
completing the proof for the orthonormal case. For non-orthonormal designs we have $\bH_\gamma(\by)=\bX_{\Aset_G}(\bX_{\Aset_G}^{{\T}} \bX_{\Aset_G})^{-1} \bX_{\Aset_G}^{\T}$ and $
\eta_\gamma(\by)=\bX_{\Aset_G}(\bX_{\Aset_G}^{{\T}} \bX_{\Aset_G})^{-1} \bW_{\Aset_G}\Breve{\bbeta}_{\Aset_G}$. Previous computations can be done in a similar manner by considering $\eta_0 = \bX_{\Aset_G}(\bX_{\Aset_G}^{{\T}} \bX_{\Aset_G})^{-1}$, directly leading to the result.
\end{proof}
%
%:::::::::::::::::::::::::::::::::::::
% proof OF COROLLARY 6
%:::::::::::::::::::::::::::::::::::::
\begin{proof}[Proof of Corollary \ref{th:df_adaglasso_ortho}]
Under the setting of Theorem \ref{th:df_adaglasso_general_weights} and the orthonormal design assumption, we have that and $\widehat{\bbeta}^{\mathsf{LS}} = \bX^{\T} \by$ and the generic term of the matrix $\bW$ is given by $w_{gg} = w_g(\lVert\widehat{\bbeta}_g^\mathsf{LS}\rVert_2) =  1/\lVert\widehat{\bbeta}^{\mathsf{LS}}_g \rVert_2 =1/\lVert\bX_g^{\T} \by\rVert_2$. By applying the result of Theorem \ref{th:df_adaglasso_general_weights}
in the orthonormal case we have 
\begin{equation*}
\frac{\partial w_g(z)}{\partial z} \bigg|_{z=\widehat{\bbeta}_g^\mathsf{LS}} = \frac{\partial}{\partial z} \bigg( \frac{1}{z}\bigg)\bigg|_{z=\widehat{\bbeta}_g^\mathsf{LS}} =  - \frac{1}{z^2}\bigg|_{z=\widehat{\bbeta}_g^\mathsf{LS}} =  - \frac{1}{\lVert \widehat{\bbeta}_g^\mathsf{LS}\rVert_2^2},
\end{equation*}
which concludes the proof.
\end{proof}
%
%:::::::::::::::::::::::::::::::::::::
% proof OF COROLLARY 7
%:::::::::::::::::::::::::::::::::::::
\begin{proof}[Proof of Corollary \ref{th:df_adaglasso_nonortho}]
Under the setting of Theorem \ref{th:df_adaglasso_general_weights}, we have $\widehat{\bbeta}^{\mathsf{LS}} = (\bX^{\T}\bX)^{-1}\bX^{\T} \by$ and the generic term of the matrix $\bW$ is given by $w_{gg} = w_g(\lVert\widehat{\bbeta}_g^\mathsf{LS}\rVert_2) = 1/\lVert\widehat{\bbeta}^{\mathsf{LS}}_g\rVert_2 =1/\lVert\bp_g^{\T} \by\rVert_2$,
where $\bp_g\in\mathbb{R}^n$ is the matrix $\bP=\bX(\bX^{\T}\bX)^{-1}\in\mathbb{R}^{n\times p}$ restricted to the $g$-th group. By applying the result of Theorem \ref{th:df_adaglasso_general_weights} in the non-orthonormal case we have
\begin{equation*}
\frac{\partial w_g(z)}{\partial z} \bigg|_{z=\widehat{\bbeta}_g^\mathsf{LS}} = \frac{\partial}{\partial z} \bigg( \frac{1}{z}\bigg)\bigg|_{z=\widehat{\bbeta}_g^\mathsf{LS}} =  - \frac{1}{z^2}\bigg|_{z=\widehat{\bbeta}_g^\mathsf{LS}} =  - \frac{1}{\lVert\widehat{\bbeta}_g^\mathsf{LS}\rVert_2^2},
\end{equation*}
which concludes the proof.
\end{proof}
%
%:::::::::::::::::::::::::::::::::::::
% proof OF COROLLARY 8
%:::::::::::::::::::::::::::::::::::::
\begin{proof}[Proof of Corollary \ref{coro:bounds_df_adaglasso}]
We show that
\begin{align*}
\trace \big[\big( \bI_p + \gamma \bB\big)^{-1} \bA\big] &\leq \trace \big[\big( \bI_p + \gamma \bB\big)^{-1} \big(\bA - \gamma \bC \big)\big]=\trace \big[\big( \bI_p + \gamma \bB\big)^{-1} \bA \big] - \gamma \trace \big[\big( \bI_p + \gamma \bB\big)^{-1} \bC \big].
\end{align*}
Since $\gamma > 0$, $\bI_p$ and $\bB$ are positive definite, the previous inequality is true if $\bC$ is negative semidefinite, and thus if the quantity
\begin{equation*}
\bPhi_{\Aset_G} = \mathrm{blockdiag} \bigg(  \frac{\widehat{\bbeta}_g}{\lVert \widehat{\bbeta}_g \rVert_2 } \frac{\partial w_g(\lVert \widehat{\bbeta}^\mathsf{LS}\rVert_2)}{\partial \lVert\widehat{\bbeta}^\mathsf{LS} \rVert_2}\frac{(\widehat{\bbeta}^\mathsf{LS})^{\T}}{\lVert \widehat{\bbeta}^\mathsf{LS} \rVert_2}\bigg),
\end{equation*}
is negative semidefinite. By examination of this matrix, we conclude that both $\lVert\widehat{\bbeta}_g \rVert_2$ and $\lVert\widehat{\bbeta}_{g}^{\mathsf{LS}}\rVert_2$ are positive, $\partial w_g(\lVert \widehat{\bbeta}^\mathsf{LS}\rVert_2) / \partial \lVert\widehat{\bbeta}^\mathsf{LS} \rVert_2 $ is a negative scalar by assumption and the matrix $\widehat{\bbeta}_g(\widehat{\bbeta}_{g}^{\mathsf{LS}})^{\T}$ is positive semidefinite only if $(\widehat{\bbeta}_{g}^{\mathsf{LS}})^{\T}\widehat{\bbeta}_g \geq 0$.
\end{proof}

\end{appendix}

\bigskip 

\bibliographystyle{apalike} 
\bibliography{refs} 

%%%%%%%%%%%%%%%%%%%%%%%%%%%%%%%%%%%%%%%%%%%%%%%%%%%%
%                                                  %
%           LaTeX file for the supplementary materials of the paper:              %
%                                                  
%    Fast QR updating methods for statistical applications    %
%                                                                                               
%   by M. Bernardi, C. Busatto and M. Cattelan   % 
%                                                  % 
%                                                  % 
%%%%%%%%%%%%%%%%%%%%%%%%%%%%%%%%%%%%%%%%%%%%%%%%%%%%
%
%\documentclass[11pt]{article}
%
%\usepackage{xr}
%
%\input{00_preamble}
%\externaldocument{0_main}

%\usepackage[a4paper,top=2cm,bottom=4cm,left=2.5cm,right=2.5cm,marginparwidth=1.75cm]{geometry}

%\usepackage{amssymb}

%\input{def/def_commands.tex}
%\input{def/def_letters.tex}
%\input{def/def_theorems.tex}

%::::::::::::::::::::::::::::::::::::::::::
% EXTERNAL DOCUMENT LINK
%::::::::::::::::::::::::::::::::::::::::::

%\newpage
%::::::::::::::::::::::::::::::::::::::::::
% Begin document
%::::::::::::::::::::::::::::::::::::::::::
%\begin{document}
\newpage
\renewcommand{\theequation}{S.\arabic{equation}}
\renewcommand{\thesection}{S.\arabic{section}}
\renewcommand{\thetable}{S.\arabic{table}}
\renewcommand{\theproposition}{S.\arabic{proposition}}
\renewcommand{\thetheorem}{S.\arabic{theorem}}
\renewcommand{\thecorollary}{S.\arabic{corollary}}

\setcounter{equation}{0}
\setcounter{table}{0}
\setcounter{section}{0}
\setcounter{proposition}{0}
\setcounter{corollary}{0}
\setcounter{theorem}{0}
\setcounter{page}{1}
\setcounter{footnote}{0}

%::::::::::::::::::::::::::::::::::::::::::
\begin{center}
{\LARGE Supplementary materials for:
\vskip3mm
``Degrees of Freedom in Penalized Regression:\\ Model Selection with Adaptive Penalties''}
\vskip7mm
M. Bernardi$\null^{1}$, A. Canale$\null^{1}$ and M. Stefanucci$\null^{2}$
	\vskip5mm
	\centerline{\textit{$\null^1$Department of Statistics,
University of Padova}}
\vskip1mm
\centerline{\textit{ $\null^2$Department of Economics and Finance,
University of Rome Tor Vergata}}
\vskip6mm
\end{center}
%::::::::::::::::::::::::::::::::::::::::::

% ::::::::::::::::::::::::::::::::
% Description of the supplementary materials
% ::::::::::::::::::::::::::::::::
\noindent These supplementary materials are organized as follows.  
Section~\ref{sec:appendix_additional_results} provides additional results together with their proofs, Section~\ref{sec:technical_app} collects several technical lemmas and auxiliary results used throughout the paper, and Section~\ref{sec:appendix_empirical_p>n} presents some empirical result in the $n<p$ setting.
%
%::::::::::::::::::::::::::::::::::::::::::
% Begin appendices
%\appendix
%::::::::::::::::::::::::::::::::::::::::::
%
%\input{STATSCI/5_appendix_A}
\section{Additional results}
\label{sec:appendix_additional_results}
%
%:::::::::::::::::::::::::::::::::::::
% COROLLARY 9
%:::::::::::::::::::::::::::::::::::::
\begin{corollary}
\label{coro:df_adalasso_ortho_exp}
Let $\bX^{\T}\bX=\bI_p$, $\betahat$ the solution to the Adaptive Lasso problem in Equation \eqref{eq:convex_regularized_problem_adalasso} with weights equal to $w_j = \exp(-\alpha|\widehat{\beta}_j^{\mathsf{LS}}|)$ and $\gamma \in (\gamma_l, \gamma_{l+1})$. Denote with $\Aset$ the corresponding active set. An unbiased estimate of the degrees of freedom is
\begin{equation*}
\label{eq:dof_adalasso_ortho_exp}
\widehat{df}_\gamma = \vert\Aset\vert+\gamma\sum_{j\in\Aset} \frac{ \alpha}{\eexp(\alpha|\widehat{\beta}_j^{\mathsf{LS}}|)}.
\end{equation*}
\end{corollary}
\begin{proof}
Under the setting of Theorem \ref{th:df_adalasso_general_weights} and the orthonormal design assumption, we have that $\bH_\gamma(\by) = \bX_{\Aset} \bX_{\Aset}^{\T}$, $\eta_\gamma(\by) = \bX_{\Aset}\bW_{\Aset}(\by)\sgn(\betahat_{\Aset})$ and $\widehat{\bbeta}^{\mathsf{LS}} = \bX^{\T} \by$. Moreover, the matrix $\bW$ is given by
\begin{equation*}
\begin{aligned}
\bW=\mathrm{diag}\Big(\frac{1}{\eexp(\alpha\vert\betahat^{\mathsf{LS}}\vert)}\Big)=\mathrm{diag}\Big(\frac{1}{\eexp(\alpha \vert\bX^{\T}\by\vert)}\Big),
\end{aligned}
\end{equation*}
with generic term being $w_{jj} = w_j(|\widehat{\beta}_j^\mathsf{LS}|) =  1/\eexp(\alpha|\widehat{\beta}^{\mathsf{LS}}_j|) =1/\eexp(\alpha|\bx_j^{\T} \by|)$. By applying the result of Theorem \ref{th:df_adalasso_general_weights}
in the orthonormal case we have 
\begin{equation*}
\frac{\partial w_j(z)}{\partial z} \bigg|_{z=\widehat{\beta}_j^\mathsf{LS}} = \frac{\partial}{\partial z} \bigg( \frac{1}{\eexp(\alpha z)}\bigg)\bigg|_{z=\widehat{\beta}_j^\mathsf{LS}} =  - \frac{\alpha}{\eexp(\alpha z)}\bigg|_{z=\widehat{\beta}_j^\mathsf{LS}} =  - \frac{\alpha}{\eexp(\alpha|\widehat{\beta}_j^\mathsf{LS}|)},
\end{equation*}
which concludes the proof.
\end{proof}
%
%:::::::::::::::::::::::::::::::::::::
% COROLLARY 10
%:::::::::::::::::::::::::::::::::::::
\begin{corollary}
\label{coro:df_adalasso_nonortho_exp}
Let $\betahat$ be the solution to the Adaptive Lasso problem in Equation \eqref{eq:convex_regularized_problem_adalasso} with weights equal to $w_j = \exp(-\alpha|\widehat{\beta}_j^{\mathsf{LS}}|)$ and $\gamma \in (\gamma_l, \gamma_{l+1})$. Denote with $\Aset$ the corresponding active set. An unbiased estimate of the degrees of freedom is
\begin{equation*} 
\label{eq:dof_adalasso_nonortho_exp}
\widehat{df}_\gamma=\vert\Aset\vert+\gamma\alpha\sum_{j\in\Aset} \sgn(\widehat{\beta}_{j}) \frac{ \sgn(\widehat{\beta}_j^{\mathsf{LS}})}{\eexp(\alpha|\widehat{\beta}_j^{\mathsf{LS}}|)} [(\bX_{\Aset}^{\T} \bX_{\Aset} )^{-1}]_{\pi(j),\pi(j)}.
\end{equation*}
\end{corollary}
\begin{proof}
Under the setting of Theorem \ref{th:df_adalasso_general_weights}, we have $\widehat{\bbeta}^{\mathsf{LS}} = (\bX^{\T}\bX)^{-1}\bX^{\T} \by$ and
\begin{equation*}
\begin{aligned}
\bW=\mathrm{diag}\Big(\frac{1}{\eexp(\alpha\vert\betahat^{\mathsf{LS}}\vert)}\Big)=\mathrm{diag}\Big(\frac{1}{\eexp(\alpha\vert(\bX^{\T}\bX)^{-1}\bX^{\T}\by\vert)}\Big),
\end{aligned}
\end{equation*}
with generic term being $w_{jj} = w_j(|\widehat{\beta}_j^\mathsf{LS}|) = 1/\eexp(\alpha|\widehat{\beta}^{\mathsf{LS}}_j|) =1/\eexp(\alpha|\bp_j^{\T} \by|)$, where $\bp_j\in\mathbb{R}^n$ is the $j$-th column of $\bP=\bX(\bX^{\T}\bX)^{-1}\in\mathbb{R}^{n\times p}$. By applying the result of Theorem \ref{th:df_adalasso_general_weights} in the non-orthonormal case we have
\begin{equation*}
\frac{\partial w_j(z)}{\partial z} \bigg|_{z=\widehat{\beta}_j^\mathsf{LS}} = \frac{\partial}{\partial z} \bigg( \frac{1}{\eexp(\alpha z)}\bigg)\bigg|_{z=\widehat{\beta}_j^\mathsf{LS}} =  - \frac{\alpha}{\eexp(\alpha z)}\bigg|_{z=\widehat{\beta}_j^\mathsf{LS}} =  - \frac{\alpha}{\eexp(\alpha|\widehat{\beta}_j^\mathsf{LS}|)},
\end{equation*}
which concludes the proof.
\end{proof}
%
%:::::::::::::::::::::::::::::::::::::
% COROLLARY 11
%:::::::::::::::::::::::::::::::::::::
\begin{corollary}
\label{coro:similarexpressions}
Under the settings of Theorem \ref{th:df_glasso_ortho}, the following are equivalent 
\begin{align*}
\label{eq:dof_glasso_ortho_simple}
&\widehat{df}_\gamma = |\Aset_G| + \sum_{g \in \Aset_G} \frac{n_g-1}{1+\gamma\frac{w_g}{\lVert \bbeta_g\rVert_2}}  \\
&\widehat{df}_\gamma = |\Aset_p| - \sum_{g \in \Aset_G} (n_g-1)\frac{\gamma w_g / \lVert \bbeta_g \rVert_2}{1+\gamma w_g / \lVert \bbeta_g \rVert_2}.
\end{align*}
\end{corollary}
\begin{proof}
The eigenvalues of $\bA$ and $\bB$ are, respectively,  
$$ \lambda(\bA)= \big(\underbrace{1, \ldots, 1}_{|\Aset_p|}, \underbrace{0, \ldots, 0}_{n-|\Aset_p|}\big), \quad \lambda(\bB) = \bigg(\bigcup_{g \in \Aset} \bigg\{ \underbrace{\frac{w_g}{\lVert \widehat{\bbeta}_g \rVert_2}, \ldots, \frac{w_g}{\lVert \widehat{\bbeta}_g \rVert_2}}_{n_g-1}, 0 \bigg\}, \underbrace{0, \ldots, 0 }_{n-|\Aset_p|}\bigg),$$
thus $\bB$ has $|\Aset_p| - |\Aset_G|$ eigenvalues greater than zero. We have that 
\begin{align*}
&\trace ((\bI_n+\gamma\bB)^{-1}\bA) =\sum_{i=1}^n \frac{\lambda_i^A}{1+\gamma\lambda_i^B} = \sum_{g \in \Aset_G} \sum_{j=1}^{n_g} \frac{\lambda_{n_{g-1}+j}^A}{1+\gamma\lambda_{n_{g-1}+j}^B} = \sum_{g \in \Aset_G} \bigg\{\sum_{j=1}^{n_g-1} \frac{1}{1+\gamma \frac{w_g}{\lVert \widehat{\bbeta}_g \rVert_2}} + 1 \bigg\}\\ 
&=|\Aset_G| + \sum_{g \in \Aset_G}\sum_{j =1}^{n_g-1} \frac{1}{1+\gamma \frac{w_g}{\lVert \widehat{\bbeta}_g \rVert_2}}=
|\Aset_G| + \sum_{g \in \Aset_G} \frac{n_g-1}{1+\gamma\frac{w_g}{\lVert \widehat{\bbeta}_g\rVert_2} },
\end{align*}
which is the result presented in \cite{yuan:2006:glasso}. By contrast,
\begin{align*}
\trace ((\bI_n+\gamma\bB)^{-1}\bA) &= \trace (\bA - \gamma\bB(\bI_n+\gamma\bB)^{-1}\bA)\\ &= \trace (\bA) - \trace (\gamma\bB(\bI_n+\gamma\bB)^{-1}\bA)\\
&=|\Aset_p| - \sum_{i=1}^n \frac{\gamma \lambda_i^B}{1+\gamma\lambda_i^B} \lambda_i^A = |\Aset_p| - \sum_{g \in \Aset_G} \sum_{j=1}^{n_g} \frac{\gamma \lambda_{n_{g-1}+j}^B}{1+\gamma\lambda_{n_{g-1}+j}^B}\lambda_{n_{g-1}+j}^A\\  
&=|\Aset_p| - \sum_{g \in \Aset_G} \sum_{j=1}^{n_g-1} \frac{\gamma w_g / \lVert \widehat{\bbeta}_g \rVert_2}{1+\gamma w_g / \lVert \widehat{\bbeta}_g \rVert_2}= |\Aset_p| - \sum_{g \in \Aset_G} (n_g-1)\frac{\gamma w_g / \lVert \widehat{\bbeta}_g \rVert_2}{1+\gamma w_g / \lVert \widehat{\bbeta}_g \rVert_2},
\end{align*}
which is the result presented in \cite{vaiter_etal.2017}. Similar simplifications under the non orthonormal setting are not easy to derive, as this proof is based on the eigenvalues of the matrix $\bB$ which are unknown unless $\bX^{\T}\bX=\bI_p$. 
\end{proof}
%
%:::::::::::::::::::::::::::::::::::::
% COROLLARY 12
%:::::::::::::::::::::::::::::::::::::
\begin{corollary}
\label{coro:group1}
Under the settings of Theorems \ref{th:df_glasso_ortho} and \ref{th:df_glasso_nonortho}, if $n_g=1$ for all $g \in \mathcal{G}_G$ we have $\widehat{df} = |\Aset|$, recovering the Lasso result. 
\end{corollary}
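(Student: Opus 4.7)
The plan is to substitute $n_g = 1$ directly into the degrees-of-freedom formulas provided by Theorems~\ref{th:df_glasso_ortho} and~\ref{th:df_glasso_nonortho} and verify that each collapses to $|\Aset|$.

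For the orthonormal case, the cleanest starting point is the simplified expression already derived in Corollary~\ref{coro:similarexpressions},
\[
\widehat{df}_\gamma \;=\; |\Aset_G| + \sum_{g \in \Aset_G} \frac{n_g - 1}{1 + \gamma\, w_g / \|\widehat{\bbeta}_g\|_2}.
\]
With $n_g = 1$ for every group, each summand is identically zero. Moreover, singleton groups are in bijection with their unique active variable, so $|\Aset_G| = |\Aset_p| = |\Aset|$, and the conclusion is immediate.

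For the non-orthonormal case, I would revisit the trace form of the df expression given in Theorem~\ref{th:df_glasso_nonortho}. Its hat matrix is built from $\bX_{\Aset}^{\T}\bX_{\Aset}$ regularized by a block-diagonal contribution whose within-group blocks encode the curvature of $\|\bbeta_g\|_2$; since the Hessian of the Euclidean norm on a one-dimensional group vanishes away from the origin (equivalently, the second derivative of $|\beta_j|$ is zero for $\beta_j \neq 0$), each such block vanishes under $n_g = 1$. The hat matrix therefore reduces to the ordinary projector $\bX_{\Aset}(\bX_{\Aset}^{\T}\bX_{\Aset})^{-1}\bX_{\Aset}^{\T}$, whose trace equals $|\Aset|$.

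The only conceptual step beyond direct substitution is recognizing why the non-orthonormal formula simplifies: the penalty-curvature contribution has effective rank $n_g - 1$ per active group and therefore contributes nothing when all groups are singletons. Once this is noticed, the reduction to the classical Lasso degrees-of-freedom result is automatic, and no further computation is required.
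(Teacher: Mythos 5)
Your proposal is correct and follows essentially the same route as the paper: the orthonormal case is handled by substituting $n_g=1$ into the expression of Corollary~\ref{coro:similarexpressions} so the sum vanishes and $|\Aset_G|=|\Aset_p|=|\Aset|$, and the non-orthonormal case by observing that the penalty-curvature blocks $\bPi_g$ are identically zero for singleton groups (exactly the paper's remark that $\lVert\bbeta_g\rVert_2^2=\bbeta_g^{\T}\bbeta_g=\bbeta_g\bbeta_g^{\T}$ makes $\bPi_\Aset$ null), so the hat matrix collapses to the projector onto the active columns with trace $|\Aset|$.
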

\begin{proof}
In such circumstances the problem in Equation \eqref{eq:convex_regularized_problem_glasso} is equivalent to the problem in Equation \eqref{eq:convex_regularized_problem_adalasso} with fixed weights, $G=p$, $\lVert \bbeta_g \rVert_2 = |\beta_g|$ and $|\Aset_G| = |\Aset_p|$. If $\bX^{\T}\bX=\bI_p$, starting from Corollary 3 we have that $n_g=1$, and the result is $\widehat{df} = \lvert \Aset \rvert$, which is the number of active groups and active variables at the same time. For the non-orthogonal case, it turns out that $\lVert\bbeta_g\rVert_2^2 = \bbeta_g^{\T}\bbeta_g = \bbeta_g \bbeta_g^{\T}$ and $\bPi_\Aset$ is null, and we have again that $\widehat{df} = \trace(\bA) = \trace(\bH_\gamma) = \lvert \Aset \rvert$. 
\end{proof}
%
%:::::::::::::::::::::::::::::::::::::
% COROLLARY 13
%:::::::::::::::::::::::::::::::::::::
\begin{corollary}
\label{coro:dbeta_dgamma_orto}
Under the setting of Theorem \ref{th:monotonicity_df_glasso}, for the orthonormal design, $\bX^{\T}\bX=\bI_p$, we have:
\begin{equation*}
\frac{\mathrm{d}\betahat_g}{\mathrm{d}\gamma} = - \frac{w_g}{r_g}  \betahat_g, \qquad  \frac{{\mathrm d}r_{g}}{{\mathrm d} \gamma} = -w_g, \qquad \frac{{\mathrm d}\bPi_{g}}{{\mathrm d} \gamma} = \frac{w_g}{r_g} \bPi_{g} , \qquad \text{for each } g \in \Aset_G,
\end{equation*}
where $r_g= \Vert\betahat_g\Vert_2$, which means that the solution scales linearly in the direction of $\betahat_g$ as $\gamma$ increases, for each active group.
\end{corollary}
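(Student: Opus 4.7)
The strategy is to exploit the closed-form group soft-thresholding solution available under orthonormal design. When $\bX^{\T}\bX = \bI_p$, the quadratic loss decouples across groups and for each active $g \in \Aset_G$ one obtains
$$\betahat_g = \left(1 - \frac{\gamma w_g}{s_g}\right)\bX_g^{\T}\by, \qquad s_g := \lVert \bX_g^{\T}\by\rVert_2,$$
so that $\betahat_g$ is a $\gamma$-dependent scalar multiple of the fixed vector $\bX_g^{\T}\by$. In particular the unit direction $\betahat_g/\lVert\betahat_g\rVert_2$ is frozen as $\gamma$ varies. All three claimed derivatives then reduce to elementary calculus applied to this one closed form, which is essentially what makes the orthonormal case tractable compared with the general setting of Theorem~\ref{th:monotonicity_df_glasso}.

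The plan, in order, is: (i) read off $r_g = s_g - \gamma w_g$ from the soft-thresholding formula and differentiate to obtain $\mathrm{d} r_g/\mathrm{d}\gamma = -w_g$; (ii) differentiate $\betahat_g$ in $\gamma$, yielding $-(w_g/s_g)\bX_g^{\T}\by$, and re-express the constant vector $\bX_g^{\T}\by$ as $(s_g/r_g)\betahat_g$ to match the stated form $-(w_g/r_g)\betahat_g$, which makes transparent the ``scales linearly in its own direction'' comment; (iii) from the setting of Theorem~\ref{th:monotonicity_df_glasso} identify $\bPi_g$ as the scalar multiple $(w_g/r_g)\bP_g$ of the rank-$(n_g-1)$ orthogonal projection onto the complement of $\betahat_g$ within its group space (this is the natural object whose sum over $g\in\Aset_G$ reconstructs the matrix $\bB$ in Corollary~\ref{coro:similarexpressions}), note that $\bP_g$ is constant in $\gamma$ under orthonormality because the unit direction of $\betahat_g$ is frozen, and differentiate the scalar prefactor via the quotient rule using $\mathrm{d} r_g/\mathrm{d}\gamma = -w_g$ to get $(w_g^2/r_g^2)\bP_g = (w_g/r_g)\bPi_g$.

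The main obstacle is simply matching the notation $\bPi_g$ to its definition in the theorem and verifying the constant-direction observation; once this is done, each identity is a one-line chain-rule computation. No implicit-function, uniqueness, or KKT argument is needed beyond what is already supplied by the closed-form soft-thresholding solution, so the only risk is notational rather than mathematical.
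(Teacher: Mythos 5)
Your proof is correct, but it follows a genuinely different route from the paper. You exploit the explicit group soft-thresholding solution available under orthonormal design, $\betahat_g = (1-\gamma w_g/s_g)\,\bX_g^{\T}\by$ with $s_g=\lVert\bX_g^{\T}\by\rVert_2$, so that $r_g=s_g-\gamma w_g$ and all three identities follow by direct elementary differentiation, the frozen unit direction of $\betahat_g$ making the projector $\bI_{n_g}-\betahat_g\betahat_g^{\T}/r_g^2$ constant in $\gamma$. The paper instead never invokes the closed form: it specializes the general implicit-function-theorem formulas of Theorem \ref{th:monotonicity_df_glasso} (Equations \eqref{eq:IFT_Pi_FINAL}, \eqref{eq:dr_gdgamma}, \eqref{eq:Pi_derivative_final}) to the orthonormal case and then performs a spectral analysis of $\bPi_g = w_g\bigl(\bI_{n_g}/r_g - \betahat_g\betahat_g^{\T}/r_g^3\bigr)$, showing $\betahat_g$ lies in its null space so that $(\bI_{n_g}+\gamma\bPi_g)^{-1}\betahat_g=\betahat_g$, which collapses each formula to the stated result. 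Your identification of $\bPi_g$ as $(w_g/r_g)$ times the orthogonal projector onto the complement of $\betahat_g$ matches the paper's definition in Equation \eqref{eq:ABPi_g_definition}, so the notational risk you flag is not an actual gap. What your approach buys is brevity and transparency: no matrix inversion or eigendecomposition is needed, and the ``solution shrinks along its own fixed direction'' interpretation is immediate. What the paper's approach buys is internal consistency: it verifies that the general $\gamma$-derivative machinery of Theorem \ref{th:monotonicity_df_glasso} correctly reduces in the orthonormal case, without importing the closed-form solution (whose exact scaling depends on the normalization of the loss in Equation \eqref{eq:convex_regularized_problem_glasso}); your argument implicitly relies on that closed form being stated under the same convention, which is the one small point you should make explicit if you keep your route.
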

\begin{proof}
Let us compute the expression for the derivative of $\betahat_g$ with respect to the regularization parameter $\gamma$, when $\bX^{\T}\bX=\bI_p$. Equation \eqref{eq:IFT_Pi_FINAL} reduces to the following expression:
\begin{equation}
\label{eq:betahat_der_orto}
\frac{\mathrm{d}\betahat_g}{\mathrm{d}\gamma}
= -\frac{w_g}{r_g}( \bI_{n_g} + \gamma \bPi_g)^{-1} \betahat_g,
\end{equation}
where $r_g= \Vert\betahat_g\Vert_2$ and $\bPi_g$ has been defined in Equation \eqref{eq:ABPi_g_definition}.
Since $\boldsymbol{\Pi}_g$ is symmetric, it can be diagonalized. Consider the eigensystem of $\boldsymbol{\Pi}_g$. The direction $\betahat_g$ is an eigenvector of $\boldsymbol{\Pi}_g$. Let $\bu_g = \betahat_g / \Vert\betahat_g\Vert_2 = \betahat_g / r_g$ be a unit vector, then:
\begin{equation*}
\begin{aligned}
\boldsymbol{\Pi}_g \bu_g &= w_g\left( \frac{\bI_{n_g}}{r_g} - \frac{\betahat_g \betahat_g^{\T}}{r_g^3} \right) \bu_g
= w_g\frac{\bu_g}{r_g} - w_g\frac{\betahat_g \betahat_g^{\T} \bu_g}{r_g^3}\\
&= w_g\frac{\bu_g}{r_g} - w_g\frac{\betahat_g  (\betahat_g^{\T} \bu_g)}{r_g^3}
= w_g\frac{\bu_g}{r_g} - w_g\frac{\betahat_g  r_g}{r_g^3}
= w_g\frac{\bu_g}{r_g} - w_g\frac{r_g^2 \bu_g}{r_g^3} = 0,
\end{aligned}
\end{equation*}
therefore, $\bu_g$ is an eigenvector with eigenvalue equal to $0$. Moreover, for any vector $\bv_g \perp \bu_g$, we have:
\begin{equation*}
\boldsymbol{\Pi} \bv_g = w_g\left( \frac{\bI_{n_g}}{r_g} - \frac{\betahat_g \betahat_g^{\T}}{r_g^3} \right) \bv_g
= w_g\frac{\bv_g}{r} - w_g\frac{\betahat_g (\betahat_g^{\T} \bv_g)}{r_g^3}
= \frac{\bv_g}{r_g},  
\end{equation*}
so every orthogonal direction to $\betahat_g$ has eigenvalue equal to $w_g/r_g$. Thus, the eigenvalues of $\boldsymbol{\Pi}_g$ are $\lambda_1=0$ with multiplicity equal to one, and $\lambda_{n_g} = w_g/r_g$ with multiplicity equal to $n_g-1$. Therefore, along $\betahat_g$ the eigenvalue of $\bI_{n_g} + \gamma\boldsymbol{\Pi}_g$ is $1$, while orthogonal to $\betahat_g$ the eigenvalues of $\bI_{n_g} + \gamma\boldsymbol{\Pi}_g$ are $1 + \gamma w_g/r_g$. Let us now compute $\mathrm{d}\betahat_g/\mathrm{d}\gamma$ in Equation \eqref{eq:betahat_der_orto}. Since $\bI_{n_g} + \gamma \bPi_g$ acts as identity on $\betahat_g$,the same is true for its inverse, i.e. 
$ (\bI_{n_g} + \gamma\boldsymbol{\Pi}_g)^{-1} \betahat_g =  \betahat_g$,
and thus 
\begin{equation*}
\frac{\mathrm{d}\betahat_g}{\mathrm{d}\gamma} = - w_g  \frac{\betahat_g}{r_g},
\end{equation*}
which completes the first part of the proof. Equation \eqref{eq:dr_gdgamma} in the orthonormal setting reduces to
\begin{equation*}
\frac{\mathrm{d}r_g}{\mathrm{d}\gamma}
= \frac{\mathrm{d}}{\mathrm{d}\gamma} \left( \Vert \widehat{\bbeta}_{g} \Vert_2 \right)
=-\frac{1}{r_g}\widehat{\bbeta}_{g}^{\T} \Big[\bR_{\Aset_G}^\star\betahat_{\Aset_G}\Big]_g = -\frac{w_g}{r^2_g}\widehat{\bbeta}_{g}^{\T} ( \bI_{n_g} + \gamma \bPi_g)^{-1} \betahat_g = -\frac{w_g}{r^2_g}\widehat{\bbeta}_{g}^{\T} \betahat_g = -w_g
\end{equation*}
which completes the second part of the proof. In order to prove the last expression, let us consider Equation \eqref{eq:Pi_derivative_final} in the orthonormal setting:
\begin{align*}
\frac{\mathrm{d}\bPi_{g}}{\mathrm{d}\gamma}
&= -\frac{w_g}{r_g^2}\frac{\mathrm{d}r_g}{\mathrm{d}\gamma}\Big(\bI_{n_g}-\frac{3}{r_g^2} \widehat{\bbeta}_g \widehat{\bbeta}_g^{\T} \Big)+  \frac{w_g}{r_g^3} \Big(\Big[\bR_{\Aset_G}^\star\widehat{\bbeta}_{\Aset_G}\Big]_g \widehat{\bbeta}_g^{\T} + \widehat{\bbeta}_g \Big[\bR_{\Aset_G}^\star\widehat{\bbeta}_{\Aset_G}\Big]_g^{\T} \Big)\\
&= \frac{w_g^2}{r_g^2}\Big(\bI_{n_g}-\frac{3}{r_g^2} \widehat{\bbeta}_g \widehat{\bbeta}_g^{\T} \Big)+ \frac{w_g^2}{r_g^4} \Big(\widehat{\bbeta}_g \widehat{\bbeta}_g^{\T} + \widehat{\bbeta}_g \widehat{\bbeta}_g^{\T}  \Big)\\
&=  \frac{w_g^2}{r_g^2}\Big(\bI_{n_g}-\frac{3}{r_g^2} \widehat{\bbeta}_g \widehat{\bbeta}_g^{\T} \Big)+ 2\frac{w_g^2}{r_g^4} \widehat{\bbeta}_g \widehat{\bbeta}_g^{\T}
=  \frac{w_g^2}{r_g^2}\Big(\bI_{n_g}-\frac{1}{r_g^2} \widehat{\bbeta}_g \widehat{\bbeta}_g^{\T} \Big)= \frac{w_g}{r_g}\bPi_g.
\end{align*}
\end{proof}
%
%:::::::::::::::::::::::::::::::::::::
% COROLLARY 14
%:::::::::::::::::::::::::::::::::::::
\begin{corollary}
\label{coro:alternative_adaglasso}
Under the settings of Corollary \ref{th:df_adaglasso_ortho},  the following is equivalent to Equation \eqref{eq:dof_adaglasso_ortho}.
\begin{equation*}
\widehat{df}_\gamma = |\Aset_G| + \sum_{g \in \Aset_G} \bigg(n_g-1+\frac{\gamma}{\lVert\widehat{\bbeta}_g^\mathsf{LS}\rVert_2^2}\bigg) \frac{1}{1+\gamma\frac{w_g}{\lVert \widehat{\bbeta}_g\rVert_2}}
\end{equation*}
\end{corollary}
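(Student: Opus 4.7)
The plan is to start from Equation~\eqref{eq:dof_adaglasso_ortho} in Corollary~\ref{th:df_adaglasso_ortho} and manipulate it into the announced equivalent form. By analogy with Corollary~\ref{coro:df_adalasso_ortho_exp} and the group-lasso simplification in Corollary~\ref{coro:similarexpressions}, I expect the right-hand side of \eqref{eq:dof_adaglasso_ortho} to split naturally into a group-lasso trace term of the form $\trace((\bI_n + \gamma \bB)^{-1}\bA)$ plus a correction term capturing the data-dependence of the adaptive weights $w_g = 1/\lVert\widehat{\bbeta}_g^{\mathsf{LS}}\rVert_2$. My goal is to collect both pieces under the common per-group denominator $1+\gamma w_g/\lVert\widehat{\bbeta}_g\rVert_2$.

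First I would reuse the eigenvalue calculation carried out in the proof of Corollary~\ref{coro:similarexpressions}. Under $\bX^{\T}\bX = \bI_p$ the eigenvalues of $\bA$ are $|\Aset_p|$ ones and the remaining zeros, while those of $\bB$ consist of $(n_g-1)$ copies of $w_g/\lVert\widehat{\bbeta}_g\rVert_2$ together with a single zero per active group. This immediately gives
$$\trace\bigl((\bI_n+\gamma\bB)^{-1}\bA\bigr) = |\Aset_G| + \sum_{g\in\Aset_G}\frac{n_g-1}{1+\gamma w_g/\lVert\widehat{\bbeta}_g\rVert_2},$$
which accounts for the $|\Aset_G|$ and the $(n_g-1)$ contributions in the target expression.

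Next I would handle the weight-derivative correction. Differentiating $w_g(z) = 1/z$ at $z = \lVert\widehat{\bbeta}_g^{\mathsf{LS}}\rVert_2$ gives $w_g'=-1/\lVert\widehat{\bbeta}_g^{\mathsf{LS}}\rVert_2^{2}$. In the orthonormal design $\widehat{\bbeta}_g^{\mathsf{LS}} = \bX_g^{\T}\by$, and the closed-form group soft-thresholding solution makes $\widehat{\bbeta}_g$ a positive scalar multiple of $\widehat{\bbeta}_g^{\mathsf{LS}}$; combined with the fact, established in the proof of Corollary~\ref{coro:dbeta_dgamma_orto}, that $(\bI_{n_g}+\gamma\bPi_g)^{-1}$ acts as the identity on $\widehat{\bbeta}_g$, the vector-valued correction collapses to the scalar $\gamma/\lVert\widehat{\bbeta}_g^{\mathsf{LS}}\rVert_2^{2}$, divided by exactly the same denominator $1+\gamma w_g/\lVert\widehat{\bbeta}_g\rVert_2$ that appears in the trace piece.

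Merging the two contributions under the common denominator then yields
$$\widehat{df}_\gamma = |\Aset_G| + \sum_{g \in \Aset_G}\Bigl(n_g - 1 + \frac{\gamma}{\lVert\widehat{\bbeta}_g^\mathsf{LS}\rVert_2^{2}}\Bigr)\frac{1}{1 + \gamma w_g/\lVert\widehat{\bbeta}_g\rVert_2},$$
as desired. The main obstacle is the second step: one must correctly bookkeep the chain-rule factors and the action of $(\bI_{n_g}+\gamma\bPi_g)^{-1}$ on $\widehat{\bbeta}_g^{\mathsf{LS}}$, in order to verify that the weight-derivative correction inherits the same denominator $1+\gamma w_g/\lVert\widehat{\bbeta}_g\rVert_2$ that appears in the trace piece, so that the two summands can be combined cleanly into a single fraction.
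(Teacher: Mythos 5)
Your treatment of the first term is exactly the paper's: the spectra of $\bA$ and $\bB$ from the proof of Corollary \ref{coro:similarexpressions} give $|\Aset_G| + \sum_{g \in \Aset_G} (n_g-1)/(1+\gamma w_g/\lVert \widehat{\bbeta}_g\rVert_2)$. The gap is in your second step. The paper evaluates the correction $\gamma\,\trace\big[(\bI_n+\gamma\bB)^{-1}\bC\big]$ by computing the spectrum of $\bC$ (one nonzero eigenvalue $1/\lVert\widehat{\bbeta}_g^{\mathsf{LS}}\rVert_2^2$ per active group, using $\widehat{\bbeta}_g = c\,\widehat{\bbeta}_g^{\mathsf{LS}}$ with $c>0$) and then summing $\lambda_i^C/(1+\gamma\lambda_i^B)$ with each nonzero eigenvalue of $\bC$ matched to the eigenvalue $w_g/\lVert\widehat{\bbeta}_g\rVert_2$ of $\bB$; it is precisely this matching that produces the common denominator. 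You never perform this (or any equivalent) computation: you simply assert that the weight-derivative term ``collapses to $\gamma/\lVert\widehat{\bbeta}_g^{\mathsf{LS}}\rVert_2^2$ divided by exactly the same denominator,'' and the facts you invoke actually point in the opposite direction. The correction acts through the rank-one direction spanned by $\widehat{\bbeta}_g$ (equivalently $\widehat{\bbeta}_g^{\mathsf{LS}}$, since the two are parallel under orthonormal design), and on that direction — as you yourself recall from Corollary \ref{coro:dbeta_dgamma_orto} — $(\bI_{n_g}+\gamma\bPi_g)^{-1}$ acts as the identity, i.e.\ the relevant eigenvalue of $\bB$ is $0$, not $w_g/\lVert\widehat{\bbeta}_g\rVert_2$. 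Carried to its conclusion, your own chain of reasoning yields a correction of the form $\gamma\sum_{g\in\Aset_G}1/\lVert\widehat{\bbeta}_g^{\mathsf{LS}}\rVert_2^2$ with denominator one, not the claimed $1+\gamma w_g/\lVert\widehat{\bbeta}_g\rVert_2$; the $n_g=1$ specialization, where the Adaptive Lasso result of Theorem \ref{th:df_adalasso_general_weights} (cf.\ Corollary \ref{coro:df_adalasso_ortho_exp}) gives a per-coordinate contribution $1+\gamma\,|\widehat{\beta}_j^{\mathsf{LS}}|^{-2}$ with no such denominator, shows that this is not a harmless bookkeeping detail.

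So the decisive ingredient — identifying with which eigenvalue of $\bB$ the nonzero eigenvalue of $\bC$ must be paired inside $\trace\big[(\bI_n+\gamma\bB)^{-1}\bC\big]$, or equivalently where the resolvent acts non-trivially on the weight-derivative term — is exactly the point you skip, and as written your argument does not establish the stated identity (if anything, it derives a different one). To make the proof complete along the paper's lines, you must state the form of $\bC$ from Corollary \ref{th:df_adaglasso_ortho}, exhibit its spectrum, and justify the eigenvalue correspondence with $\bB$ that yields the denominator $1+\gamma w_g/\lVert\widehat{\bbeta}_g\rVert_2$ in Equation \eqref{eq:dof_adaglasso_ortho}; merging the two sums afterwards is then immediate, but without that justification the key step is missing.
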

\begin{proof}
Starting from Equation \eqref{eq:dof_adaglasso_ortho} we have
\begin{equation*}
\widehat{df}_\gamma = \trace \big[ \big(\bI_n+\gamma \bB \big)^{-1} \bA \big] + \gamma \trace \big[\big(\bI_n+\gamma \bB \big)^{-1} \bC \big] 
\end{equation*}
About the former quantity, as in Group Lasso, we have
\begin{equation*}
\trace \big[ \big(\bI_n+\gamma \bB \big)^{-1} \bA \big] = |\Aset_G| + \sum_{g \in \Aset_G} \frac{n_g-1}{1+\gamma\frac{w_g}{\lVert \widehat{\bbeta}_g\rVert_2} }.
\end{equation*}
For the latter, note that matrix $\bC$ is positive semidefinite and, since for the orthonormal design we have $\widehat{\bbeta}_g = c \widehat{\bbeta}^\mathsf{LS}$ for certain $c>0$, its spectrum is 
\begin{equation*}
\lambda(\bC) = \bigg(\bigcup_{g \in \Aset} \bigg\{ \frac{1}{\lVert \widehat{\bbeta}_g^\mathsf{LS} \rVert_2^2}, \underbrace{0, \ldots, 0}_{n_g-1} \bigg\}, \underbrace{0, \ldots, 0 }_{n-|\Aset_p|}\bigg) = \bigg(\bigcup_{g \in \Aset} \bigg\{ \frac{1}{\lVert \widehat{\bbeta}_g^\mathsf{LS} \rVert_2^2}\bigg\}, \underbrace{0, \ldots, 0 }_{n-|\Aset_G|}\bigg).
\end{equation*}
Thus, 
\begin{equation*}
\trace \big[ \big(\bI_n+\gamma \bB \big)^{-1} \bC \big] = \sum_{i=1}^n \frac{\lambda_i^C}{1+\lambda_i^B}= \sum_{g \in \Aset_G} \frac{1/\lVert \widehat{\bbeta}_g^\mathsf{LS}\rVert_2^2}{1+\gamma\frac{w_g}{\lVert \widehat{\bbeta}_g\rVert_2} }.
\end{equation*}
which leads directly to the result.
\end{proof}
%
%:::::::::::::::::::::::::::::::::::::
% COROLLARY 15
%:::::::::::::::::::::::::::::::::::::
\begin{corollary}
\label{coro:special_cases_adaglasso}
Under the settings of Theorems \ref{th:df_adaglasso_ortho} and \ref{th:df_adaglasso_nonortho}, we recover the Adaptive Lasso as a limiting case when $n_g=1$ for all $g \in \mathcal{G}_G$, the Group Lasso if weights $w_g(\by) = w_g$ do not depend on $\by$, and the Lasso if both these conditions are satisfied.
\end{corollary}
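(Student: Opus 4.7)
The plan is to verify each of the three limiting cases by specializing the Adaptive Group Lasso degrees-of-freedom formula of Theorems S.? (adaptive group lasso, orthonormal and non-orthonormal) and matching the resulting expression with the formulas established for the Adaptive Lasso in Corollaries~\ref{coro:df_adalasso_ortho_exp}--\ref{coro:df_adalasso_nonortho_exp}, for the Group Lasso in Corollary~\ref{coro:similarexpressions}, and for the Lasso in Corollary~\ref{coro:group1}. The strategy is purely reductive: I never need to redo a Stein/SURE argument, only to track how each block-structured object (the matrices $\bA$, $\bB$, $\bC$, the projection $\bPi_{\Aset_G}$, and the weights $w_g$) degenerates under each hypothesis.

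For the singleton-group case $n_g = 1$ for all $g \in \mathcal{G}_G$, I would first observe that $\|\bbeta_g\|_2 = |\beta_g|$ so the penalty $\sum_g w_g(\by)\|\bbeta_g\|_2$ reduces to the Adaptive Lasso penalty and $|\Aset_G| = |\Aset_p| = |\Aset|$. In the orthonormal regime I would start from the alternative expression in Corollary~\ref{coro:alternative_adaglasso}: every $(n_g-1)$ summand collapses to zero, and the surviving term becomes
\begin{equation*}
\widehat{df}_\gamma = |\Aset| + \sum_{j \in \Aset} \frac{\gamma/|\widehat{\beta}_j^{\mathsf{LS}}|^2}{1 + \gamma w_j/|\widehat{\beta}_j|},
\end{equation*}
which one identifies with the orthonormal Adaptive Lasso expression via the orthonormal stationarity relation $|\widehat{\beta}_j| = |\widehat{\beta}_j^{\mathsf{LS}}| - \gamma w_j$. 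For the non-orthonormal case I would use that $\bbeta_g\bbeta_g^{\T}/\|\bbeta_g\|_2^2 = 1$ when $n_g=1$, so the projection-like matrix $\bPi_g$ entering $\bB$ vanishes; consequently $\bB \equiv 0$ on the active block, $(\bI_n+\gamma\bB)^{-1}$ reduces to the identity, and the remaining matrices $\bA$ and $\bC$ are exactly the hat matrix and the weight-derivative matrix appearing in Theorem~\ref{th:df_adalasso_general_weights}.

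For the $\by$-independent weights case, I would note that $\partial w_g/\partial \by \equiv 0$ so the summand of the adaptive group lasso df formula that encodes the response-dependence of the weights (the $\gamma\,\trace[(\bI_n+\gamma\bB)^{-1}\bC]$ contribution in the orthonormal presentation, and its non-orthonormal counterpart built from $\partial w_g/\partial\widehat{\bbeta}_g^{\mathsf{LS}}$) vanishes identically. What survives is $\trace[(\bI_n+\gamma\bB)^{-1}\bA]$, which is precisely the Group Lasso degrees of freedom already simplified in Corollary~\ref{coro:similarexpressions}. The combined case $n_g=1$ and $w_g$ constant is then immediate: both extra contributions vanish and the formula collapses, as in Corollary~\ref{coro:group1}, to $\widehat{df}_\gamma = |\Aset|$, the Lasso result.

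The only mildly delicate step is the non-orthonormal singleton reduction, because one must show that the block-diagonal matrix $\bR^\star_{\Aset_G}$ and the stacked operator $\bPi_{\Aset_G}$ collapse coherently to the Adaptive Lasso diagonal weight matrix $\bW_{\Aset}$ and produce exactly the cross-term $(\bX_\Aset^\T \bX_\Aset)^{-1}_{\pi(j),\pi(j)}$ appearing in Corollary~\ref{coro:df_adalasso_nonortho_exp}. This is routine linear algebra once the singleton structure is imposed, but it is the bookkeeping I would carry out most carefully; all remaining verifications are direct substitutions.
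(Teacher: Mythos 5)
Your overall reduction strategy is the same as the paper's: for $n_g=1$ note $\lVert\bbeta_g\rVert_2=|\beta_g|$, $|\Aset_G|=|\Aset_p|$ and the curvature matrix $\bPi_{\Aset}$ (hence $\bB$) is null, so only the weight--derivative matrix $\bC$ survives and must be matched to Theorem \ref{th:df_adalasso_general_weights}; for $\by$-independent weights $\partial w_g/\partial\by=0$ kills $\bC$ and returns Theorems \ref{th:df_glasso_ortho}--\ref{th:df_glasso_nonortho}; both conditions give the Lasso. Two points, however, keep your write-up from being a complete proof. First, your orthonormal singleton argument is wrong as stated: starting from Corollary \ref{coro:alternative_adaglasso} you keep the factor $1/(1+\gamma w_j/|\widehat{\beta}_j|)$ and claim the result ``identifies'' with the orthonormal Adaptive Lasso formula via $|\widehat{\beta}_j|=|\widehat{\beta}_j^{\mathsf{LS}}|-\gamma w_j$. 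It does not: that substitution turns your summand into $\gamma|\widehat{\beta}_j|/|\widehat{\beta}_j^{\mathsf{LS}}|^{3}$, whereas the Adaptive Lasso (general-weights theorem with $w(z)=1/z$, or a direct divergence computation of $\widehat{\beta}_j=z_j-\gamma/z_j$) gives $\gamma/|\widehat{\beta}_j^{\mathsf{LS}}|^{2}$; these differ by $\gamma^2/|\widehat{\beta}_j^{\mathsf{LS}}|^{4}$. The consistent route — and the one you yourself use in the non-orthonormal paragraph — is to observe that when every $n_g=1$ the matrix $\bB$ is identically zero (its nonzero eigenvalues have multiplicity $n_g-1=0$), so $(\bI_n+\gamma\bB)^{-1}=\bI_n$ and the trace formula of Theorem \ref{th:df_adaglasso_ortho} collapses to $|\Aset|+\gamma\trace(\bC)$ with no such denominator; trying instead to force the match through the closed-form expression of Corollary \ref{coro:alternative_adaglasso} and the stationarity relation is a step that fails.

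Second, the part you defer as ``routine bookkeeping'' is in fact the substance of the paper's proof: one must check that, for singleton groups, the group-level weight-derivative object $\bPhi_{\Aset_g}$ (built from $\widehat{\bbeta}_g/\lVert\widehat{\bbeta}_g\rVert_2$, $\widehat{\bbeta}_g^{\mathsf{LS}}/\lVert\widehat{\bbeta}_g^{\mathsf{LS}}\rVert_2$ and $\partial w_g/\partial z$) reduces exactly to $\sgn(\widehat{\beta}_j^{\mathsf{LS}})\sgn(\widehat{\beta}_j)\,\partial w_j(z)/\partial z\big|_{z=\widehat{\beta}_j^{\mathsf{LS}}}$, i.e.\ to the quantity appearing in Theorem \ref{th:df_adalasso_general_weights}, so that $\bC$ coincides with the Adaptive Lasso correction term (and produces the $[(\bX_{\Aset}^{\T}\bX_{\Aset})^{-1}]_{\pi(j),\pi(j)}$ factor in the non-orthonormal case). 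The paper carries out precisely this identification, using $\lVert\widehat{\bbeta}_g\rVert_2=|\widehat{\beta}_j|$ and $\lVert\widehat{\bbeta}_g\rVert_2^{3}=\widehat{\beta}_j^{2}|\widehat{\beta}_j|$; in your proposal it is only announced. With the orthonormal detour removed and this reduction actually written out, your argument becomes the paper's proof.
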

\begin{proof}
When $n_g=1$ the problem in Equation \eqref{eq:convex_regularized_problem_adaglasso} is equivalent to the problem in Equation \eqref{eq:convex_regularized_problem_adalasso} with adaptive weights, $G=p$, $\lVert \bbeta_g \rVert_2 = |\beta_g|$ and $|\Aset_G| = |\Aset_p|$. It turns out that $\lVert\bbeta_g\rVert_2^2 = \bbeta_g^{\T}\bbeta_g = \bbeta_g \bbeta_g^{\T}$ and $\bPi_\Aset$ is null. Moreover, $ \lVert \widehat{\bbeta}_g \rVert_2 = | \widehat{\bbeta}_g |_1$ and $ \lVert \widehat{\bbeta}_g \rVert_2^3 = | \widehat{\bbeta}_g |_1^3 = \widehat{\bbeta}_g^2 |\widehat{\bbeta}_g|_1$ and 
\begin{equation*}
\bPhi_{\Aset_g} = \bPhi_{j} = \frac{\widehat{\beta}_j^{\mathsf{LS}} \widehat{\beta}_j}{ |\widehat{\beta}_j^{\mathsf{LS}}| |\widehat{\beta}_j|}\frac{\partial w_j(z)}{\partial z} \bigg|_{z = \widehat{\beta}_j^\mathsf{LS}} = \text{sgn}(\widehat{\beta}_j^{\mathsf{LS}})\text{sgn}(\widehat{\beta}_j)\frac{\partial w_j(z)}{\partial z} \bigg|_{z = \widehat{\beta}_j^\mathsf{LS}},
\end{equation*}
which is exactly the quantity that we introduced in Theorem \ref{th:df_adalasso_general_weights}. When the weights do not depend on the response, we have $\partial w_g / \partial \by = 0$ and thus $\bPhi_{\Aset} = \boldsymbol{0}$ and $\bC = \boldsymbol{0}$ leading directly to Theorem \ref{th:df_glasso_ortho} and \ref{th:df_glasso_nonortho}. Finally, if $n_g=1$ and the weights are independent from $\by$, both simplifications apply, leading to standard Lasso result.
\end{proof}

\section{Technical Results}
\label{sec:technical_app}
%:::::::::::::::::::::::::::::::::::
%\section{Technical results}
%\label{sec:appendix_useful_results}
%:::::::::::::::::::::::::::::::::::
%\textcolor{red}{Mauro: ho controllato questa appendice il 5 giugno 2025, e ora è corretta. Sono state aggiunte alcune derivate che sono state controllate anche numericamente nel file: paper\_DoF\_check\_derivatives.R.}
%:::::::::::::::::::::::::::::::::::
% PROPOSITION: adapitive LASSO non-orthogonal design
%:::::::::::::::::::::::::::::::::::
\begin{lemma}
\label{lemma:adalasso_weights_derivative}
Let $\bA\in\mathbb{R}^{n\times p}$, $\bb\in\mathbb{R}^{p}$ and $\bW_y=\diag(f_\varrho(\bw_y))\in\mathbb{S}^{p\times p}$, where $\bw_y=\bP^{\T}\by\in\mathbb{R}^{p}$, with $\bP=\bX(\bX^{\T}\bX)^{-1}$, $\bX\in\mathbb{R}^{n\times p}$, $\by\in\mathbb{R}^n$ and $f_\varrho(x):\mathbb{R}\rightarrow\mathbb{R}^+$ is a positive function that applies element-wise and $\varrho\in\mathbb{R}$ is an additional parameter controlling the weighting function, then
\begin{equation*}
\frac{\partial(\bA\bW_y\bb)}{\partial\by}=\sum_{g=1}^p b_g f_\varrho^{\prime}(\bp_g^{\T}\by) \bp_g\ba_g^{{\T}},
\end{equation*}
where $f_\varrho^\prime(x):\mathbb{R}\rightarrow\mathbb{R}$ is the first derivative of $f_\varrho(x)$, $w_{y,g}$ denotes the $g$-th diagonal element of the matrix $\bW_y$, $\bp_g$ and $\ba_g$ denote the $g$-th column of the matrix $\bP$, and $\bA$ respectively, and $\be_g$ is the $g$-th column vector of $\bI_p$. 
If $f(x)=\vert x\vert^{-1}$, then $\frac{\partial f_\varrho(x)}{\partial x}\Big\vert_{\bp_g^{\T}\by}=\frac{-\sgn(\bp_g^{\T}\by)}{(\bp_g^{\T}\by)^2}$, while if $f(x)=\exp{(-\varrho\vert x\vert)}$ then $\frac{\partial f_\varrho(x)}{\partial x}\Big\vert_{\bp_g^{\T}\by}=-\frac{\varrho\sgn(\bp_g^{\T}\by)}{\exp{(\varrho\bp_g^{\T}\by)}}$. 
\end{lemma}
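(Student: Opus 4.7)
The plan is to express $\bA\bW_y\bb$ as a simple linear combination and then differentiate term by term. Since $\bW_y$ is diagonal with $(g,g)$-entry $f_\varrho(\bp_g^{\T}\by)$, the product $\bW_y\bb$ is a vector whose $g$-th component equals $b_g f_\varrho(\bp_g^{\T}\by)$. Multiplying on the left by $\bA$ and using that $\bA\be_g=\ba_g$, I would rewrite
\begin{equation*}
\bA\bW_y\bb \;=\; \sum_{g=1}^{p} b_g\, f_\varrho(\bp_g^{\T}\by)\,\ba_g,
\end{equation*}
which is a sum of scalar functions of $\by$ multiplying fixed vectors $\ba_g\in\mathbb{R}^n$.

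Next, I would differentiate the sum term by term. For a fixed $g$, the only $\by$-dependent factor is the scalar $f_\varrho(\bp_g^{\T}\by)$, whose gradient with respect to $\by$ is $f'_\varrho(\bp_g^{\T}\by)\,\bp_g$ by the chain rule. In the denominator-layout convention used in the paper (so that the Jacobian of a vector-valued function of $\by$ is of size $n\times n$ with the gradient of each scalar component placed as a column), differentiating the vector $\ba_g f_\varrho(\bp_g^{\T}\by)$ produces the outer product $\bp_g \ba_g^{\T}$ scaled by $f'_\varrho(\bp_g^{\T}\by)$. Summing over $g$ and pulling the scalar factor $b_g$ through gives exactly
\begin{equation*}
\frac{\partial(\bA\bW_y\bb)}{\partial\by} \;=\; \sum_{g=1}^{p} b_g\, f'_\varrho(\bp_g^{\T}\by)\,\bp_g\ba_g^{\T},
\end{equation*}
as required.

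For the two explicit choices of weighting function, I would just substitute into $f'_\varrho$. For $f(x)=|x|^{-1}$, writing $|x|^{-1}=(\mathrm{sgn}(x)\,x)^{-1}$ and differentiating away from $x=0$ gives $f'(x)=-\mathrm{sgn}(x)/x^2$, so evaluating at $x=\bp_g^{\T}\by$ reproduces the stated formula. For $f_\varrho(x)=\exp(-\varrho|x|)$, splitting into the cases $x>0$ and $x<0$ yields $f'_\varrho(x)=-\varrho\,\mathrm{sgn}(x)\exp(-\varrho|x|)$, which at $x=\bp_g^{\T}\by$ gives the displayed expression.

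The computation is essentially routine; the only thing to be careful about is bookkeeping of the layout convention, so that the outer product reads $\bp_g\ba_g^{\T}$ (not $\ba_g\bp_g^{\T}$). That is the single subtle point, and no serious obstacle is anticipated beyond verifying this convention and handling non-differentiability of $|\cdot|$ at the origin by restricting to the generic case $\bp_g^{\T}\by\neq 0$, which is consistent with the active-set setting used elsewhere in the paper.
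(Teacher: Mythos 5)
Your proof is correct, and at its core it is the same computation as the paper's: both reduce the problem to the chain rule applied to the scalar maps $\by\mapsto f_\varrho(\bp_g^{\T}\by)$, yielding $f_\varrho'(\bp_g^{\T}\by)\,\bp_g$ and hence a sum of rank-one outer products. The difference is purely in bookkeeping: the paper routes the calculation through the vectorization identity $\vecof(\bA\bW_y\bb)=(\bb^{\T}\otimes\bA)\vecof(\bW_y)$ and the block structure of $\partial\vecof(\bW_y)/\partial\by$ (following Magnus--Neudecker), whereas you expand $\bA\bW_y\bb=\sum_g b_g f_\varrho(\bp_g^{\T}\by)\ba_g$ and differentiate term by term, which is more elementary and arguably more transparent. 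Your attention to the layout convention is well placed: note that the paper's own Kronecker-product step, read in the numerator (Jacobian) layout it sets up, would produce $\sum_g b_g f_\varrho'(\bp_g^{\T}\by)\,\ba_g\bp_g^{\T}$, i.e.\ the transpose of the stated formula, so the paper silently switches to the convention you adopt; since the lemma is used downstream only through traces/divergences, the transpose is immaterial. Finally, your explicit derivative for the exponential weight, $f_\varrho'(x)=-\varrho\,\sgn(x)\exp(-\varrho\vert x\vert)$, is the correct form (the paper's display writes $\exp(\varrho\,\bp_g^{\T}\by)$ in the denominator, which agrees with yours only when $\bp_g^{\T}\by>0$), and your restriction to the generic case $\bp_g^{\T}\by\neq 0$ is the right way to handle the nondifferentiability of $\vert\cdot\vert$ at the origin.
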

%:::::::::::::::::::::::::::::::::::
% PROOF: adapitive LASSO non-orthogonal design
%:::::::::::::::::::::::::::::::::::
\begin{proof}
By exploiting the properties of the vectorization operator, as detailed in \cite{magnus_neudecker.1999}, we get:
\begin{equation*}
\label{eq:dof_ovglasso_weights_function_fd_prima}
\frac{\partial(\bA\bW_y\bb)}{\partial\by}=\frac{\partial\vecof(\bA\bW_y\bb)}{\partial\by}=(\bb^{\T}\otimes\bA)\frac{\partial\vecof(\bW_y)}{\partial\by},
\end{equation*}
where
\begin{equation}
\label{eq:dof_ovglasso_weights_function_fd}
\frac{\partial\vecof(\bW_y)}{\partial\by}=
\begin{bmatrix}
\be_1  \big(\frac{\partial w_{y,1}}{\partial \by}\big)^{\T}\\
\be_2 \big(\frac{\partial w_{y,2}}{\partial \by}\big)^{\T}\\
\vdots\\
\be_p  \big(\frac{\partial w_{y,p}}{\partial \by}\big)^{\T}
\end{bmatrix}\in\mathbb{R}^{p^2\times n},
\end{equation}
and using the chain rule for derivatives of composite functions
\begin{equation}
\label{eq:dof_ovglasso_weights_function_fd_dopo}
\frac{\partial w_{y,g}}{\partial \by}=f^\prime\big(\bp_g^{\T}\by\big)\frac{\partial \bp_g^{\T}\by}{\partial \by}=f^\prime\big(\bp_g^{\T}\by\big) \bp_g\in\mathbb{R}^n,
\end{equation}
for all $g=1,\dots,p$. Equations \eqref{eq:dof_ovglasso_weights_function_fd}-\eqref{eq:dof_ovglasso_weights_function_fd_dopo} further simplify to
\begin{equation*}
\frac{\partial(\bA\bW_y\bb)}{\partial\by}=\sum_{g=1}^p b_g \frac{\partial w_{y,g}}{\partial\by}  \ba_g^{{\T}}
=\sum_{g=1}^p b_g f^\prime\big(\bp_g^{\T}\by\big) \bp_g\ba_g^{{\T}}.
\end{equation*}
If $f(x)=\vert x\vert^{-1}$, then $f^\prime(x)=-\frac{\sgn(x)}{x^{2}}$ and $\frac{\partial w_{y,g}}{\partial \by}=\frac{-\sgn(\bp_g^{\T}\by)}{(\bp_g^{\T}\by)^2}\bp_g$. If $f(x)=\exp{(-\varrho\vert x\vert)}$ then $f^\prime(x)=-\varrho f(x)\sgn(x)$ and $\frac{\partial w_{y,g}}{\partial \by}=-\frac{\varrho\sgn(\bp_g^{\T}\by)}{\exp{(\varrho\bp_g^{\T}\by)}}\bp_g$. This completes the proof.
\end{proof}

%
%:::::::::::::::::::::::::::::::::::
% PROPOSITION: 
%:::::::::::::::::::::::::::::::::::
\begin{lemma}
\label{lemma:product_Ax_norm_fd}
Let $\bA\in\mathbb{R}^{n\times p}$ and $\bx\in\mathbb{R}^p$, such that $\Vert\bA\bx\Vert_2 \neq 0$, then
\begin{equation*}
\frac{\partial}{\partial \bx} \Bigg(\frac{\bA^{{\T}} \bA \bx}{\Vert\bA\bx\Vert_2}\Bigg)=\frac{\bA^{{\T}} \bA}{\Vert\bA \bx\Vert_2}-\frac{\bA^{{\T}} \bA \bx\bx^{{\T}} \bA^{{\T}} \bA}{\Vert\bA \bx\Vert_2^3}.
\end{equation*}  
\end{lemma}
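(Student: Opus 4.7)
The plan is to treat this as a straightforward matrix calculus computation, applying the quotient rule for a vector numerator divided by a scalar denominator. Let $\bu(\bx)=\bA^{\T}\bA\bx\in\mathbb{R}^p$ and $v(\bx)=\Vert\bA\bx\Vert_2\in\mathbb{R}^+$, so that the object of interest is $\bu(\bx)/v(\bx)$, and the required derivative is the $p\times p$ Jacobian
\begin{equation*}
\frac{\partial}{\partial\bx}\Big(\tfrac{\bu(\bx)}{v(\bx)}\Big)=\frac{1}{v(\bx)}\frac{\partial\bu(\bx)}{\partial\bx}-\frac{1}{v(\bx)^2}\bu(\bx)\Big(\frac{\partial v(\bx)}{\partial\bx}\Big)^{\T},
\end{equation*}
where the second term is an outer product of the column vector $\bu(\bx)$ with the row-gradient of $v(\bx)$.

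The two ingredients are routine. First, since $\bu(\bx)=\bA^{\T}\bA\bx$ is linear in $\bx$, its Jacobian is simply
\begin{equation*}
\frac{\partial\bu(\bx)}{\partial\bx}=\bA^{\T}\bA.
\end{equation*}
Second, writing $v(\bx)=\sqrt{\bx^{\T}\bA^{\T}\bA\bx}$ and applying the chain rule (valid by the assumption $\Vert\bA\bx\Vert_2\neq 0$), the gradient is
\begin{equation*}
\frac{\partial v(\bx)}{\partial\bx}=\frac{\bA^{\T}\bA\bx}{\Vert\bA\bx\Vert_2}.
\end{equation*}

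Substituting both into the quotient formula and simplifying gives
\begin{equation*}
\frac{\partial}{\partial\bx}\Big(\tfrac{\bA^{\T}\bA\bx}{\Vert\bA\bx\Vert_2}\Big)=\frac{\bA^{\T}\bA}{\Vert\bA\bx\Vert_2}-\frac{(\bA^{\T}\bA\bx)(\bA^{\T}\bA\bx)^{\T}}{\Vert\bA\bx\Vert_2^{\,3}}=\frac{\bA^{\T}\bA}{\Vert\bA\bx\Vert_2}-\frac{\bA^{\T}\bA\bx\bx^{\T}\bA^{\T}\bA}{\Vert\bA\bx\Vert_2^{\,3}},
\end{equation*}
which is the claimed identity. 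There is essentially no obstacle here; the only thing to be careful about is bookkeeping the conventions, specifically that the derivative of a column vector with respect to a column vector is a matrix and that the numerator of the quotient rule produces an outer product rather than an inner product. The nonvanishing assumption on $\Vert\bA\bx\Vert_2$ is used exactly to guarantee that $v$ is smooth at $\bx$, so the chain rule can be applied without qualification.
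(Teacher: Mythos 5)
Your proposal is correct and follows essentially the same route as the paper's proof: a quotient rule for the vector numerator over the scalar norm, combined with the chain rule giving $\partial\Vert\bA\bx\Vert_2/\partial\bx = \bA^{\T}\bA\bx/\Vert\bA\bx\Vert_2$, yielding the stated outer-product correction term. No gaps; the bookkeeping of Jacobian and outer-product conventions matches the intended result.
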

%:::::::::::::::::::::::::::::::::::
% PROOF: 
%:::::::::::::::::::::::::::::::::::
\begin{proof}
Applying the rule for the derivative of a quotient of two differentiable functions, we get:
\begin{equation*}
\begin{aligned}
\frac{\partial}{\partial \bx} \Bigg(\frac{\bA^{{\T}} \bA \bx}{\Vert\bA\bx\Vert_2}\Bigg)&= \frac{\bA^{{\T}} \bA\Vert \bA \bx\Vert_2}{\Vert\bA \bx\Vert_2^2}-\frac{\bA^{{\T}} \bA \bx}{\Vert\bA\bx\Vert_2^2}\frac{\partial}{\partial \bx} \sqrt{\Vert\bA \bx\Vert_2^2} \\
%& \frac{A^{{\T}} A}{\VertA x\Vert_2}-\frac{A^{{\T}} A x}{\VertA x\Vert_2^2} \frac{\partial}{\partial x} \sqrt{(A x)^{{\T}} A x} \\
&= \frac{\bA^{{\T}} \bA}{\Vert\bA \bx\Vert_2}-\frac{\bA^{{\T}} \bA \bx}{\Vert\bA \bx\Vert_2^2}  \frac{1}{2\Vert\bA\bx\Vert_2} 2(\bA^{\T}\bA\bx)^{\T},
\end{aligned}
\end{equation*}
which completes the proof.
\end{proof}

%:::::::::::::::::::::::::::::::::::::
% LEMMA: 
%:::::::::::::::::::::::::::::::::::::
\begin{lemma}
\label{lemma:Pi_eigenvalues_all}
Let $\boldsymbol{u},\boldsymbol{v} \in \mathbb{R}^n$ with $n\geq 2$, $\bu\neq\bv$, $a,b,c\in\mathbb{R}\setminus \{0\}$, then the spectrum of the matrix:
\begin{equation*}
\boldsymbol{\Pi} = a \boldsymbol{I}_n + b \boldsymbol{u} \boldsymbol{u}^{\T} + c(\boldsymbol{u}\boldsymbol{v}^{\T} + \boldsymbol{v}\boldsymbol{u}^{\T})\in\mathbb{S}^n,
\end{equation*}
is
\begin{align*}
\lambda_1 &= \frac{1}{2}(2 a+b \Vert\bu\Vert_2^2+2 c \bv^{\T}\bu) + \frac{1}{2}\sqrt{\Delta}, \\
\lambda_2 &= a, \qquad \text{with multiplicity } n - 2, \\
\lambda_3 &= \frac{1}{2}(2 a+b \Vert\bu\Vert_2^2+2 c \bv^{\T}\bu) - \frac{1}{2}\sqrt{\Delta},
\end{align*}
and $\Delta =4c^2\Vert\bu\Vert_2^2\Vert\bv\Vert_2^2+b^2\Vert\bu\Vert_2^4+4bc\Vert\bu\Vert_2^2\bv^{\T}\bu$, if $n>2$ and $(\lambda_1,\lambda_3)$ if $n=2$.
%
%Guarda qua: $\lambda_1+\lambda_3=\trace(\bPi_{\mathcal{S}})=2 a+b \Vert\bu\Vert_2^2+2 c \bv^{\T}\bu$
\end{lemma}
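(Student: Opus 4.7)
The plan is to exploit the low-rank structure of the correction $\boldsymbol{\Pi} - a\boldsymbol{I}_n$ and reduce the eigenvalue problem to a $2\times 2$ computation. Write $\boldsymbol{\Pi} = a\boldsymbol{I}_n + \boldsymbol{M}$ with $\boldsymbol{M} = b\boldsymbol{u}\boldsymbol{u}^{\T} + c(\boldsymbol{u}\boldsymbol{v}^{\T} + \boldsymbol{v}\boldsymbol{u}^{\T})$. Every column of $\boldsymbol{M}$ lies in $W := \mathrm{span}\{\boldsymbol{u},\boldsymbol{v}\}$, hence $\boldsymbol{M}$ annihilates $W^{\perp}$. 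For any $\boldsymbol{w}\in W^{\perp}$ we therefore have $\boldsymbol{\Pi}\boldsymbol{w} = a\boldsymbol{w}$. Assuming $\boldsymbol{u},\boldsymbol{v}$ are linearly independent, $\dim W^{\perp} = n-2$, which gives the eigenvalue $a$ with multiplicity at least $n-2$; the degenerate parallel case is recovered by continuity and is consistent with the stated formulas since $\sqrt{\Delta}$ collapses appropriately. For $n=2$ the subspace $W^{\perp}$ is trivial and only the two remaining eigenvalues $\lambda_1,\lambda_3$ survive.

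Next I would restrict $\boldsymbol{\Pi}$ to the invariant subspace $W$. A direct computation yields
\begin{equation*}
\boldsymbol{M}\boldsymbol{u} = (b\Vert\boldsymbol{u}\Vert_2^2 + c\boldsymbol{v}^{\T}\boldsymbol{u})\boldsymbol{u} + c\Vert\boldsymbol{u}\Vert_2^2\,\boldsymbol{v},\qquad
\boldsymbol{M}\boldsymbol{v} = (b\boldsymbol{u}^{\T}\boldsymbol{v} + c\Vert\boldsymbol{v}\Vert_2^2)\boldsymbol{u} + c(\boldsymbol{u}^{\T}\boldsymbol{v})\boldsymbol{v},
\end{equation*}
so the matrix of $\boldsymbol{M}|_W$ in the (non-orthonormal) basis $\{\boldsymbol{u},\boldsymbol{v}\}$ is
\begin{equation*}
\widetilde{\boldsymbol{M}} = \begin{pmatrix} b\Vert\boldsymbol{u}\Vert_2^2 + c\boldsymbol{v}^{\T}\boldsymbol{u} & b\boldsymbol{u}^{\T}\boldsymbol{v} + c\Vert\boldsymbol{v}\Vert_2^2 \\ c\Vert\boldsymbol{u}\Vert_2^2 & c\boldsymbol{u}^{\T}\boldsymbol{v} \end{pmatrix}.
\end{equation*}
Although $\widetilde{\boldsymbol{M}}$ is not symmetric (since $\{\boldsymbol{u},\boldsymbol{v}\}$ is not orthonormal), its eigenvalues coincide with those of $\boldsymbol{M}|_W$ because similarity is invariant under change of basis.

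Finally, I would extract the eigenvalues of $\widetilde{\boldsymbol{M}}$ via its trace and determinant. One computes $\mathrm{tr}(\widetilde{\boldsymbol{M}}) = b\Vert\boldsymbol{u}\Vert_2^2 + 2c\boldsymbol{u}^{\T}\boldsymbol{v}$ and, after the terms $bc\Vert\boldsymbol{u}\Vert_2^2\boldsymbol{u}^{\T}\boldsymbol{v}$ cancel,
\begin{equation*}
\det(\widetilde{\boldsymbol{M}}) = c^2\bigl((\boldsymbol{u}^{\T}\boldsymbol{v})^2 - \Vert\boldsymbol{u}\Vert_2^2\Vert\boldsymbol{v}\Vert_2^2\bigr).
\end{equation*}
The quadratic formula then gives the two nontrivial eigenvalues of $\boldsymbol{M}|_W$, and adding $a$ produces $\lambda_1$ and $\lambda_3$. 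A short algebraic check confirms that the discriminant
\begin{equation*}
\mathrm{tr}(\widetilde{\boldsymbol{M}})^2 - 4\det(\widetilde{\boldsymbol{M}}) = b^2\Vert\boldsymbol{u}\Vert_2^4 + 4bc\Vert\boldsymbol{u}\Vert_2^2\boldsymbol{u}^{\T}\boldsymbol{v} + 4c^2\Vert\boldsymbol{u}\Vert_2^2\Vert\boldsymbol{v}\Vert_2^2
\end{equation*}
is exactly $\Delta$, matching the claimed expressions for $\lambda_1$ and $\lambda_3$.

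The only delicate point is Step 2: working in the non-orthonormal basis $\{\boldsymbol{u},\boldsymbol{v}\}$ produces a non-symmetric $\widetilde{\boldsymbol{M}}$, and one must remember that it is still similar to the symmetric restriction of $\boldsymbol{M}$ to $W$, so its eigenvalues are correct. An equivalent, slightly longer route would be to orthonormalise $\{\boldsymbol{u},\boldsymbol{v}\}$ by Gram–Schmidt and obtain a symmetric $2\times 2$ block directly; the resulting trace, determinant, and discriminant are identical, so the conclusion is the same.
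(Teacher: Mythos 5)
Your proof is correct and follows essentially the same route as the paper: you peel off the eigenvalue $a$ on $\mathrm{span}\{\boldsymbol{u},\boldsymbol{v}\}^{\perp}$ and reduce to a $2\times 2$ characteristic polynomial on the invariant subspace, with trace, determinant and discriminant matching the stated $\lambda_1,\lambda_3$ and $\Delta$. The only (harmless) difference is that you represent the restriction in the non-orthonormal basis $\{\boldsymbol{u},\boldsymbol{v}\}$, invoking similarity invariance, whereas the paper orthonormalises via Gram--Schmidt to get a symmetric $2\times 2$ block; both computations give the same spectrum.
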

%:::::::::::::::::::::::::::::::::::::
% PROOF: 
%:::::::::::::::::::::::::::::::::::::
\begin{proof}
Let $\boldsymbol{M}=b \boldsymbol{u} \boldsymbol{u}^{{\T}}+c\left(\boldsymbol{u} \boldsymbol{v}^{{\T}}+\boldsymbol{v} \boldsymbol{u}^{{\T}}\right)\in\mathbb{S}^n$, then $\mathrm{rank}(\boldsymbol{M}) \leq \mathrm{rank}\left(b \boldsymbol{u} \boldsymbol{u}^{{\T}}\right)+\mathrm{rank}\left(c\left(\boldsymbol{u} \boldsymbol{v}^{{\T}}+\boldsymbol{v} \boldsymbol{u}^{{\T}}\right)\right)$ $ \leq 3$. This means that the image of $\bM$ lies in $\mathcal{S} = \text{span}\{\boldsymbol{u}, \boldsymbol{v}\} \subset \mathbb{R}^n$, which is a two-dimensional subspace. Moreover, adding $a\bI_n$ simply shifts the eigenvalues and does not affect the rank, therefore $\mathrm{rank}(\bPi)\leq 3$.
For any $ \boldsymbol{w} \in \mathcal{S}^\perp$, we have:
\begin{equation*}
\boldsymbol{\Pi} \boldsymbol{w} = a \boldsymbol{w},
\end{equation*}
from which we find that $\lambda_2 = a$ is an eigenvalue of $\boldsymbol{\Pi}$, 
with multiplicity equal to $n-2$. We now compute the restriction of $\boldsymbol{\Pi}$ to the 2-dimensional subspace $\mathcal{S}$. Let:
\begin{equation*}
\alpha = \frac{\boldsymbol{u}^{\T} \boldsymbol{v}}{\Vert\bu\Vert_2}, \qquad \beta =\sqrt{\Vert\bv\Vert_2^2 - \alpha^2},
\end{equation*}
we define an orthonormal basis $\{\be_1, \be_2\}$ for $\mathcal{S}$ as $\be_1 = \frac{\bu}{\Vert\bu\Vert_2} $ and decompose $\bv$ as $\bv = \alpha\be_1 + \beta \be_2$, where $\be_2 = \frac{\bv - \alpha \be_1}{\beta}$.
We express $\boldsymbol{\Pi}$ on the basis $ \{\be_1, \be_2\} $, and we rewrite $ \boldsymbol{\Pi} $ as:
\begin{equation*}
\boldsymbol{\Pi} =a\boldsymbol{I}_n+b\Vert \bu\Vert_2^2\be_1\be_1^{\T} +c\Vert \bu\Vert_2 ( \be_1 \bv^{\T} + \bv \be_1^{\T}),
\end{equation*}
and substituting the expression for $\bv = \alpha \be_1 + \beta \be_2 $, we get:
\begin{equation*}
\begin{aligned}
\boldsymbol{\Pi} &= a \boldsymbol{I}_n+b\Vert \bu\Vert_2^2\be_1\be_1^{\T}+c\Vert \bu\Vert_2 \left[\be_1 (\alpha \be_1^{\T} + \beta \be_2^{\T}) + (\alpha \be_1 + \beta \be_2) \be_1^{\T} \right]\\
&= a\boldsymbol{I}_n+br^2\be_1\be_1^{\T}+cr \left[ 2\alpha \be_1 \be_1^{\T} + \beta (\be_1 \be_2^{\T} + \be_2 \be_1^{\T}) \right],
\end{aligned}
\end{equation*}
where $r = \Vert\boldsymbol{u}\Vert_2$.
Therefore, on the orthonormal basis $\{\be_1, \be_2\}$, the restriction of $ \boldsymbol{\Pi} $ to $ \mathcal{S}$ has the matrix:
\[
\boldsymbol{\Pi}_{\mathcal{S}} =
\begin{bmatrix}
a+br^2+2cr\alpha & cr\beta \\
cr\beta & a
\end{bmatrix}.
\]
To compute its eigenvalues of $\boldsymbol{\Pi}_{\mathcal{S}}$, we solve the characteristic polynomial:
$$
\lambda^2-\mathrm{trace}\left(\boldsymbol{\Pi}_{\mathcal{S}}\right) \lambda+\vert\boldsymbol{\Pi}_{\mathcal{S}}\vert=0,
$$
where
$$
\begin{aligned}
\mathrm{trace}\left(\boldsymbol{\Pi}_{\mathcal{S}}\right)&=\left(a+b r^2+2 c r \alpha\right)+a=2 a+b r^2+2 c r \alpha\\
\vert\boldsymbol{\Pi}_{\mathcal{S}}\vert&=\left(a+b r^2+2 c r \alpha\right)a-(c r \beta)^2=a\left(a+b r^2+2 c r \alpha\right)-c^2 r^2 \beta^2,
\end{aligned}
$$
and therefore
$$
\lambda_{1,2}=\frac{1}{2}\left[\mathrm{trace}\left(\boldsymbol{\Pi}_{\mathcal{S}}\right) \pm \sqrt{\mathrm{trace}\left(\boldsymbol{\Pi}_{\mathcal{S}}\right)^2-4 \vert\boldsymbol{\Pi}_{\mathcal{S}}\vert}\right].
$$
The complete set of eigenvalues of $\boldsymbol{\Pi}$ is:
\begin{align*}
\lambda_1 &= \frac{1}{2}(2 a+b r^2+2 c r \alpha) + \frac{1}{2}\sqrt{\Delta}, \\
\lambda_2 &= a, \qquad \text{with multiplicity } n - 2, \\
\lambda_3 &= \frac{1}{2}(2 a+b r^2+2 c r \alpha) - \frac{1}{2}\sqrt{\Delta},
\end{align*}
where $\Delta=\mathrm{trace}\left(\boldsymbol{\Pi}_{\mathcal{S}}\right)^2-4 \vert\boldsymbol{\Pi}_{\mathcal{S}}\vert=4c^2r^2(\alpha^2+\beta^2)+b^2r^4+4bcr^3\alpha$. Substituting for the expression of $r$, $\alpha$ and $\beta$ completes the proof. Concerning the case $n=2$, the complete set of eigenvalues is $(\lambda_1, \lambda_3)$ since $\mathcal{S}^\perp = \emptyset$.
\end{proof}

\begin{lemma}
\label{lemma:Delta_sign}
Let $\boldsymbol{u},\boldsymbol{v} \in \mathbb{R}^n$ with $n\geq 2$, $\gamma, w, r \in \mathbb{R}_+$, then the quantity 
$$
\Delta=\frac{w^2}{\Vert \boldsymbol{u} \Vert_2^6} \left(
-3\gamma^2(\boldsymbol{u}^{\T} \boldsymbol{v})^2 
+ 2\gamma \Vert \boldsymbol{u} \Vert_2^2 (\boldsymbol{u}^{\T} \boldsymbol{v}) 
+ 4\gamma^2 \Vert \boldsymbol{u} \Vert_2^2 \Vert \boldsymbol{v} \Vert_2^2 
+ \Vert \boldsymbol{u} \Vert_2^4 \right)=\frac{w^2}{\Vert \boldsymbol{u} \Vert_2^6}\mathcal{Q}(\boldsymbol{u}^{\T}\boldsymbol{v}),
$$ 
is positive for all real vectors $\boldsymbol{u}, \boldsymbol{v}$, and it is strictly positive unless $\rho_{u,v} = -1$ and $\gamma \Vert \boldsymbol{v} \Vert_2 = \Vert \boldsymbol{u} \Vert_2$. 
\end{lemma}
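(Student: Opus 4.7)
The plan is to factor out the nonnegative prefactor $w^2/\Vert\bu\Vert_2^6$ and reduce the claim to showing that the polynomial $\mathcal{Q}(\bu^{\T}\bv)$ is nonnegative, with the stated equality case. Since $w\in\mathbb{R}_+$ and $\bu\neq\boldsymbol{0}$ is implicit (otherwise the prefactor is undefined), this prefactor is strictly positive, so all sign questions reduce to $\mathcal{Q}$.

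The key observation is that, viewed as a function of the scalar $t=\bu^{\T}\bv$, the bracketed expression
\[
\mathcal{Q}(t) \;=\; -3\gamma^2 t^2 + 2\gamma\Vert\bu\Vert_2^2\,t + 4\gamma^2\Vert\bu\Vert_2^2\Vert\bv\Vert_2^2 + \Vert\bu\Vert_2^4
\]
is a quadratic with negative leading coefficient $-3\gamma^2<0$, hence concave. By Cauchy--Schwarz, $t$ lies in the closed interval $[-\Vert\bu\Vert_2\Vert\bv\Vert_2,\,\Vert\bu\Vert_2\Vert\bv\Vert_2]$, and a concave function on a compact interval attains its minimum at one of the endpoints. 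It therefore suffices to evaluate $\mathcal{Q}$ at the two boundary values $t=\pm\Vert\bu\Vert_2\Vert\bv\Vert_2$.

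Carrying out the algebra, the $-3\gamma^2 t^2$ and $4\gamma^2\Vert\bu\Vert_2^2\Vert\bv\Vert_2^2$ terms combine, and together with the $2\gamma\Vert\bu\Vert_2^2 t$ and $\Vert\bu\Vert_2^4$ terms form a perfect square, yielding
\[
\mathcal{Q}\bigl(\pm\Vert\bu\Vert_2\Vert\bv\Vert_2\bigr) \;=\; \Vert\bu\Vert_2^2\bigl(\Vert\bu\Vert_2\pm\gamma\Vert\bv\Vert_2\bigr)^2.
\]
Both values are $\geq 0$. Since $\gamma,\Vert\bu\Vert_2,\Vert\bv\Vert_2\geq 0$, the smaller of the two is attained at $t=-\Vert\bu\Vert_2\Vert\bv\Vert_2$ and equals $\Vert\bu\Vert_2^2(\Vert\bu\Vert_2-\gamma\Vert\bv\Vert_2)^2$, giving the uniform lower bound $\mathcal{Q}(\bu^{\T}\bv)\geq\Vert\bu\Vert_2^2(\Vert\bu\Vert_2-\gamma\Vert\bv\Vert_2)^2\geq 0$.

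For the equality characterization, the first inequality is tight only when the concave quadratic actually attains its endpoint minimum, i.e.\ when $\bu^{\T}\bv=-\Vert\bu\Vert_2\Vert\bv\Vert_2$; by the equality case of Cauchy--Schwarz this is precisely $\rho_{u,v}=-1$. The second inequality collapses to equality exactly when $\gamma\Vert\bv\Vert_2=\Vert\bu\Vert_2$. These two conditions together give the exceptional case stated in the lemma, while failure of either forces $\Delta>0$. There is no real obstacle in the argument; the only non-routine step is spotting the perfect-square factorization at the Cauchy--Schwarz endpoints, which is what turns a crude concavity bound into the sharp one needed for the equality analysis.
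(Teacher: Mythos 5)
Your proof is correct and follows essentially the same route as the paper: both arguments use Cauchy--Schwarz to confine $t=\bu^{\T}\bv$ to $[-\Vert\bu\Vert_2\Vert\bv\Vert_2,\Vert\bu\Vert_2\Vert\bv\Vert_2]$ and arrive at the same perfect-square lower bound $\Vert\bu\Vert_2^2(\gamma\Vert\bv\Vert_2-\Vert\bu\Vert_2)^2$, the paper via two successive worst-case termwise bounds and you via strict concavity of $\mathcal{Q}$ plus endpoint evaluation. Your concavity framing is a marginally cleaner justification of the equality case ($\rho_{u,v}=-1$ and $\gamma\Vert\bv\Vert_2=\Vert\bu\Vert_2$), which the paper leaves largely implicit, but the substance is identical.
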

%:::::::::::::::::::::::::::::::::::::
% PROOF: SIGN OF DELTA
%:::::::::::::::::::::::::::::::::::::
\begin{proof}
Let $x = \boldsymbol{u}^{\T} \boldsymbol{v}\in\mathbb{R}$ and define $\mathcal{Q}(x) = -3\gamma^2 x^2 + 2\gamma \Vert \boldsymbol{u} \Vert_2^2 x + 4\gamma^2 \Vert \boldsymbol{u} \Vert_2^2 \Vert \boldsymbol{v} \Vert_2^2 + \Vert \boldsymbol{u} \Vert_2^4$. By the Cauchy-Schwarz inequality, $|x| = |\boldsymbol{u}^{\T} \boldsymbol{v}| \leq \Vert \boldsymbol{u} \Vert_2 \Vert \boldsymbol{v} \Vert_2$, then
\begin{align*}
\mathcal{Q}(x) &\geq -3\gamma^2 \Vert \boldsymbol{u} \Vert_2^2 \Vert \boldsymbol{v}_g \Vert_2^2 + 2\gamma \Vert \boldsymbol{u} \Vert_2^2 x + 4\gamma^2 \Vert \boldsymbol{u} \Vert_2^2 \Vert \boldsymbol{v} \Vert_2^2 + \Vert \boldsymbol{u} \Vert_2^4 \\
&= \gamma^2 \Vert \boldsymbol{u} \Vert_2^2 \Vert \boldsymbol{v} \Vert_2^2 + 2\gamma \Vert \boldsymbol{u} \Vert_2^2 x + \Vert \boldsymbol{u} \Vert_2^4.
\end{align*}
Since $-\Vert \boldsymbol{u} \Vert_2 \Vert \boldsymbol{v} \Vert_2\leq x \leq \Vert \boldsymbol{u} \Vert_2 \Vert \boldsymbol{v} \Vert_2$, even in the worst-case $x = -\Vert \boldsymbol{u} \Vert_2 \Vert \boldsymbol{v} \Vert_2$ or equivalently $\rho_{u,v} = -1$, we have:
\begin{align*}
\mathcal{Q}(\bu^{\T}\bv) &\geq \gamma^2 \Vert \boldsymbol{u} \Vert_2^2 \Vert \boldsymbol{v} \Vert_2^2 - 2\gamma \Vert \boldsymbol{u} \Vert_2^3 \Vert \boldsymbol{v} \Vert_2 + \Vert \boldsymbol{u} \Vert_2^4 \\
&= \Vert \boldsymbol{u} \Vert_2^2 \left( \gamma^2 \Vert \boldsymbol{v} \Vert_2^2 - 2\gamma \Vert \boldsymbol{u} \Vert_2 \Vert \boldsymbol{v} \Vert_2 + \Vert \boldsymbol{u} \Vert_2^2 \right) \\
&= \Vert \boldsymbol{u} \Vert_2^2 \left( \gamma \Vert \boldsymbol{v} \Vert_2 - \Vert \boldsymbol{u} \Vert_2 \right)^2 \geq 0,
\end{align*}
which completes the proof.
\end{proof}
%:::::::::::::::::::::::::::::::::::
% PROPOSITION: sufficient condition for the trace of product of two matrices to be positive
%:::::::::::::::::::::::::::::::::::
\begin{lemma}%[Sufficient condition for positive trace of $\bA\bB$]
\label{lemma:traceAB_sufficient_condition}
Let $\bA \in \mathbb{R}^{n \times n}$ be a symmetric indefinite matrix, and let $\bB \in \mathbb{R}^{n \times n}$ be symmetric and positive semi-definite. Then $\trace(\bA\bB) > 0$, if
\begin{equation*}
\sum_{i\,:\,\lambda_i > 0} \lambda_i \bv_i^{\T} \bB \bv_i > \sum_{i\,:\,\lambda_i < 0} |\lambda_i| \bv_i^{\T} \bB \bv_i,
\end{equation*}
where $\{\bv_i\}$ is an orthonormal basis of eigenvectors and $\lambda_i \in \mathbb{R}$ are the corresponding eigenvalues of $\bA$.
\end{lemma}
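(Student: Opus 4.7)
The plan is to reduce $\trace(\bA\bB)$ to a weighted sum of nonnegative quadratic forms $\bv_i^{\T}\bB\bv_i$ via the spectral decomposition of $\bA$, and then compare the positive and negative contributions directly.

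First I would invoke the spectral theorem for $\bA$: since $\bA$ is symmetric, write $\bA = \sum_{i=1}^n \lambda_i \bv_i \bv_i^{\T}$ with $\{\bv_i\}$ an orthonormal basis of eigenvectors. Substituting into the trace and using linearity together with the cyclic property yields $\trace(\bA\bB) = \sum_{i=1}^n \lambda_i \trace(\bv_i \bv_i^{\T} \bB) = \sum_{i=1}^n \lambda_i \bv_i^{\T}\bB\bv_i$. This is the identity that converts the matrix-level inequality into a scalar comparison indexed by the spectrum of $\bA$.

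Next I would exploit the hypothesis that $\bB$ is symmetric positive semi-definite: this guarantees $\bv_i^{\T}\bB\bv_i \geq 0$ for every $i$, so each term $\lambda_i\bv_i^{\T}\bB\bv_i$ carries the sign of $\lambda_i$. I would then split the sum according to the sign of the eigenvalues (zero eigenvalues contribute nothing) to obtain
\begin{equation*}
\trace(\bA\bB) = \sum_{i:\lambda_i>0} \lambda_i\bv_i^{\T}\bB\bv_i - \sum_{i:\lambda_i<0} |\lambda_i|\bv_i^{\T}\bB\bv_i.
\end{equation*}
The stated sufficient condition is exactly the statement that the first sum exceeds the second, so positivity of $\trace(\bA\bB)$ follows immediately.

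There is no real obstacle here; the whole argument is a one-line consequence of the spectral decomposition plus the sign separation. The only subtle point worth emphasizing is the use of positive semi-definiteness of $\bB$ to ensure $\bv_i^{\T}\bB\bv_i \geq 0$, which is what makes the absolute value $|\lambda_i|$ on the right-hand side meaningful (so that both sides of the hypothesised inequality are nonnegative quantities that can be compared term by term).
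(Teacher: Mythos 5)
Your proof is correct and follows essentially the same route as the paper: spectral decomposition of $\bA$, the identity $\trace(\bA\bB)=\sum_i \lambda_i \bv_i^{\T}\bB\bv_i$, and a split of the sum by the sign of the eigenvalues. The only addition is your explicit remark that positive semi-definiteness of $\bB$ makes each $\bv_i^{\T}\bB\bv_i$ nonnegative, which the paper leaves implicit.
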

%:::::::::::::::::::::::::::::::::::
% PROOR:
%:::::::::::::::::::::::::::::::::::
\begin{proof}
Since $\bA$ is symmetric, it admits the spectral decomposition $\bA = \sum_{i=1}^n \lambda_i \bv_i \bv_i^{{\T}}$ with orthonormal eigenvectors $\{\bv_i\}$. Then,
\[
\bA\bB = \left( \sum_{i=1}^n \lambda_i \bv_i \bv_i^{{\T}} \right) \bB = \sum_{i=1}^n \lambda_i \bv_i (\bv_i^{\T} \bB),
\]
and taking the trace:
\[
\trace(\bA\bB) = \sum_{i=1}^n \lambda_i \trace(\bv_i \bv_i^{{\T}} \bB) = \sum_{i=1}^n \lambda_i \bv_i^{{\T}} \bB \bv_i.
\]
Now, separate the sum according to the sign of $\lambda_i$:
\[
\trace(\bA\bB) = \sum_{i\,:\,\lambda_i > 0} \lambda_i \bv_i^{{\T}} \bB \bv_i + \sum_{i\,:\,\lambda_i < 0} \lambda_i \bv_i^{{\T}} \bB \bv_i.
\]
Note that for $\lambda_i < 0$, we write $\lambda_i = -|\lambda_i|$, so:
\[
\trace(\bA\bB) = \sum_{i\,:\,\lambda_i > 0} \lambda_i \bv_i^{{\T}} \bB \bv_i - \sum_{i\,:\,\lambda_i < 0} |\lambda_i| \bv_i^{{\T}} \bB \bv_i,
\]
which is greater than zero, if and only if
\[
\sum_{i\,:\,\lambda_i > 0} \lambda_i \bv_i^{{\T}} \bB \bv_i > \sum_{i\,:\,\lambda_i < 0} |\lambda_i| \bv_i^{{\T}} \bB \bv_i,
\]
as required.
\end{proof}

\section{Empirical Assessment}
\label{sec:appendix_empirical_p>n}

We have investigated the $n<p$ regime separately. In this case a fundamental question arises: how should the weights of the adaptive procedures be defined when the least-squares estimator does not exist? To the best of our knowledge, the literature does not provide a generally accepted construction of the Adaptive Lasso or Adaptive Group Lasso in the $n<p$ regime, precisely because their definition relies on preliminary least-squares estimates.
As a pragmatic solution, we consider ridge regression estimates with minimal penalization as a surrogate preliminary estimator. We emphasize that this choice is not standard nor theoretically supported; it is simply a reasonable proposal in an otherwise ill-posed setting. We then evaluate empirically the performance of our proposed degrees of freedom estimators when plugging-in these quasi–least-squares weights.
Our simulations indicate that the analytical expressions derived under the $n>p$ framework, when combined with these ridge-based weights, yield degrees of freedom estimates that are close to the true values (computed numerically under a repeated-sampling principle). Although the adaptive procedure itself lacks a canonical definition in this regime, the empirical evidence suggests that our approach still substantially improves upon the common heuristic of approximating the degrees of freedom by the active set size.

\begin{figure}[h]
    \centering
    \includegraphics[width=0.75\linewidth]{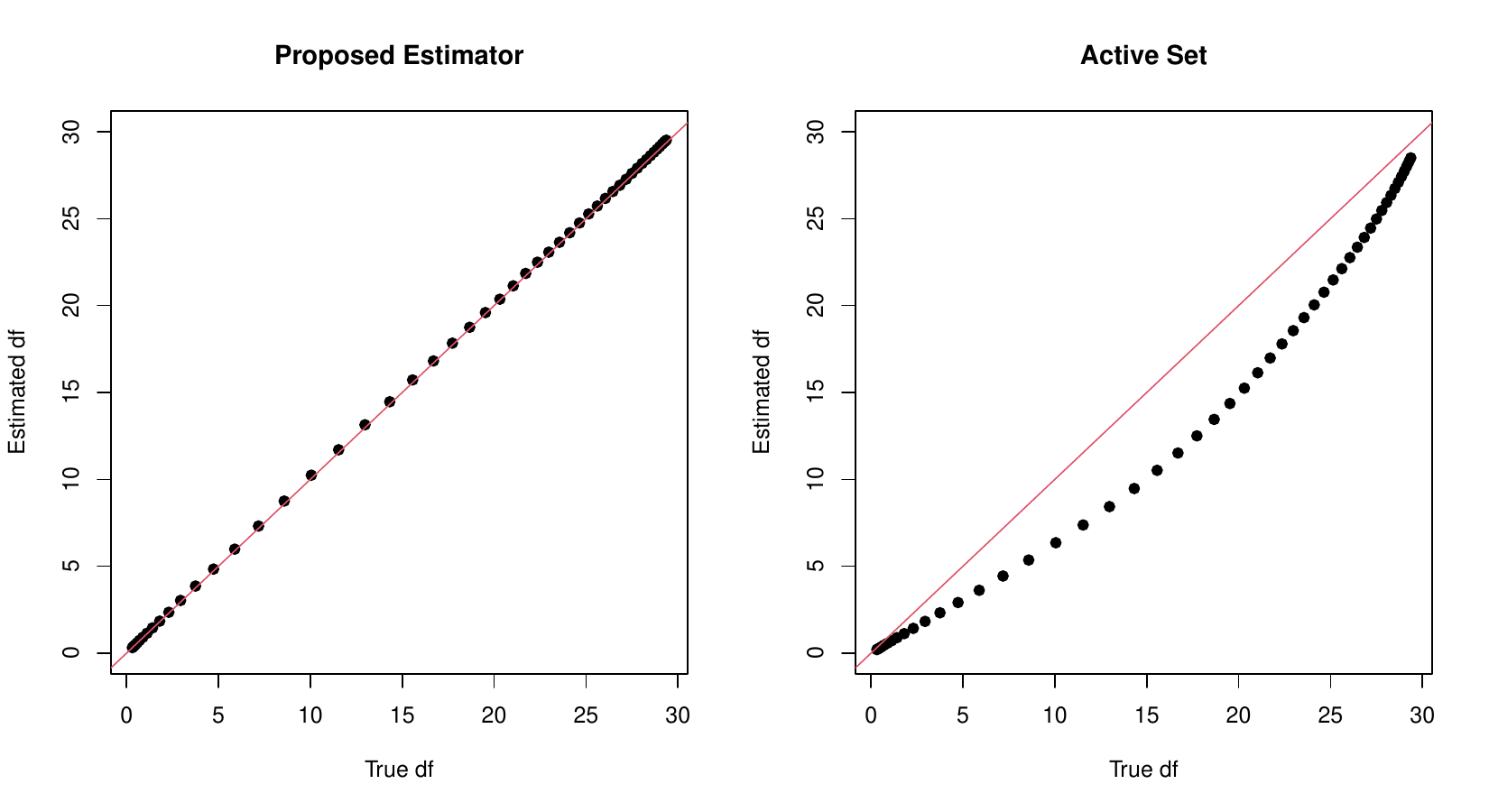}
    \caption{Estimated degrees of freedom versus degrees of freedom computed with the general covariance formula \eqref{eq:df_general_expression} for Adaptive Lasso. Left: estimator based on manuscript theorem. Right: estimator based on active set size.}
    \label{fig:p_greater_n}
\end{figure}

%::::::::::::::::::::::::::::::::::::::::::
%\clearpage
%\bibliographystyle{apalike} 
%\bibliography{refs}       

\end{document}